\newcommand{\bb}{\mathrm{\bf b}}
\newcommand{\bff}{\mathrm{\bf f}}
\newcommand{\bx}{\mathrm{\bf x}}
\newcommand{\by}{\mathrm{\bf y}}
\newcommand{\bt}{\mathrm{\bf t}}
\newcommand{\bA}{\mathrm{\bf A}}
\newcommand{\ba}{\mathrm{\bf a}}
\newcommand{\bu}{\mathrm{\bf u}}
\newcommand{\bv}{\mathrm{\bf v}}
\newcommand{\be}{\mathrm{\bf e}}
\newcommand{\bB}{\mathrm{\bf B}}
\newcommand{\bZ}{\mathrm{\bf Z}}
\newcommand{\bz}{\mathrm{\bf z}}
\newcommand{\bD}{\mathrm{\bf D}}
\newcommand{\bF}{\mathrm{\bf F}}
\newcommand{\bK}{\mathrm{\bf K}}
\newcommand{\bW}{\mathrm{\bf W}}
\newcommand{\bG}{\mathrm{\bf G}}
\newcommand{\bM}{\mathrm{\bf M}}
\newcommand{\bh}{\mathrm{\bf h}}
\newcommand{\bH}{\mathrm{\bf H}}
\newcommand{\bI}{\mathrm{\bf I}}
\newcommand{\bJ}{\mathrm{\bf J}}
\newcommand{\bL}{\mathrm{\bf L}}
\newcommand{\bg}{\mathrm{\bf g}}
\newcommand{\bV}{\mathrm{\bf V}}
\newcommand{\br}{\mathrm{\bf r}}
\newcommand{\bR}{\mathrm{\bf R}}
\newcommand{\bT}{\mathrm{\bf T}}
\newcommand{\bU}{\mathrm{\bf U}}
\newcommand{\bX}{\mathrm{\bf X}}
\newcommand{\bY}{\mathrm{\bf Y}}
\newcommand{\balpha}{\mbox{\boldmath $\alpha$}}
\newcommand{\bxi}{\mbox{\boldmath $\xi$}}
\newcommand{\bmu}{\mbox{\boldmath $\mu$}}
\newcommand{\bLambda}{\mbox{\boldmath $\Lambda$}}
\newcommand{\bgamma}{\mbox{\boldmath $\gamma$}}
\newcommand{\bGamma}{\mbox{\boldmath $\Gamma$}}
\newcommand{\bSigma}{\mbox{\boldmath $\Sigma$}}
\newcommand{\bOmega}{\mbox{\boldmath $\Omega$}}
\newcommand{\cov}{\mathrm{cov}}
\newcommand{\tr}{\mathrm{tr}}
\newcommand{\diag}{\mathrm{diag}}
\newcommand{\argmin}{\mathrm{argmin}}
\newcommand{\bw}{\mbox{\bf w}}
\def \diag {\mbox{diag}}
\def\rank{\mbox{rank}}
\def \bcalX {{\bf \mathcal{X}}}
\def \bcalL {{\bf \mathcal{L}}}
\def \calS {\mathcal{S}}
\newcommand{\bTheta}{\mbox{\boldmath $\Theta$}}
\newcommand{\beq}{\begin{equation}}
\newcommand{\eeq}{\end{equation}}
\newcommand{\beqn}{\begin{eqnarray}}
\newcommand{\eeqn}{\end{eqnarray}}
\newcommand{\beqnn}{\begin{eqnarray*}}
\newcommand{\eeqnn}{\end{eqnarray*}}
\DeclareMathOperator{\Cov}{Cov}
\DeclareMathOperator{\sgn}{sgn}
\DeclareMathOperator{\card}{card}
\numberwithin{equation}{section}
\theoremstyle{plain}
\newtheorem{thm}{Theorem}[section]
\newtheorem{lem}{Lemma}[section]
\newtheorem{assum}{Assumption}[section]
\theoremstyle{definition}
\def\@biblabel#1{\hspace*{-\labelsep}}
\begin{document}

 \title{Large Covariance Estimation through Elliptical Factor Models}
\author{Jianqing Fan\thanks{The research was partially supported by NSF grants
DMS-1206464 and DMS-1406266 and NIH grants R01-GM072611-10
and NIH R01GM100474-04.}, \, Han Liu\thanks{The research was supported by NSF CAREER Award DMS1454377, NSF IIS1408910, NSF IIS1332109, NIH R01MH102339, NIH R01GM083084, and NIH R01HG06841.} \, and Weichen Wang\thanks{Address: Department of ORFE, Sherrerd Hall, Princeton University, Princeton, NJ 08544, USA, e-mail: \textit{jqfan@princeton.edu}, \textit{hanliu@princeton.edu}, \textit{weichenw@princeton.edu}. }
\medskip\\{\normalsize Department of Operations Research and Financial Engineering,  Princeton University}}
 
\date{}

\maketitle

\sloppy

\onehalfspacing
 
\begin{abstract}
We proposed  a general Principal Orthogonal complEment Thresholding (POET) framework for large-scale covariance matrix estimation based on an approximate factor model. A set of high level sufficient conditions for the procedure to achieve optimal rates of convergence under different matrix norms were brought up to better understand how POET works. Such a framework allows us to recover the results for sub-Gaussian in a more transparent way that only depends on the concentration properties of the sample covariance matrix. As a new theoretical contribution,  for the first time, such a framework allows us to exploit conditional sparsity covariance structure for the heavy-tailed data.  In particular, for  the elliptical data, we proposed a robust estimator based on marginal and multivariate Kendall's tau to satisfy these conditions. In addition, conditional graphical model was also studied under the same framework. The technical tools developed in this paper are of general interest to high dimensional principal component analysis. Thorough numerical results were also provided to back up the developed theory.
\end{abstract}

\textbf{Keywords:} principal component analysis; approximate factor model; sub-Gaussian family; elliptical distribution; conditional graphical model; robust estimation.

\pagebreak%
\doublespacing

\onehalfspacing

\section{Introduction}

This paper considers large factor model based covariance matrix estimation for heavy-tailed data. Factor model is a powerful tool for dimension reduction and latent factor extraction, which gained its popularity in various applications from finance to biology. When applied to covariance matrix estimation, it assumes a conditional sparse covariance structure, i.e., conditioning on the low dimensional spiked factors, the covariance matrix of the idiosyncratic errors is sparse. To be specific, consider the approximate factor model in \cite{BaiNg02}:
\beq \label{eq:factor}
y_{it} = \bb_i' \bff_t + u_{it} \,,
\eeq
where $y_{it}$ is the observed data for the $i$th ($i=1,\dots,p$) dimension at time $t=1,\dots,n$; $\bff_t$ is an unknown $m$-dimensional  vector of common factors, and $\bb_i$ is the factor loading for the $i$th variable; $u_{it}$ is the idiosyncratic error, uncorrelated with the common factors. Previous works are limited by only considering Gaussian or sub-Gaussian factors and noises.  In this paper we aim to extend this limitation and consider heavy-tailed distributions. More specifically, we will consider the case where factors and noises are elliptically distributed. Under this broader class of heavy tailed distributions, we aim to understand how to estimate covariance matrix accurately.

Covariance matrix estimation has been pioneered by  \cite{BicLev08a,BicLev08b} and \cite{FanFanLv08}. After that, substantial amount of work has focused on the inference of high-dimensional covariance matrices under unconditional sparsity \citep{CaiLiu11,CaiRenZho13,%CaiYua12,
CaiZhaZho10,%CaiZho12, 
Kar08, LamFan09, RavWaiRasYu11} or conditional sparsity \citep{AmiWai09,BerRig13a, BerRig13b, BirJohNadPau13, CaiMaWu13, CaiMaWu14,  JohLu09, LevVer12,RotLevZhu09, Ma13, SheSheMar11, PauJoh12,VuLei12, ZouHasTib06}. This research  area  is very active, and as a result, this list of references is illustrative rather than comprehensive. To emphasize, Fan and his collaborators proposed to use factor model or conditional sparsity structure for covariance matrix estimation \citep{FanFanLv08,FanLiaMin11,FanLiaMin13, FanLiaWan14}. The model encompasses the situation of  unconditional sparse covariance by setting the number of factors to zero. Thus it is more general and  realistic given the fact that the observed data are usually driven by some common factors.

Another line of research on robust covariance estimation also receives significant attention from the literature. The idea of robust estimation dates back to \cite{Hub64} and had been extended in regression problems with different types of loss function; see for example \cite{FanLiWan14} and \cite{Cat12}. Recently, \cite{HanLiu13a, HanLiu14} introduce robust covariance matrix estimation to high-dimensional elliptical and transelliptical (or elliptical copula) distribution family. In those papers, they proposed a robust procedure using the marginal Kendall's tau statistics and proved its optimality for covariance matrix estimation under elliptical distributions. In addition, multivariate Kendall's tau was also considered by \cite{HanLiu13b} to estimate eigenspaces of covariance matrices in  high dimensions. Those methods, applied to PCA or sparse PCA, can be potentially useful for dealing with factor models with heavy-tailed factors and noises. The goal of the current paper is
to develop a unified theory that allows us to extend these robust rank-based covariance estimation procedures to handle heavy-tailed data with conditional covariance sparsity.

\subsection{Background on approximate factor model}

To illustrate how to use factor model as a dimension reduction tool for covariance matrix estimation, let us write model (\ref{eq:factor}) in its vector form:
\beq\label{eq::model1}
\by_{t} = \bB \bff_t + \bu_{t}\,,
\eeq
where $\by_t$ contains all observed individuals at time $t = 1,\dots, n$ and $\bB = (\bb_1, \dots, \bb_p)'$ is the factor loading matrix. The matrix form of (\ref{eq:factor}) is
\beq
\bY = \bB \bF' +\bU\,,
\eeq
where $\bY_{p\times n}$, $\bB_{p\times m}$, $\bF_{n \times m}$, $\bU_{p\times n}$ are  matrices from  observed data, factor loadings, factors, and errors with $\bY = (\by_1, \dots, \by_n)$, $\bF = (\bff_1, \dots, \bff_n)'$ and $\bU = (\bu_1, \dots, \bu_n)$. Here we consider the case where the dimension $p$ is larger than sample size $n$ and for simplicity we assume $n$ samples are independent and identically distributed in the sequel (An extension to the dependent setting is straightforward, but tedious.). We assume factor matrix $\bF$ is observable. To make the model (\ref{eq:factor}) identifiable, we impose the following  conditions  as in \cite{BaiNg13} and \cite{BaiLi12}:
\beq\label{eq::identifibility}
\cov(\bff_t) = \bI \text{ and } \bB'\bB \text{ is diagonal}\,.
\eeq
The conditions in \eqref{eq::identifibility} are common in the factor model literature. But we will point out in Section \ref{sec2} that these conditions are sufficient only for asymptotic identifiability up to an error of order $O(1/\sqrt{p})$ rather than exact identifiability.
Under the conditions in \eqref{eq::identifibility}, the covariance matrix of $\by_t$ is
\beq \label{eq:matrixDecomp}
\bSigma = \cov(\by_t) = \bB\bB' + \bSigma_u\,,
\eeq
where $\bSigma_u$ is the covariance matrix of the idiosyncratic error $\bu_t$.

\subsection{Major contributions of this paper}

Under  model \eqref{eq::model1},  \cite{FanLiaMin13} proposed the Principal Orthogonal complEment Thresholding (POET) estimator for $\bSigma$ under the assumption that  factors and noises are sub-Gaussian.
By imposing the condition that the leading eigenvalues of $\bSigma$ diverges at the rate of order $p$ from their pervasiveness condition, \cite{FanLiaMin13} proved the consistency of the POET estimator and showed its rates of convergence.
However, their proofs are mathematically involved and do not transparently explain why POET works in estimating large covariance matrices.
It has been pointed out by \cite{FanWan15} how pervasive factors help in estimating the low-rank part $\bB\bB'$ in (\ref{eq:matrixDecomp}). The idea is further explored in this paper. A surprising result is that the diverging signal of spiked eigenvalues excludes the necessity of the sparse principal component assumption in sparse PCA literature, comparing with for example \cite{CaiMaWu13}.

The main contributions of the paper are two folds. On one hand, we summarize a unified generic framework in Section \ref{sec2.1} for applying POET to various potentially heavy-tailed distributions. The key Theorem \ref{suff} provides a set of high level interface conditions (\ref{suffCond}) explaining how to design a POET covariance estimator according to factor and error distributions. POET regularization needs the following three components: initial pilot estimators for covariance matrix $\bSigma$, its leading eigenvalues $\bLambda = \diag(\lambda_1, \dots, \lambda_m)$ and their corresponding leading eigenvectors $\bGamma_{p \times m} = (\bxi_1, \dots, \bxi_m)$.  With these compoents, a generic POET estimator can be constructed.
We will show that such a POET procedure attains desired rates of convergence as long as
\begin{equation} \label{suffCond}
\begin{aligned}
&\|\hat\bSigma - \bSigma\|_{\max} = O_P(\sqrt{\log p/n})\,, \\
&\|(\hat\bLambda - \bLambda)\bLambda^{-1}\|_{\max} = O_P(\sqrt{\log p/n})\,, \\
&\|\hat\bGamma - \bGamma\|_{\max} = O_P(\sqrt{\log p/(np)}) \,.
\end{aligned}
\end{equation}
These conditions are relatively easy to verify, as they involve only the componentwise maximums.
Through those sufficient conditions, we are able to separate the deterministic analysis of the estimation procedure and the probabilistic guarantee of the design of initial estimators.

For two specific factor and error distributions, we provide methods to construct those initial estimators. For sub-Gaussian, it is natural to employ sample covariance matrix and its eigenvalues and eigenvectors as the estimates for $\bSigma$, $\bLambda$ and $\bGamma$. We show the natural idea indeed achieves the above conditions for sub-Gaussian data, which gives an explanation why POET in previous literature works. However, for elliptical distributions, constructing estimators with the desired rates are highly nontrivial. We use the marginal Kendall's tau to obtain $\hat\bSigma$ and $\hat\bLambda$ while a different method multivariate Kendall's tau is applied to construct $\hat\bGamma$. Notice an interesting fact that the generic POET procedure allows separately estimating the eigenvectors and eigenvalues using different methods. Robust estimators are constructed for the first time for elliptical factor models.

\subsection{Notations} Here are some useful notations. If $\bM$ is a general matrix, we denote its matrix entry-wise maximum value as $\|\bM\|_{\max} = \max_{i,j}|M_{i,j}|$ and define the quantities $\|\bM\|_2 = \lambda_{\max}^{1/2}(\bM'\bM)$ (or $\|\bM\|$ for short), $\|\bM\|_F = (\sum_{i,j} M_{i,j}^2)^{1/2}$, $\|\bM\|_{\infty} = \max_{i} \sum_j |M_{i,j}|$ and $\|\bM\|_{1,1} = \sum_{i} \sum_j |M_{i,j}|$ to be its spectral, Frobenius, induced $\ell_{\infty}$ and element-wise $\ell_1$ norms. If furthermore $\bM$ is symmetric, we define $\lambda_{j}(\bM)$ to be the $j$th largest eigenvalue of $\bM$ and $\lambda_{\max}(\bM)$, $\lambda_{\min}(\bM)$ to be the maximal and minimal eigenvalues respectively. We denote $\tr(\bM)$ to be the trace of $\bM$. For any vector $\bv$, its $\ell_2$ norm is represented by $\|\bv\|$ while $\ell_1$ norm is written as $\|\bv\|_1$. We denote $\diag(\bv)$ to be the diagonal matrix with the same diagonal entries as $\bv$. For two random matrices $\bA, \bB$ of the same size, we say $\bA = \bB + O_P(\delta)$ if $\|\bA- \bB\| = O_P(\delta)$ and $\bA = \bB + o_P(\delta)$ if $\|\bA - \bB\| = o_P(\delta)$. Similarly for two random vectors $\ba, \bb$ of the same length, $\ba = \bb + O_P(\delta)$ if $\|\ba - \bb\| = O_P(\delta)$ and $\ba = \bb + o_P(\delta)$ if $\|\ba - \bb\| = o_P(\delta)$.
%We denote $\ba \overset{d} \Rightarrow \mathcal L$ for some distribution $\mathcal L$ if there exists $\bb \sim \mathcal L$ such that $\ba = \bb + o_P(1)$.
We denote $\ba \overset{d} = \bb$ if random vectors $\ba$ and $\bb$ have the same distribution. In the sequel, $C$ is a generic constant that may differ from line to line.

\subsection{Paper organization} In Section \ref{sec2}, we present a generic POET estimating procedure and a  high-level theoretical interface which secures the consistency of the generic procedure for factor-based conditional sparsity mdoels. We verify that the conditions in Section \ref{sec3} hold with high probability for sub-Gaussian data, which provides a transparent understanding of the mechanism of the POET methodology. In Section \ref{sec4}, we propose a new method using a combination of marginal and multivariate Kendall's tau  and prove its theoretical properties under elliptical factor models. Thorough numerical simulations are conducted illustrate the merits of our proposed method in Section \ref{sec5}. In Section \ref{sec6},  we conclude the paper with a short discussion. The technical proofs are relegated to the appendix.

\section{A High-level theoretical interface} \label{sec2}
In this section, we summarize a generic POET procedure and provide a set of high level sufficient conditions for consistent covariance estimation when $p\gg n$. Before doing that, let us review what has been achieved in the existing literature where both the factors and noises are assumed to be sub-Gaussian.

\subsection{Spiked covariance model}

Assume the observed random variables $\{\by_i\}_{i=1}^n$ have zero mean and covariance matrix $\bSigma_{p\times p}$ where the eigenvalues $\lambda_1, \lambda_2, \dots, \lambda_p$ of $\bSigma$ are ordered in descending order. We consider the spiked population model as suggested by the approximate factor structure (\ref{eq:matrixDecomp}). Specifically we have the following assumption on the eigvenvalues.
\begin{assum}[Spiked covariance model]
\label{assump1}
Let $m\leq \min\{n,p\}$ be a fixed constant that does not change with $n$ and $p$.  As $n \to \infty$, $\lambda_1 > \lambda_2 > \dots > \lambda_m \gg \lambda_{m+1} \ge \dots \ge \lambda_p > 0$, where the spiked eigenvalues are linearly proportional to dimension $p$ while the non-spiked eigenvalues are bounded, i.e., $c_0 \le \lambda_j \le C_0, j>m$ for constants $c_0, C_0 > 0$. In addition, the non-spiked eigenvalue average $(p-m)^{-1} \sum_{j = m+1}^p \lambda_j = \bar c + o(1)$.
\end{assum}

Assumption \ref{assump1} requires the eigenvalues be divided into the diverging and  bounded ones.    For simplicity,  we only consider distinguishable eigenvalues (multiplicity 1) for the largest $m$ eigenvalues.  This assumption is typically satisfied by the factor model (\ref{eq:factor}) with pervasive factors. More specifically, if the factor loadings $\{\bb_j\}_{j=1}^p$ (the transpose of the rows of $\bB$) are an i.i.d. sample from a population with finite second moments, then by the strong law of large numbers, $ p^{-1} \bB' \bB = p^{-1} \sum_{j=1}^p \bb_j \bb_j' \to \bSigma_b$ almost surely, where  $\bSigma_b =  \mathbb{E} (\bb_j \bb_j')$.  In other words, the eigenvalues of $\bB \bB'$ are approximately
\[
  p \lambda_1(\bSigma_b) (1+o(1)), \cdots,  p \lambda_m(\bSigma_b) (1+o(1)), 0, \cdots, 0,
\]
where $\lambda_j(\bSigma_b)$ is the $j$th eigenvalue of $\bSigma_b$.
If we further assume that $\|\bSigma_u\|$ is bounded, by Weyl's theorem, we conclude
\begin{equation} \label{eq2.1}
  \lambda_j = p \lambda_j(\bSigma_b) (1+o(1)),  \quad \mbox{for } j = 1, \cdots, m,
\end{equation}
and the remaining are bounded.

\subsection{A generic POET procedure for covariance estimation} \label{sec2.1}

We see from (\ref{eq:matrixDecomp}) that the population covariance of the factor model (\ref{eq:factor}) exhibits a low-rank plus sparse structure, if %$\bSigma_u=(\sigma_{u,ij})$'s eigenvalues  are bounded away from zero and infinity, and
$\bSigma_u$ is sparse, whose sparsity level is measured by
\[
m_p := \max_{i \le p} \sum_{j \le p} |\sigma_{u,ij}|^q
\]
for some $q \in [0,1]$ is small. In particular, with $q=0$, $m_p$ corresponds to the maximum number of nonzero elements in each row of $\bSigma_u$.

To estimate the covariance matrix $\bSigma$ with the approximate factor structure (\ref{eq:matrixDecomp}), \cite{FanLiaMin13} proposed the POET method to recover the factor matrix as well as the factor loadings. The idea is to first decompose the sample covariance matrix into the spike and non-spike parts,
\beq
\label{Eqn:decomp}
\hat\bSigma = \frac1n \sum_{i=1}^n \by_i \by_i' = \sum_{j = 1}^{m} \hat\lambda_j \hat\bxi_{j} \hat\bxi_j' + \hat\bSigma_u\,,
\eeq
where $\hat\bSigma_u = \sum_{j = m+1}^{p} \hat\lambda_j \hat\bxi_{j} \hat\bxi_j'$ is called the principal orthogonal complement. Then by employing adaptive thresholding on $\hat\bSigma_u$ to get $\hat\bSigma_u^{\top}$ \citep{CaiLiu11}, they obtain a final covariance estimator $\hat\bSigma^{\top}$ defined as
\beq
\hat\bSigma^{\top} = \sum_{j = 1}^{m} \hat\lambda_j \hat\bxi_{j} \hat\bxi_j' + \hat\bSigma_u^{\top} \,.
\eeq
The above procedure can be equivalently viewed as a least-squares approach. That is, the factor and loading matrices can be estimated by solving the following nonconvex minimization problem:
\beq \label{minimization}
(\hat\bB, \hat\bF) = \arg \min_{\bB,\bF} \|\bY - \bB\bF'\|_F^2 \text{ s.t. } \frac{1}{n} \bF'\bF = \bI_m, \bB'\bB \text{ is diagonal}.
\eeq
It is shown that the columns of $\hat\bF/\sqrt{n}$ are the eigenvectors corresponding to the $m$ largest eigenvalues of the $n\times n$ matrix $n^{-1} \bY'\bY$ and $\hat\bB = n^{-1} \bY \hat\bF$. Note that the estimator $\hat\bB$ given by minimizing (\ref{minimization}), after normalization, is actually the first $m$ empirical eigenvectors of $n^{-1} \bY \bY'$. Given $\hat\bB,\hat\bF$, we define $\hat \bU = \bY - \hat\bB \hat\bF'$ and $\hat\bSigma_u = n^{-1} \hat\bU \hat\bU'$. Finally adaptive thresholding  is applied to $\hat\bSigma_u$ to obtain $\hat\bSigma_u^{\top} = (\hat\sigma_{u,ij}^{\top})_{p\times p}$ with
\begin{equation} \label{thresholding}
\hat\sigma_{u,ij}^{\top} = \left\{  \begin{array}{lr} \hat\sigma_{u,ij}, & i = j\\
s_{ij} (\hat\sigma_{u,ij}) I(|\hat\sigma_{u,ij}| \ge \tau_{ij}),  & i \ne j\end{array} \right.,
\end{equation}
where $s_{ij}(\cdot)$ is the generalized shrinkage function \citep{AntFan01,RotLevZhu09} and $\tau_{ij} = \tau (\hat\sigma_{u,ii} \hat\sigma_{u,jj})^{1/2}$ is an entry-dependent threshold. The above adaptive threshold operator corresponds to applying thresholding with parameter $\tau$ to the correlation matrix of $\hat\bSigma_u$. The positive parameter $\tau$ will be determined based on theoretical analysis.

Let $w_n = \sqrt{\log p/n} + 1/\sqrt{p}$. \cite{FanLiaMin13} claimed that under some technical assumptions, with $\tau \asymp w_n$, if $m_p w_n^{1-q} = o(1)$,
\beq \label{rate1}
\|\hat\bSigma_u^{\top} - \bSigma_u\|_{2}  = O_P\Big( m_p w_n^{1-q} \Big) = \|(\hat\bSigma_u^{\top})^{-1} - {\bSigma_u}^{-1}\|_{2}\,,
\eeq
and
\beq \label{rate2}
\begin{aligned}
&\|\hat\bSigma^{\top} - \bSigma\|_{\max} = O_P\Big( w_n \Big)\,, \\
&\|\hat\bSigma^{\top} - \bSigma\|_{\bSigma} =  O_P\Big( \frac{\sqrt{p} \log p}{n} + m_p w_n^{1-q} \Big)\,, \\
&\|(\hat\bSigma^{\top})^{-1} - \bSigma^{-1}\|_2 = O_P\Big( m_p w_n^{1-q} \Big)\,,
\end{aligned}
\eeq
where $\|\bA\|_{\bSigma} = p^{-1/2} \|\bSigma^{-1/2}\bA \bSigma^{-1/2}\|_F$ is the relative Frobenius norm. The scaling $p^{-1/2}$ is exploited to ensure $\|\bSigma\|_{\bSigma} = 1$. The term $1/\sqrt{p}$ in $w_n$ is the price we need to pay for estimating the unknown factors. But in the high dimensional regime $p \ge n$ so that $1/\sqrt{p} \leq \sqrt{\log p/n}$, the rate is optimal. The original proofs for getting the above rates are mathematically involved and is not clear why the optimal rates can be attained, especially when no sparsity assumption for eigenvectors was imposed as in sparse PCA literature.

We propose a generic POET procedure here: (1) given three initial pilot estimators $\hat\bSigma, \hat\bLambda, \hat\bGamma$ for true covariance matrix $\bSigma$, leading eigenvalues $\bLambda = \diag(\lambda_1, \dots, \lambda_m)$ and leading eigenvectors $\bGamma_{p \times m} = (\bxi_1, \dots, \bxi_m)$ respectively, the principal orthogonal complement $\hat\bSigma_u$ can be computed by subtracting out the leading low-rank part, i.e.,
$$
    \hat\bSigma_u = \hat\bSigma -  \hat\bGamma \hat\bLambda \hat\bGamma';
$$
(2) The adaptive thresholding (\ref{thresholding}) is applied to $\hat\bSigma_u$ to obtain $\hat\bSigma_u^{\top}$, and (3) the low-rank structure is added back to obtain $\hat\bSigma^{\top}$. Note for sub-Gaussian distributions, $\hat\bLambda = \diag(\hat\lambda_1, \dots,\hat\lambda_m)$ is the diagonal matrix constructed by the first $m$ leading empirical eigenvalues of the sample covariance matrix $\hat\bSigma$ while $\hat\bGamma = (\hat\bxi_1, \dots, \hat\bxi_m)$ is the matrix of corresponding leading empirical eigenvectors. But in general, $\hat\bLambda$ and $\hat\bGamma$ do not have to come from the sample covariance matrix. In fact, they can even be separately estimated.

So our question is: why such a simple  POET procedure works under the piked covariance assumption \eqref{assump1}? Can we replace the sample covariance matrix by other pilot estimators as a starting point for the eigen-strucuture if other family of distributions, such as elliptical distributions or other more general heavy-tailed distributions, are considered?

\subsection{A high level theoretical interface}

A high level explanation is provided to understand the generic POET procedure. Sufficient conditions are brought up for $\hat\bSigma_u^{\top}$ and $\hat\bSigma^{\top}$ to achieve the desired rates of convergence  in (\ref{rate1}) and (\ref{rate2}). Our vital conclusion is stated in the following theorem.

\begin{thm} \label{suff}
Under Assumptions \ref{assump1}, if $\exists C > 0$ such that $\|\bB\|_{\max} \le C$ and $C^{-1} \le \|\bSigma_u \|_2 \le C$. If we have estimators $\hat\bSigma, \hat\bGamma, \hat\bLambda$ satisfying (\ref{suffCond}), then the rates of convergence in (\ref{rate1}) and (\ref{rate2}) hold with the generic POET procedure described in Section \ref{sec2.1}.
\end{thm}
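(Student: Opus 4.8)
The plan is to decouple the argument into a deterministic ``asymptotic identifiability'' analysis of the population eigenstructure under Assumption~\ref{assump1} and a stochastic part in which the interface bounds (\ref{suffCond}) are used only to transfer max-norm control from the three pilot estimators to the principal orthogonal complement $\hat\bSigma_u$. For the population facts, writing $\bSigma=\bB\bB'+\bSigma_u$ and using Weyl's inequality together with a Davis--Kahan bound, with eigengaps of order $p$ among the spiked eigenvalues, I would get $\lambda_j\asymp p$ for $j\le m$, $\|\bGamma\|_{\max}=O(1/\sqrt p)$, $\|\bSigma^{-1}\|_2=O(1)$ (the last from $\lambda_p\ge c_0$ in Assumption~\ref{assump1}), and the key estimates $\|\bGamma\bLambda\bGamma'-\bB\bB'\|_{\max}=O(1/\sqrt p)$ and $\|\bGamma\bLambda\bGamma'-\bB\bB'\|_2=O(1)$ --- the latter because this matrix has rank $\le 2m$ and is a sum of $m$ rank-$\le 2$ pieces each of spectral norm $O(1)$. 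Equivalently the population principal orthogonal complement $\bSigma_u^{*}:=\bSigma-\bGamma\bLambda\bGamma'$ obeys $\|\bSigma_u^{*}-\bSigma_u\|_{\max}=O(1/\sqrt p)$, the $O(1/\sqrt p)$ identifiability error anticipated after \eqref{eq::identifibility}. I would also record the deflation identity $\bSigma^{-1/2}\bGamma=\bGamma\bLambda^{-1/2}$, so $\|\bSigma^{-1/2}\bGamma\|_2=O(1/\sqrt p)$, which is what controls the low-rank error in relative Frobenius norm later.

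Next, expanding $\hat\bGamma\hat\bLambda\hat\bGamma'-\bGamma\bLambda\bGamma'$ into the three cross terms $(\hat\bGamma-\bGamma)\hat\bLambda\hat\bGamma'$, $\bGamma(\hat\bLambda-\bLambda)\hat\bGamma'$, $\bGamma\bLambda(\hat\bGamma-\bGamma)'$ and bounding each entrywise via (\ref{suffCond}) --- using $\|\bGamma\|_{\max},\|\hat\bGamma\|_{\max}=O_P(1/\sqrt p)$, $\|\bLambda\|_2,\|\hat\bLambda\|_2=O_P(p)$, $|\hat\lambda_k-\lambda_k|=O_P(p\sqrt{\log p/n})$, and summing over the $m$ columns --- gives $\|\hat\bGamma\hat\bLambda\hat\bGamma'-\bGamma\bLambda\bGamma'\|_{\max}=O_P(\sqrt{\log p/n})$. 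A triangle inequality through $\bGamma\bLambda\bGamma'$, combined with $\|\hat\bSigma-\bSigma\|_{\max}=O_P(\sqrt{\log p/n})$ and $\|\bSigma_u^{*}-\bSigma_u\|_{\max}=O(1/\sqrt p)$, yields $\|\hat\bSigma_u-\bSigma_u\|_{\max}=O_P(w_n)$ with $w_n=\sqrt{\log p/n}+1/\sqrt p$. Since $\bSigma_u$ has $\ell_q$-row sparsity $m_p$ and a diagonal bounded away from $0$ and $\infty$, the adaptive-thresholding analysis already invoked in \cite{FanLiaMin13} applies with $\tau\asymp w_n$ and delivers $\|\hat\bSigma_u^{\top}-\bSigma_u\|_{\max}=O_P(w_n)$, $\|\hat\bSigma_u^{\top}-\bSigma_u\|_2=O_P(m_p w_n^{1-q})$, and, because $m_p w_n^{1-q}=o(1)$, invertibility of $\hat\bSigma_u^{\top}$ with $\|(\hat\bSigma_u^{\top})^{-1}-\bSigma_u^{-1}\|_2=O_P(m_p w_n^{1-q})$, i.e. (\ref{rate1}).

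It then remains to control $\hat\bSigma^{\top}=\hat\bGamma\hat\bLambda\hat\bGamma'+\hat\bSigma_u^{\top}$ in the three norms of (\ref{rate2}). The max-norm bound follows from the split $\hat\bSigma^{\top}-\bSigma=(\hat\bGamma\hat\bLambda\hat\bGamma'-\bB\bB')+(\hat\bSigma_u^{\top}-\bSigma_u)$, bounding the first bracket by the cross-term estimate plus the $O(1/\sqrt p)$ identifiability gap and the second by the thresholding output, hence $O_P(w_n)$. For the inverse, set $\hat\bB:=\hat\bGamma\hat\bLambda^{1/2}$ so that $\hat\bGamma\hat\bLambda\hat\bGamma'=\hat\bB\hat\bB'$, apply the Sherman--Morrison--Woodbury identity to $\bSigma=\bB\bB'+\bSigma_u$ and to $\hat\bSigma^{\top}=\hat\bB\hat\bB'+\hat\bSigma_u^{\top}$, observe that both low-rank corrections are spectrally bounded (using $\|(\bI_m+\bB'\bSigma_u^{-1}\bB)^{-1}\|_2=O(1/p)$ and the well-conditioning of $\bSigma_u$) and differ by $O_P(m_p w_n^{1-q})$ since $\hat\bB,\hat\bLambda,\hat\bSigma_u^{\top}$ converge at the relevant rates; with (\ref{rate1}) this gives the claimed rate. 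For the relative Frobenius bound, write $\hat\bSigma^{\top}-\bSigma=\bA_1+\bA_2$ with $\bA_1=\hat\bGamma\hat\bLambda\hat\bGamma'-\bB\bB'$ (rank $\le 2m$) and $\bA_2=\hat\bSigma_u^{\top}-\bSigma_u$; sandwiching $\bA_1$ by $\bSigma^{-1/2}$ and using the deflation identity, the near-orthonormality $\hat\bGamma'\hat\bGamma=\bI_m+o_P(1)$, and the cross-term expansion gives $p^{-1/2}\|\bSigma^{-1/2}\bA_1\bSigma^{-1/2}\|_F=O_P(\sqrt p\,\log p/n)$, while for $\bA_2$ the bound $\|\bA_2\|_F^2\le\|\bA_2\|_{\max}\|\bA_2\|_{1,1}\le p\,\|\bA_2\|_{\max}\|\bA_2\|_{\infty}$ together with $\|\bA_2\|_{\max}=O_P(w_n)$, $\|\bA_2\|_{\infty}=O_P(m_p w_n^{1-q})$ and $\|\bSigma^{-1}\|_2=O(1)$ gives $p^{-1/2}\|\bSigma^{-1/2}\bA_2\bSigma^{-1/2}\|_F=O_P(m_p w_n^{1-q})$.

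I expect the $\bA_1$ part of the relative Frobenius bound to be the main obstacle: it is the one step where the naive inequalities $\|\bA\|_2\le p\|\bA\|_{\max}$ and $\|\bA\|_F\le p\|\bA\|_{\max}$ lose too much, so one must interleave the deflation identity $\bSigma^{-1/2}\bGamma=\bGamma\bLambda^{-1/2}$, the entrywise eigenvector rate $\sqrt{\log p/(np)}$, the relative eigenvalue rate $\sqrt{\log p/n}$, and the near-orthonormality of $\hat\bGamma$ in exactly the right order to avoid an extra factor of $\sqrt p$. A close second is the deterministic step of showing that $\bGamma\bLambda\bGamma'-\bB\bB'$ is $O(1)$ spectrally even though it is only $O(1/\sqrt p)$ entrywise, which is the sharp form of the ``identifiability only up to $O(1/\sqrt p)$'' remark and what makes the low-rank pilot errors compatible with the stated rates.
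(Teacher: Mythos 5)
Your proposal is correct and follows essentially the same route as the paper's proof. The shared skeleton is: (i) deterministic population facts from Assumption~\ref{assump1}, in particular $\|\bGamma\|_{\max}=O(1/\sqrt p)$ and the $O(1/\sqrt p)$ identification gap $\|\bGamma\bLambda\bGamma'-\bB\bB'\|_{\max}$; (ii) a cross-term telescope of $\hat\bGamma\hat\bLambda\hat\bGamma'-\bGamma\bLambda\bGamma'$ bounded entrywise by (\ref{suffCond}) to get $O_P(\sqrt{\log p/n})$; (iii) a triangle inequality yielding $\|\hat\bSigma_u-\bSigma_u\|_{\max}=O_P(w_n)$ and then the standard adaptive-thresholding machinery for (\ref{rate1}); (iv) Sherman--Morrison--Woodbury with perturbation of the three building blocks for the inverse; and (v) a low-rank/sparse split of $\hat\bSigma^{\top}-\bSigma$ for the relative Frobenius norm. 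The paper's telescope and yours differ in which factor carries the hat, but they are equivalent.

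Two small remarks on where you deviate. For the sparse residual $\bA_2$ in the relative Frobenius norm you use the interpolation $\|\bA_2\|_F^2\le p\,\|\bA_2\|_{\max}\|\bA_2\|_\infty$, whereas the paper simply uses $\|\bA_2\|_F\le\sqrt p\,\|\bA_2\|_2$ and the already-established spectral bound; both give the claimed $O_P(m_p w_n^{1-q})$ (your route even gives a slightly sharper intermediate rate, $\sqrt{m_p}\,w_n^{1-q/2}$, which is dominated by the target). For the low-rank term $\bA_1$ the paper makes the ``interleaving'' you anticipate fully explicit: it projects $\hat\bGamma\hat\bLambda\hat\bGamma'-\bB\bB'$ onto the spiked subspace $\bGamma$ and the non-spiked subspace $\bOmega$ and treats the three resulting pieces $\Delta_{L1},\Delta_{L2},\Delta_{L3}$ separately, using orthogonality $\bOmega'\bGamma=0$ so that $\bOmega'\hat\bGamma=\bOmega'(\hat\bGamma-\bGamma)$ and the entrywise eigenvector rate produce the dominant $O_P(\sqrt p\,\log p/n)$ from $\Delta_{L2}$. (The paper cites a result from \cite{FanWan15} to sharpen $\Delta_{L1}$ to $O_P(n^{-1/2})$; your interface-only argument gives the weaker but still dominated $O_P(\sqrt{\log p/n})$, which is arguably cleaner since it does not rely on the pilot being the sample covariance.) So your sketch of this step is sound; to make it airtight you would write out precisely the $\Delta_{L1},\Delta_{L2},\Delta_{L3}$ decomposition that the deflation identity suggests.
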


The proof given in the appendix to obtain (\ref{goal}) provides insights on how the generic POET procedure works. Note that the max norm of low rank matrix estimation is bounded by $\Delta_1$ and $\Delta_2$. The former quantifies the estimation error of leading empirical eigen-structure $\hat\bGamma\hat\bLambda\hat\bGamma$ for its population counterpart, while the latter measures the error of identifying the low rank matrix $\bB \bB'$ by $\bGamma\bLambda\bGamma'$ from the true matrix $\bSigma$. The identification of low-rank and sparse matrices under pervasive condition is asymptotically unique with identification error $\Delta_2 = O(1/\sqrt{p})$. Additionally, the estimation contributes an error term of order $\Delta_1 = O_P(\sqrt{\log p/n})$.

%Theorem \ref{suff} provides a high level theoretical interface to achieve estimation accuracy (\ref{rate1}) and (\ref{rate2}) by the POET procedure described in Section \ref{sec2.1}. To achieve desired rates of convergence, it essentially requires finding good pilot estimators $\hat\bSigma, \hat\bLambda, \hat\bGamma$ which satisfy the interface defined in (\ref{suffCond}).  We show in the next section that for sub-Gaussian data, sample covariance matrix and its corresponding eigen-structure are enough to accomplish the sufficient conditions (\ref{suffCond}). When it comes to the heavy-tailed elliptical distributions, we construct more sophisticate robust estimators for $\bSigma, \bGamma, \bLambda$ in Section \ref{sec4}.

\subsection{Conditional graphical model} \label{sec2.3}
In Section \ref{sec2.1}, $m_p$ measures the sparsity of $\bSigma_u$, but its inverse $\bOmega_u = \bSigma_u^{-1}$ is not necessarily sparse. Sometimes, the sparsity structure on $\bOmega_u$  reveals more interesting structure than $\bSigma_u$. For example,  If $\bu_t \sim EC_p({\bf 0}, \bSigma_u, \zeta)$, the sparsity of $\bOmega_u$ encodes the conditional uncorrelatedness relationships between all variables in the $p$ dimensional vector $\bu_t$. More specifically, for $p$ nodes $u_1, \dots, u_p$, each corresponding to one element of $\bu_t$, $u_i$ and $u_j$ are connected if and only if $(\bOmega_u)_{ij} \ne 0$, meaning that $u_{it}$ and $u_{jt}$ are uncorrelated conditioning on all the other $\{u_{kt}\}_{k \ne i, j}$ and $\bff_t$. If the number of factors is zero, this reduces to the classical elliptical graphical model, exhaustively studied by  \cite{VogFri11} and \cite{LiuHanZha12}.

In many applications, the conditional graphical model (or conditional sparse inverse covariance model) appears more natural compared to the conditional sparse covariance model. For example, in understanding the dependence of financial returns, the interest lies in the graphical model of the idiosyncratic components after taking the common market risk factors away; in genomic studies, the graphs after taking the confounding factors such as age and environment exposure are of better interest. The factors can be interpreted as covariates that need to be adjusted before focusing on the analysis of correlatedness of the residual part \citep{FanLiaMin11}. \cite{CaiLiLiuXie12} adopted the same idea of adjusting the factors in genomics application, but they do not assume the factors are pervasive so that they need to impose the constraint of a sparse factor loading matrix $\bB$. The sparsity was put on $\bOmega_u$ and measured by the quantity
\[
        M_p : = \max_{i \le p} \sum_{j \le p} |\omega_{u,ij}|^q.
\]

The generic POET procedure could also be modified to estimate conditional graphical model. The first step is still recovering $\hat \bSigma_u = \hat\bSigma - \hat\bGamma \hat\bLambda\hat\bGamma'$ by removing the effect of low-rank dominating factors. Then the method ``constrained
$\ell_1$-minimization for inverse matrix estimation'' (CLIME) proposed by \cite{CaiLiuLuo11} can be applied to obtain $\hat\bOmega_u$. Specifically, CLIME solves the following constrained minimization problem:
\beq \label{CLIME}
\hat\bOmega_u^1 = \argmin_{\bOmega} \|\bOmega\|_{1,1} \;\; \text{subject to} \;\; \|\hat\bSigma_u \bOmega - \bI\|_{\max} \le \tau,
\eeq
where $\|\bOmega\|_{1,1} = \sum_{i} \sum_j |\omega_{i,j}|$ and $\tau$ is a tuning parameter so that $\tau \asymp w_n$. A further symmetrization step can be carried out to guarantee a symmetric estimator $\hat\bOmega_u = (\hat\omega_{u, ij})$ where
\beq
\hat\omega_{u, ij} = \hat\omega_{u, ij}^1 \mathbbm{1}(|\hat\omega_{u, ij}^1| \le |\hat\omega_{u, ji}^1|) +  \hat\omega_{u, ji}^1 \mathbbm{1}(|\hat\omega_{u, ij}^1| > |\hat\omega_{u, ji}^1|)\,.
\eeq

Note that the optimization in (\ref{CLIME}) can be solved column by column using linear programming. Other possible methods can also be considered including graphical Lasso, graphical SCAD, graphical Dantzig selector, and graphical neighborhood selection \citep{FriHasTib08, YuaLin07, FanFenWu09,  LamFan09, RavWaiRasYu11, Yua10, MeiBuh06}.  Though substantial amount of efforts have been made to understand the graphical model,  little has been done for estimating conditional graphical model, which is again more general and realistic.

Once we have $\hat\bOmega_u$, the original inverse covariance matrix $\bOmega = \bSigma^{-1}$ can also be estimated using the Sherman-Morrison-Woodbury formula as follows:
\beq
\hat\bOmega = \hat\bOmega_u -  \hat\bOmega_u \hat\bGamma (\hat\bLambda^{-1} + \hat\bGamma' \hat\bOmega_u \hat\bGamma)^{-1} \hat\bGamma' \hat\bOmega_u\,.
\eeq
The following theorem gives the rates of convergence for $\hat\bOmega_u$ and $\hat\bOmega$ provided good pilot estimators $\hat\bSigma$, $\hat\bLambda$ and $\hat\bGamma$ are given. Its proof is in Appendix \ref{secA}.

\begin{thm} \label{Graph}
Under Assumptions \ref{assump1}, if $\exists C > 0$ such that $\|\bB\|_{\max} \le C$ and $C^{-1} \le \|\bOmega_u \|_2  \le \|\bOmega_u \|_{\infty} \le C$ and the estimators $\hat\bSigma, \hat\bGamma, \hat\bLambda$ satisfy conditions (\ref{suffCond}). Then the generic POET procedure with CLIME gives
\beq \label{rate3}
\begin{aligned}
&\|\hat\bOmega_u - \bOmega_u\|_{\max} = O_P( w_n)  = \|\hat\bOmega - \bOmega\|_{\max}\,; \\
& \|\hat\bOmega_u - \bOmega_u\|_{2} = O_P( M_p w_n^{1-q}) = \|\hat\bOmega - \bOmega\|_2 \,.
\end{aligned}
\eeq
\end{thm}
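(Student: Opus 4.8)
The plan is to split the argument into two essentially independent pieces: (i) the behaviour of CLIME applied to the principal orthogonal complement $\hat\bSigma_u = \hat\bSigma - \hat\bGamma\hat\bLambda\hat\bGamma'$, and (ii) the propagation of that error through the Sherman--Morrison--Woodbury (SMW) identity used to reconstruct $\hat\bOmega$. The quantitative input I will borrow from the proof of Theorem~\ref{suff} is that $\hat\bSigma_u$ already estimates $\bSigma_u$ in the max norm: since $\|\hat\bSigma-\bSigma\|_{\max}=O_P(\sqrt{\log p/n})$ and $\|\hat\bGamma\hat\bLambda\hat\bGamma'-\bB\bB'\|_{\max}\le\Delta_1+\Delta_2=O_P(w_n)$ with $\Delta_1=O_P(\sqrt{\log p/n})$ and $\Delta_2=O(1/\sqrt p)$, and $\bSigma=\bB\bB'+\bSigma_u$, we obtain $\|\hat\bSigma_u-\bSigma_u\|_{\max}=O_P(w_n)$. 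I will also use from the same proof that $\|\bGamma\|_{\max}=O(1/\sqrt p)$ (a consequence of $\|\bB\|_{\max}\le C$ and $\lambda_j(\bB\bB')\asymp p$), hence $\|\hat\bGamma\|_{\max}=O_P(1/\sqrt p)$ by \eqref{suffCond}, and that $\hat\bGamma$ is (approximately) orthonormal.

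\emph{Step 1: CLIME on $\hat\bSigma_u$.} Take $\tau\asymp w_n$ so that, with probability tending to one, $\tau\ge\|\bOmega_u\|_\infty\|\hat\bSigma_u-\bSigma_u\|_{\max}$ and therefore $\|\hat\bSigma_u\bOmega_u-\bI\|_{\max}\le\tau$; i.e. $\bOmega_u$ is feasible in \eqref{CLIME}. The standard CLIME analysis of \cite{CaiLiuLuo11} then yields, on this event, $\|\hat\bOmega_u^1-\bOmega_u\|_{\max}\le 4\|\bOmega_u\|_\infty\tau=O_P(w_n)$, and the symmetrization step changes this only by a constant factor, so $\|\hat\bOmega_u-\bOmega_u\|_{\max}=O_P(w_n)$. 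Splitting the entries of $\bOmega_u$ at level $w_n$ and using $\max_i\sum_j|\omega_{u,ij}|^q\le M_p$ gives, exactly as in Cai--Liu--Luo, $\|\hat\bOmega_u-\bOmega_u\|_2\le\|\hat\bOmega_u-\bOmega_u\|_\infty\le C M_p\big(\|\hat\bOmega_u-\bOmega_u\|_{\max}\big)^{1-q}=O_P(M_p w_n^{1-q})$. This is the $\hat\bOmega_u$ part of \eqref{rate3}. I will also record for Step 2 that $\|\hat\bOmega_u\|_\infty,\|\hat\bOmega_u\|_2\le C'$ and $\lambda_{\min}(\hat\bOmega_u)\ge c/2$ with high probability.

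\emph{Step 2: SMW reduction for $\hat\bOmega$.} Write $\bB\bB'=\bW\bLambda_B\bW'$ with $\bW$ orthonormal $p\times m$ and $\bLambda_B=\diag(\lambda_1(\bB\bB'),\dots,\lambda_m(\bB\bB'))$, $\lambda_j(\bB\bB')\asymp p$. SMW gives \emph{exactly} $\bOmega=(\bB\bB'+\bSigma_u)^{-1}=\bOmega_u-\bOmega_u\bW\bH_B\bW'\bOmega_u$ with $\bH_B=(\bLambda_B^{-1}+\bW'\bOmega_u\bW)^{-1}$, and by construction $\hat\bOmega=\hat\bOmega_u-\hat\bOmega_u\hat\bGamma\hat\bH\hat\bGamma'\hat\bOmega_u$ with $\hat\bH=(\hat\bLambda^{-1}+\hat\bGamma'\hat\bOmega_u\hat\bGamma)^{-1}$; both $\bH_B$ and $\hat\bH$ have spectral norm $\le C$, since the smallest eigenvalue of $\bW'\bOmega_u\bW$ (resp. $\hat\bGamma'\hat\bOmega_u\hat\bGamma$) is $\ge\lambda_{\min}(\bOmega_u)\ge c$ (resp. $\ge c/2$ w.h.p.) while $\bLambda_B^{-1},\hat\bLambda^{-1}\succeq 0$. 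I then telescope $\hat\bOmega-\bOmega$ through four swaps: (a) $\hat\bOmega_u\to\bOmega_u$ in the correction term, with cost $\lesssim\|\hat\bOmega_u-\bOmega_u\|$ times bounded factors, i.e. $O_P(w_n)$ in max norm and $O_P(M_p w_n^{1-q})$ in spectral norm; (b) $\hat\bGamma\to\bGamma$, with cost $\lesssim\|\hat\bGamma-\bGamma\|_F\le\sqrt{pm}\,\|\hat\bGamma-\bGamma\|_{\max}=O_P(\sqrt{\log p/n})$ in spectral norm, while in max norm the correction terms are individually $O_P(1/p)$ because $\|\hat\bGamma\|_{\max},\|\bGamma\|_{\max}=O(1/\sqrt p)$ and $\|\hat\bOmega_u\|_\infty\le C'$; (c) $\hat\bLambda\to\bLambda$, negligible since $\|\hat\bLambda^{-1}-\bLambda^{-1}\|_2=O_P(p^{-1}\sqrt{\log p/n})$; (d) $(\bGamma,\bLambda)\to(\bW,\bLambda_B)$. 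For (d) I use that $\bGamma(\bGamma'\bOmega_u\bGamma)^{-1}\bGamma'$ is a \emph{rotation-invariant} and smooth function of the projector $\bP_\Gamma=\bGamma\bGamma'$ (the relevant restriction of $\bP_\Gamma\bOmega_u\bP_\Gamma$ has eigenvalues $\ge c$), likewise for $\bW$; by Davis--Kahan applied to $\bSigma=\bB\bB'+\bSigma_u$ with eigengap $\lambda_m(\bB\bB')\asymp p$, $\|\bP_\Gamma-\bP_W\|_2\le\|\bSigma_u\|_2/(\lambda_m(\bB\bB')-\|\bSigma_u\|_2)=O(1/p)$, and the $\bLambda^{-1},\bLambda_B^{-1}$ pieces are $O(1/p)$, so the two correction terms agree up to $O(1/p)$ in both norms. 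Summing (a)--(d) gives $\|\hat\bOmega-\bOmega\|_{\max}=O_P(w_n)$ and $\|\hat\bOmega-\bOmega\|_2=O_P(M_p w_n^{1-q}+\sqrt{\log p/n}+1/p)=O_P(M_p w_n^{1-q})$, using $w_n^{1-q}\ge w_n\ge\sqrt{\log p/n}$ and $M_p\ge c$.

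The main obstacle is step (d): showing that replacing the \emph{true} low-rank factor $\bB\bB'$, which defines $\bOmega=(\bB\bB'+\bSigma_u)^{-1}$, by the spiked eigen-part $\hat\bGamma\hat\bLambda\hat\bGamma'$ costs only $O(1/p)$. It is tempting but incorrect to identify $\bGamma\bLambda\bGamma'$ with $\bB\bB'$: they differ by $O(1/\sqrt p)$ entrywise and by $O(1)$ in spectral norm. The resolution is precisely to pass through the SMW correction term, which depends on $\bGamma$ only via its column space, and to exploit the $\asymp p$ eigengap of $\bB\bB'$ so that Davis--Kahan furnishes $O(1/p)$ subspace proximity; the price for ``not knowing'' $\bB\bB'$ exactly is thus swamped by the statistical rate $w_n$. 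A secondary point to be careful about is the uniform well-conditioning of the inner $m\times m$ matrices, which relies on $\hat\bGamma$ being approximately orthonormal and on $\lambda_{\min}(\hat\bOmega_u)\ge c/2$ established in Step~1.
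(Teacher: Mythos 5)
Your proposal is correct and follows essentially the same route as the paper's own proof: establish $\|\hat\bSigma_u - \bSigma_u\|_{\max}=O_P(w_n)$ from (\ref{goal}), feed this into the standard Cai--Liu--Luo feasibility/optimality argument for CLIME to get the max and spectral rates for $\hat\bOmega_u$, and then telescope through the Sherman--Morrison--Woodbury formula for $\hat\bOmega$. The only cosmetic difference is that you split the SMW comparison into four explicit swaps and invoke Davis--Kahan on projectors to obtain the $O(1/p)$ subspace proximity in step (d), whereas the paper bundles $\bGamma\bLambda^{1/2}$ as a unit and recycles the algebraic bound $\|\bGamma\bLambda^{1/2}-\tilde\bGamma\tilde\bLambda^{1/2}\|_{\max}=O(1/\sqrt p)$ already derived in the proof of Theorem \ref{suff}; both deliver the same identification error and the same final rates.
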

Note  the assumption of bounded $\|\bOmega_u \|_{\infty}$ is stronger than the case of estimating covariance matrix. This condition might be relaxed if other methods instead of CLIME was applied. But we do not pursue the weakest possible conditions here. Many potential applications are only involved with the estimation of inverse covariance matrix $\bOmega$, for instance classification and discriminant analyses and optimal portfolio allocation in finance.

\subsection{Positive semi-definite projection under max norm}

There is an additional issue that requires careful  consideration. In the generic POET procedure, if $\hat\bGamma$ and $\hat\bLambda$ are not estimated from the same positive semi-definite (PSD) matrix $\hat\bSigma$, the residual $\hat\bSigma_u$ may not be PSD for a given sample.
%, although with high probability it is.
Thus, the following optimization should be considered to find the nearest PSD matrix of $\hat\bSigma_u$ in terms of the max norm:
\beq\label{projection}
\widetilde \bSigma_u = \argmin_{\bSigma_u \succeq {\bf 0}} \|\hat\bSigma_u - \bSigma_u\|_{\max}\,.
\eeq
The minimizer  preserves the max norm error bound since
\[
\|\widetilde\bSigma_u - \bSigma_u\|_{\max} \le \|\widetilde\bSigma_u - \hat\bSigma_u\|_{\max} + \|\hat\bSigma_u - \bSigma_u\|_{\max} \le 2 \|\hat\bSigma_u - \bSigma_u\|_{\max}\,,
\]
and everything else in the POET procedure works with $\hat\bSigma_u$ replaced by $\widetilde \bSigma_u$.
The same problem occurs in conditional graphical model estimation. Although $\hat\bOmega_u$ is  PSD with high probability, in practice we may reach a non-PSD estimator for $\bOmega_u$. So we need to explicitly perform the PSD projection of $\hat\bOmega_u$ onto the PSD cone as in (\ref{projection}).

Minimization (\ref{projection}) is challenging due to its non-smoothness. An effective smooth surrogate for the max norm objective was proposed by \cite{ZhaRoeLiu14} which can be solved efficiently. Specifically, they considered minimizing $\|\hat\bSigma_u - \bSigma_u\|_{\max}^{\mu}$ subject to $\bSigma_u \succeq {\bf 0}$ where
\[
\|\bA\|_{\max}^{\mu} = \max_{\|\bU\|_{1,1} \le 1} \langle \bU, \bA \rangle - \frac{\mu}{2}\|\bU\|_F^2\,,
\]
where $\|\bU\|_{1,1} = \sum_{i,j} |u_{ij}|$. More details can be found in  \cite{ZhaRoeLiu14}.
 Another possibility to ease computation burden is solving the dual problem of graphical lasso, that is,
\[
\max_{\bW} \log \det(\bW) \;\; \text{subject to} \;\; \|\bW - \hat\bSigma\|_{\max} \le \tau\,.
\]
By choosing $\tau \asymp w_n$, the optimal solution is a PSD matrix satisfying the max norm bound. Such a projection is still valid for the generic POET procedure to get the desired convergence rates under max norm.
%In many situations, using projection under Frobenius norm, or soft thresholding on eigenvalues, is good enough practically, although no theoretical guarantee for the rate of convergence of the projected matrix in max norm.

\section{Sub-Gaussian factor models} \label{sec3}

We have established sufficient conditions in (\ref{suffCond}) for optimal estimation of covariance matrices as well as conditional graphical models. The next natural question is whether these conditions hold for sub-Gaussian factor models.  In this subsection, we validate the conditions for sample covariance matrix under sub-Gaussian conditions.

By the spectral decomposition, $\bSigma = \bGamma_p \bLambda_p \bGamma_p'$ where $\bLambda_p = \diag(\lambda _1, \dots,\lambda _p)$ and $\bGamma_p$ is constructed by all the corresponding eigenvectors of $\bSigma$. We use subscript $p$ to explicitly denote  the dependence of $\bLambda_p$ and $\bGamma_p$ on all eigenvalues or eigenvectors rather than just spiked ones. Let $\bx_i = \bGamma_p' \by_i$. So ${\bx_i}$ has mean zero and diagonal covariance matrix $\bLambda_p$.
Since under orthonormal transformations of the data, the empirical eigenvalues of sample covariance are invariant and the empirical eigenvectors are equivariant, the analysis will be done on ${\bx_i}$'s which naturally extends to our original data ${\by_i}$'s by a simple affine transformation. The following assumption on $\bx_i$ is imposed.

\begin{assum}[Sub-Gaussian distribution]
\label{assump2prime}
Let $\bz_{i} = \bLambda_p^{-1/2} \bx_i$ be the standardized version of the transformed data $\bx_i$. $\bz_i$'s are iid samples of sub-Gaussian isotropic random vector $\bz$, i.e., $\|\bz\|_{\phi_2} = \sup_{\bu \in \mathcal S^{p-1}} \|\langle \bz, \bu \rangle \|_{\phi_2} \le M$ for some constant $M > 0$ where the sub-Gaussian norm is defined as $\|\langle \bz, \bu \rangle\|_{\phi_2} = \sup_{p \ge 1} p^{-1/2} (\mathbb E|\langle \bz, \bu \rangle|^p)^{1/p}$. Furthermore, we assume $\exists M_1, M_2 > 0$ such that for $0 \le \theta \le M_1$,
\beq \label{indApproxCond}
\mathbb E\Big[ \exp \Big(-\theta \sum_{j=1}^p (z_{j}^2 - 1) \Big)\Big] \le \exp(M_2 \theta^2 p)\,.
\eeq
\end{assum}

The above lemma require a slightly stronger condition than the classical sub-Gaussian condition for $\bz$. It has to satisfy (\ref{indApproxCond}) for technical reasons discussed in Lemma \ref{QuadForm} in Appendix \ref{secD}. This assumption is clearly satisfied if $\bz$ has independent elements of sub-Gaussian variables \citep{Ver10} although it could also hold for weakly dependent sub-Gaussian vectors.

Under this assumption, trivially the first condition in (\ref{suffCond}) holds for the sample covariance matrix $\hat\bSigma_{Y}$ of $\by_i$, i.e., $\|\hat\bSigma_{Y} - \bSigma\|_{\max} = O_P(\sqrt{\log p/n})$. We present two theoretical properties next respectively on leading empirical eigenvalues $\{\hat \lambda_j\}_{j=1}^m$ and eigenvectors $\{\hat\bxi_j\}_{j=1}^m$ of the sample covariance matrix $\hat\bSigma$ of $\bx_i$'s. These properties are useful for us to verify the remaining conditions of the high level theoretical interface described in \eqref{suffCond}.

\begin{thm}
\label{thm_eigenvalue}
Under Assumptions \ref{assump1} and \ref{assump2prime}, for $j \le m$ we have
\[
|\hat \lambda_j/\lambda_j - 1| = O_P(n^{-1/2})\,,
\]
where $\hat\lambda_j = \lambda_j(\hat\bSigma)$ is the $j$th largest eigenvalue of $\hat\bSigma$.
\end{thm}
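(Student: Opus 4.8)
The plan is to exploit the spiked structure directly. Since $m$ is fixed and the top $m$ eigenvalues are of order $p$ while the rest are $O(1)$, write $\hat\bSigma = \frac1n \sum_i \bx_i\bx_i'$ with $\bx_i = \bLambda_p^{1/2}\bz_i$ and decompose the transformed data into its ``spiked'' coordinates (the first $m$) and its ``bulk'' coordinates (the remaining $p-m$). Partition $\bLambda_p = \diag(\bLambda, \bLambda_{\text{bulk}})$, and correspondingly partition each $\bz_i$ as $(\bz_i^{(1)}, \bz_i^{(2)})$. Then $\hat\bSigma$ has a block form whose $(1,1)$ block is $\bLambda^{1/2}(\frac1n\sum_i \bz_i^{(1)}\bz_i^{(1)\prime})\bLambda^{1/2}$, of order $p$, and whose $(2,2)$ block is $\bLambda_{\text{bulk}}^{1/2}(\frac1n\sum_i \bz_i^{(2)}\bz_i^{(2)\prime})\bLambda_{\text{bulk}}^{1/2}$, of order $1$ in operator norm with high probability (by a standard sub-Gaussian sample-covariance bound, e.g.\ from \cite{Ver10}), and whose off-diagonal block is of order $\sqrt{p}$. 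The rough heuristic is that the top $m$ eigenvalues of $\hat\bSigma$ are, to first order, the eigenvalues of the $(1,1)$ block, which are $\lambda_j \cdot \lambda_j(\frac1n\sum_i \bz_i^{(1)}\bz_i^{(1)\prime})/1$ --- and the key point is that $\frac1n\sum_i \bz_i^{(1)}\bz_i^{(1)\prime}$ is an $m\times m$ sample covariance of an $m$-dimensional (fixed dimension!) sub-Gaussian isotropic vector, so it equals $\bI_m + O_P(n^{-1/2})$ entrywise and in operator norm by the multivariate CLT / concentration.

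The main steps would be: (i) Establish the three blocks' magnitudes: $\|(1,1)\text{-block} - \bLambda\|_2 = O_P(\sqrt p \cdot n^{-1/2})$ using $\|\frac1n\sum \bz_i^{(1)}\bz_i^{(1)\prime} - \bI_m\|_2 = O_P(n^{-1/2})$ and $\|\bLambda\|_2 = O(p)$; $\|(2,2)\text{-block}\|_2 = O_P(1)$; $\|(1,2)\text{-block}\|_2 = O_P(\sqrt p)$, the last via a sub-Gaussian bound on $\frac1n\sum_i \bz_i^{(1)}\bz_i^{(2)\prime}$ combined with $\|\bLambda^{1/2}\|_2 = O(\sqrt p)$ and $\|\bLambda_{\text{bulk}}^{1/2}\|_2 = O(1)$. (ii) Apply Weyl's inequality to control $|\hat\lambda_j - \lambda_j((1,1)\text{-block})|$ by the norm of the perturbation, which is the off-diagonal plus the $(2,2)$ block, hence $O_P(\sqrt p)$; but this gives only $|\hat\lambda_j - \lambda_j| = O_P(\sqrt p)$, i.e.\ relative error $O_P(p^{-1/2})$, which is \emph{not} enough --- we need $O_P(n^{-1/2})$. (iii) Therefore sharpen via a Schur-complement / quadratic-form argument: the eigenvalue equation for $\hat\bSigma$ restricted to the leading eigenspace shows that $\hat\lambda_j$ equals an eigenvalue of $(1,1)\text{-block} + (1,2)\text{-block}\,(\hat\lambda_j \bI - (2,2)\text{-block})^{-1}(2,1)\text{-block}$, and since $\hat\lambda_j \asymp p$ while $\|(2,2)\text{-block}\|_2 = O_P(1)$, the correction term is $(1,2)\text{-block}\cdot O_P(1/p)\cdot (2,1)\text{-block}$, whose operator norm is $O_P(\sqrt p)\cdot O_P(1/p)\cdot O_P(\sqrt p) = O_P(1)$. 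So $\hat\lambda_j = \lambda_j\big((1,1)\text{-block}\big) + O_P(1)$, and dividing by $\lambda_j\asymp p$ gives the correction as $O_P(p^{-1})$, negligible. (iv) Finally, $\lambda_j((1,1)\text{-block}) = \lambda_j\big(\bLambda^{1/2}(\bI_m + \bE_n)\bLambda^{1/2}\big)$ with $\|\bE_n\|_2 = O_P(n^{-1/2})$; since $\bLambda$ is diagonal with entries $\asymp p$ and well-separated (Assumption \ref{assump1} gives distinct spiked eigenvalues), a first-order eigenvalue perturbation expansion gives $\lambda_j((1,1)\text{-block}) = \lambda_j(1 + O_P(n^{-1/2}))$, completing the proof.

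The hard part will be step (iii)--(iv): making the Schur-complement bootstrap rigorous (one must first know $\hat\lambda_j \asymp p$ with the gap to the bulk also $\asymp p$ before inverting $\hat\lambda_j\bI - (2,2)\text{-block}$, so the argument is genuinely two-stage: a crude Weyl bound first, then the refined quadratic-form bound), and controlling the eigenvalue separation so that the perturbation expansion in step (iv) is valid. The separation of the spiked eigenvalues among themselves (needed so that eigenvalue perturbation is $O(\|\bE_n\|_2)$ rather than $O(\|\bE_n\|_2^{1/2})$) is exactly where Assumption \ref{assump1}'s strict inequalities $\lambda_1 > \dots > \lambda_m$ enter; since these gaps are themselves of order $p$, the perturbation $\bLambda^{1/2}\bE_n\bLambda^{1/2}$ of order $p\cdot n^{-1/2}$ is still small relative to the gaps (order $p$), so the expansion is clean. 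A secondary technical point is that the cross term $\frac1n\sum_i \bz_i^{(1)}\bz_i^{(2)\prime}$ must be controlled in operator norm uniformly; this is where condition \eqref{indApproxCond} on $\bz$ and the tools of Lemma \ref{QuadForm} referenced in the text would be invoked, rather than treating the $p-m$ bulk coordinates as genuinely independent.
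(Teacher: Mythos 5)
Your route --- primal $p\times p$ block decomposition followed by a Schur-complement bootstrap --- is genuinely different from the paper's argument, and the underlying idea is sound, but two of your quantitative block-norm claims are wrong in the regime $p\ge n$ that the paper targets, and one of those errors leads you to introduce machinery you do not actually need.

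On the block sizes: the $(2,2)$ block is $\bLambda_B^{1/2}\bigl(\tfrac1n\sum_i\bz_i^{(2)}\bz_i^{(2)\prime}\bigr)\bLambda_B^{1/2}$, and $\tfrac1n\sum_i\bz_i^{(2)}\bz_i^{(2)\prime}$ is a $(p-m)$-dimensional sample covariance built from only $n$ observations, so its operator norm scales as $O_P(p/n)$, not $O_P(1)$ --- only its entries and its eigenvalue average are $O(1)$. The off-diagonal block is $\bLambda_A^{1/2}\bigl(\tfrac1n\sum_i\bz_i^{(1)}\bz_i^{(2)\prime}\bigr)\bLambda_B^{1/2}$; the cross empirical covariance has rank at most $m$ with $O(p)$ entries of size $O_P(n^{-1/2})$, hence spectral norm $O_P(\sqrt{p/n})$ (e.g.\ $\|\tfrac1n\bZ_A'\bZ_B\|\le\tfrac1n\|\bZ_A\|\|\bZ_B\|=O_P(\sqrt{p/n})$), so the block has norm $O_P(p/\sqrt n)$, not $O_P(\sqrt p)$. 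There is also a logical slip in step (ii): you claim a crude Weyl relative error of $O_P(p^{-1/2})$ is ``not enough --- we need $O_P(n^{-1/2})$,'' but since $p\ge n$ throughout, $p^{-1/2}\le n^{-1/2}$, so that bound would already suffice. In fact, with the corrected block sizes, crude Weyl gives $|\hat\lambda_j-\lambda_j(\bA_{11})|\le O_P(p/\sqrt n)+O_P(p/n)=O_P(p/\sqrt n)$, hence relative error $O_P(n^{-1/2})$ --- exactly the target --- so the entire Schur-complement bootstrap of steps (iii)--(iv) is unnecessary. Your worry about eigenvalue separation in step (iv) is likewise not needed here: for the $m\times m$ matrix $\bLambda_A^{1/2}(\bI_m+\bE_n)\bLambda_A^{1/2}$ with $\|\bE_n\|=O_P(n^{-1/2})$, pinching gives $\lambda_j(\bLambda_A)\lambda_{\min}(\bI_m+\bE_n)\le\lambda_j\bigl(\bLambda_A^{1/2}(\bI_m+\bE_n)\bLambda_A^{1/2}\bigr)\le\lambda_j(\bLambda_A)\lambda_{\max}(\bI_m+\bE_n)$, yielding $\lambda_j(1+O_P(n^{-1/2}))$ with no separation hypothesis; the strict ordering $\lambda_1>\cdots>\lambda_m$ is used later for eigenvectors, not in this theorem.

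For comparison, the paper avoids blocks entirely by passing to the $n\times n$ Gram matrix $\widetilde\bSigma=n^{-1}\bX\bX'=\bA+\bB$, which has the same nonzero spectrum as $\hat\bSigma$, where $\bA=n^{-1}\sum_{j\le m}\lambda_j\widetilde\bz_j\widetilde\bz_j'$ sums the spiked columns and $\bB=n^{-1}\sum_{j>m}\lambda_j\widetilde\bz_j\widetilde\bz_j'$ sums the bulk. In that dual picture $\|\bB\|=O_P(p/n)$ by Lemma \ref{Key}, so Weyl's inequality directly gives $|\hat\lambda_j-\lambda_j(\bA)|\le\|\bB\|=O_P(p/n)$ and relative error $O_P(1/n)=o_P(n^{-1/2})$; the rest is the same fixed-$m$ CLT argument on $\bA$, which reduces to the $m\times m$ matrix $n^{-1}\bLambda_A^{1/2}\bZ_A'\bZ_A\bLambda_A^{1/2}$ --- precisely your $(1,1)$ block. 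The dual trick collapses both the off-diagonal cross term and the bulk into a single $n\times n$ matrix of norm $p/n$, so it never encounters the $p/\sqrt n$-scale off-diagonal you have to wrestle with in the primal picture.
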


Consider the empirical eigenvectors $\hat \bxi_j$ of $\hat\bSigma$ for $j \le m$. Each $\hat\bxi_j$ is divided into two parts $ \hat\bxi_j = ( \hat\bxi_{jA}',  \hat\bxi_{jB}')'$, where $ \hat\bxi_{jA}$ is of length $m$ corresponding to the spike component and $\hat\bxi_{jB}$ corresponds to the noise component.
\begin{thm}
\label{thm_eigenvector}
Under Assumptions \ref{assump1} and \ref{assump2prime}, for $j \le m$ we have
 \\
(i) $\|\hat\bxi_{jA} - \be_{jA}\|  = O_P(n^{-1/2})$, where $\be_{jA}$ is unit vector of length $m$;  \\
(ii) $\|\bOmega \hat \bxi_{jB}\|_{\max} = O_p(\sqrt{\log p/(np)})$ for any $\bOmega_{p\times(p-m)}$ s.t. $\bOmega' \bOmega = \bI_{p-m}$.
\end{thm}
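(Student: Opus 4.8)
\emph{Proof proposal.}
The plan is to exploit the block structure induced by the spike/non-spike split and to read the coordinates of $\hat\bxi_j$ directly off the eigen-equation $\hat\bSigma\hat\bxi_j=\hat\lambda_j\hat\bxi_j$. Since $\bSigma=\bLambda_p$ is diagonal, its $j$-th eigenvector is the canonical vector $\be_j$, so the target $\be_{jA}$ in the statement is, up to a sign that we fix by orienting $\hat\bxi_j$, the $j$-th canonical vector of $\mathbb R^m$. Partition $\hat\bSigma$, $\bLambda_p$ and each $\hat\bxi_j$ into the leading $A$-block (first $m$ coordinates) and the $B$-block (the remaining $p-m$), so that $\hat\bSigma_{AA}=\bLambda_A^{1/2}(n^{-1}\sum_i\bz_{iA}\bz_{iA}')\bLambda_A^{1/2}$, $\hat\bSigma_{BA}=\bLambda_B^{1/2}(n^{-1}\sum_i\bz_{iB}\bz_{iA}')\bLambda_A^{1/2}$ and $\hat\bSigma_{BB}=\bLambda_B^{1/2}(n^{-1}\sum_i\bz_{iB}\bz_{iB}')\bLambda_B^{1/2}$, with $\bLambda_A=\diag(\lambda_1,\dots,\lambda_m)$ and $\bLambda_B=\diag(\lambda_{m+1},\dots,\lambda_p)$. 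First I would record the elementary estimates, all under Assumptions \ref{assump1}--\ref{assump2prime}: (a) $\|n^{-1}\sum_i\bz_{iA}\bz_{iA}'-\bI_m\|=O_P(n^{-1/2})$ because $m$ is fixed, hence $\|\hat\bSigma_{AA}-\bLambda_A\|=O_P(p/\sqrt n)$; (b) a second-moment/Markov bound on the Frobenius norm gives $\|n^{-1}\sum_i\bz_{iB}\bz_{iA}'\|=O_P(\sqrt{p/n})$, hence $\|\hat\bSigma_{BA}\|=O_P(p/\sqrt n)$; (c) sub-Gaussian concentration for the subvector $\bz_{iB}$ gives $\|\hat\bSigma_{BB}\|=O_P(1+p/n)$; and by Theorem \ref{thm_eigenvalue}, $\hat\lambda_j=\lambda_j(1+o_P(1))\asymp p$. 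Since $\|\hat\bSigma_{BB}\|=o_P(p)=o_P(\hat\lambda_j)$, the matrix $\hat\lambda_j\bI-\hat\bSigma_{BB}$ is invertible with $\|(\hat\lambda_j\bI-\hat\bSigma_{BB})^{-1}\|=O_P(1/p)$; combined with the $B$-block of the eigen-equation, $(\hat\lambda_j\bI-\hat\bSigma_{BB})\hat\bxi_{jB}=\hat\bSigma_{BA}\hat\bxi_{jA}$, and $\|\hat\bxi_{jA}\|\le1$, this already yields $\|\hat\bxi_{jB}\|=O_P(p^{-1})\cdot O_P(p/\sqrt n)=O_P(n^{-1/2})$, and therefore $\|\hat\bxi_{jA}\|^2=1-O_P(1/n)$.

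For part (i), I would use the $A$-block equation $(\hat\lambda_j\bI_m-\hat\bSigma_{AA})\hat\bxi_{jA}=\hat\bSigma_{AB}\hat\bxi_{jB}$, whose right side has norm $\le\|\hat\bSigma_{BA}\|\,\|\hat\bxi_{jB}\|=O_P(p/n)$. Write $\hat\lambda_j\bI_m-\hat\bSigma_{AA}=(\hat\lambda_j\bI_m-\bLambda_A)+\bE$ with $\|\bE\|=O_P(p/\sqrt n)$, and decompose $\hat\bxi_{jA}=c_j\be_{jA}+\bw_j$ with $\bw_j\perp\be_{jA}$. Projecting the equation with $\bQ=\bI_m-\be_{jA}\be_{jA}'$ and using that $\hat\lambda_j\bI_m-\bLambda_A$ is diagonal with $j$-th entry $\hat\lambda_j-\lambda_j=O_P(p/\sqrt n)$ and all other entries $\hat\lambda_j-\lambda_\ell\asymp p$ (distinct spikes, by \eqref{eq2.1}), one gets $(\hat\lambda_j\bI_m-\bLambda_A)\bw_j=\bQ\hat\bSigma_{AB}\hat\bxi_{jB}-\bQ\bE\hat\bxi_{jA}$; inverting the diagonal restriction to $\be_{jA}^\perp$ (operator norm $O_P(1/p)$) gives $\|\bw_j\|=O_P(1/p)\big(O_P(p/n)+O_P(p/\sqrt n)\big)=O_P(n^{-1/2})$. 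Then $c_j^2=\|\hat\bxi_{jA}\|^2-\|\bw_j\|^2=1-O_P(1/n)$, so after the sign is fixed $|c_j-1|=O_P(1/n)$ and $\|\hat\bxi_{jA}-\be_{jA}\|\le|c_j-1|+\|\bw_j\|=O_P(n^{-1/2})$.

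For part (ii), fix $\bOmega$ with $\bOmega'\bOmega=\bI_{p-m}$ and let $\bv_k'$ be its $k$-th row, so $\|\bv_k\|\le1$ and $(\bOmega\hat\bxi_{jB})_k=\bv_k'\hat\bxi_{jB}$. I would rearrange the $B$-block equation as $\hat\bxi_{jB}=\hat\lambda_j^{-1}\hat\bSigma_{BA}\hat\bxi_{jA}+\hat\lambda_j^{-1}\hat\bSigma_{BB}\hat\bxi_{jB}$. The remainder obeys $|\hat\lambda_j^{-1}\bv_k'\hat\bSigma_{BB}\hat\bxi_{jB}|\le\hat\lambda_j^{-1}\|\hat\bSigma_{BB}\|\,\|\hat\bxi_{jB}\|=O_P\big(n^{-3/2}+(p\sqrt n)^{-1}\big)$, which is $O_P(\sqrt{\log p/(np)})$ provided $p=O(n^2\log p)$, a condition weaker than $\sqrt p\,\log p/n=o(1)$ that is needed anyway for the relative Frobenius rate in \eqref{rate2} to vanish. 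For the leading term, $\hat\lambda_j^{-1}\bv_k'\hat\bSigma_{BA}\hat\bxi_{jA}=\hat\lambda_j^{-1}\bg_k'\hat\bxi_{jA}$ with $\bg_k:=n^{-1}\sum_i(\bv_k'\bx_{iB})\bx_{iA}\in\mathbb R^m$; because this is linear in $\hat\bxi_{jA}$ and $\|\hat\bxi_{jA}\|\le1$, it suffices to bound $\max_{k\le p}\|\bg_k\|$, which removes the circular dependence of $\hat\bxi_{jA}$ on the data. The $\ell$-th entry of $\bg_k$ equals $\lambda_\ell^{1/2}\,n^{-1}\sum_i(\tilde\bv_k'\bz_{iB})z_{i\ell}$ with $\tilde\bv_k=\bLambda_B^{1/2}\bv_k$, $\|\tilde\bv_k\|\le C_0^{1/2}$; by isotropy these summands are i.i.d., mean zero, and sub-exponential with bounded norm, so Bernstein plus a union bound over $k\le p$ and $\ell\le m$ gives $\max_{k,\ell}|n^{-1}\sum_i(\tilde\bv_k'\bz_{iB})z_{i\ell}|=O_P(\sqrt{\log p/n})$, hence (using $\lambda_\ell\asymp p$ for $\ell\le m$ and $m$ fixed) $\max_k\|\bg_k\|=O_P(\sqrt{p\log p/n})$. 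Therefore $|(\bOmega\hat\bxi_{jB})_k|\le\hat\lambda_j^{-1}\|\bg_k\|+O_P(\sqrt{\log p/(np)})=O_P(p^{-1})\cdot O_P(\sqrt{p\log p/n})+O_P(\sqrt{\log p/(np)})=O_P(\sqrt{\log p/(np)})$ uniformly in $k$.

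The main obstacle is precisely this last step. The trivial bound $\|\bOmega\hat\bxi_{jB}\|_{\max}\le\|\hat\bxi_{jB}\|_2=O_P(n^{-1/2})$ is off by a factor of order $\sqrt{p/\log p}$, so one must genuinely use the delocalization of $\hat\bSigma_{BA}\hat\bxi_{jA}$ across coordinates. The device of replacing the random $\hat\bxi_{jA}$ by an arbitrary fixed unit vector before invoking the maximal inequality is what makes this work: it is legitimate because $\bv\mapsto\bg_k'\bv$ is linear and $m$ is fixed, so no extra logarithmic factor is incurred, and it severs the dependence between the empirical eigenvector and the empirical cross-covariance. Everything else --- the spike gaps in part (i), the invertibility of $\hat\lambda_j\bI-\hat\bSigma_{BB}$, and the Neumann remainder $\hat\lambda_j^{-1}\hat\bSigma_{BB}\hat\bxi_{jB}$ --- only needs the crude operator-norm estimates above together with Theorem \ref{thm_eigenvalue}.
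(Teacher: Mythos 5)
Your route is genuinely different from the paper's. You work directly with the primal $p\times p$ block eigen-equation, whereas the paper passes to the $n\times n$ dual $\widetilde\bSigma=n^{-1}\bX\bX'$, uses the relation $\hat\bxi_j=(n\hat\lambda_j)^{-1/2}\bX'\bu_j$, and manipulates a somewhat opaque resolvent $\bR=\sum_{k\ne j}\lambda_j(\lambda_j-\lambda_k)^{-1}\be_{kA}\be_{kA}'$, then in part (ii) eliminates $\hat\bxi_{jA}$ by substituting the $A$-block relation into the $B$-block. Your part (i) argument, projecting the $A$-block equation onto $\be_{jA}^\perp$ and inverting the diagonal gap matrix there, is clean, correct, and uses the same spectral-gap hypothesis ($\lambda_j-\lambda_\ell\asymp p$ for $\ell\ne j$) that the paper's $\bR$ implicitly requires. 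Your part (ii) also correctly isolates the key technical device: the quantity $\hat\lambda_j^{-1}\bv_k'\hat\bSigma_{BA}\hat\bxi_{jA}$ must be delocalized across $k$, and one cannot apply a union bound with $\hat\bxi_{jA}$ present, so one bounds $\max_k\|\bg_k\|$ with $\bg_k'=\bv_k'\hat\bSigma_{BA}$ and then uses linearity plus $\|\hat\bxi_{jA}\|\le 1$ (the paper does the equivalent, factoring out $\langle\hat\bxi_{jA},\be_{jA}\rangle\le1$ before invoking the rate of $\|\bgamma_k'\bX_B'\bZ_A/n\|$). Incidentally you are more careful than the paper's written proof about where the $\log p$ factor actually comes from.

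There is, however, one genuine gap, which you yourself flag. Your bound on the remainder term
\[
\bigl|\hat\lambda_j^{-1}\bv_k'\hat\bSigma_{BB}\hat\bxi_{jB}\bigr|\le\hat\lambda_j^{-1}\|\hat\bSigma_{BB}\|\,\|\hat\bxi_{jB}\|
\]
uses the \emph{operator} norm $\|\hat\bSigma_{BB}\|=O_P(1+p/n)$, which gives $O_P(n^{-1/2}p^{-1}+n^{-3/2})$; the term $n^{-3/2}$ is dominated by $\sqrt{\log p/(np)}$ only under an extra condition $p=O(n^2\log p)$ that is nowhere assumed in the theorem. The theorem, and the paper's proof, hold without it. The fix stays entirely within your framework: instead of the operator norm, bound the \emph{row} norm. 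Writing $\bv_k'\hat\bSigma_{BB}=n^{-1}(\bX_B\bv_k)'\bX_B$ and noting that $\|\bX_B\bv_k\|^2=n\,\bv_k'\hat\bSigma_{BB}\bv_k=O_P(n)$ (its expectation is $n\bv_k'\bLambda_B\bv_k\le C_0 n$) while $\|\bX_B\|^2=n\|\hat\bSigma_{BB}\|=O_P(n+p)$, one gets $\|\bv_k'\hat\bSigma_{BB}\|=O_P(\sqrt{p/n})$ for each fixed $k$, and $O_P(\sqrt{p\log p/n})$ after a union bound over $k\le p$ (this is exactly the paper's bound $\|\bgamma_k'\bX_B'\bX_B/n\|=O_P(\sqrt{p/n})$ in its equation labeled (B.7)-type claim). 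Then $\hat\lambda_j^{-1}\|\bv_k'\hat\bSigma_{BB}\|\,\|\hat\bxi_{jB}\|=O_P\bigl(p^{-1}\sqrt{p\log p/n}\cdot n^{-1/2}\bigr)=O_P\bigl(\sqrt{\log p}\,/(n\sqrt{p})\bigr)=O_P\bigl(\sqrt{\log p/(np)}\bigr)$ unconditionally, and your proof of part (ii) goes through without the extraneous hypothesis.
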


The theorems state that under the pervasive condition that spiked eigenvalues are of order $p$, we are able to approximately recover the true leading eigenvalues and eigenvectors. In \cite{FanWan15}, the same phenomenon is observed when $\bz_i$'s are sub-Gaussian vector with independent elements. But here we do not require element-wise independence and relax the condition to any sub-Gaussian isotropic random vectors satisfying (\ref{indApproxCond}). The proofs of the above two theorems can be found in Appendix \ref{secB}.

Given the above two theorems, let us validate the second and third conditions in (\ref{suffCond}). Define $\hat\bLambda_{SG} = \diag(\hat\lambda_1,\dots, \hat\lambda_m)$ where $SG$ is short for sub-Gaussian. The second condition holds for $\hat\bLambda_{SG}$ according to Theorems \ref{thm_eigenvalue}. Note that $\hat\bSigma_{Y}$ and $\hat\bSigma$ share the same set of empirical eigenvalues.
To check the third one, let $\hat\bGamma_{SG} = (\hat\bxi_1^{(Y)}, \dots, \hat\bxi_m^{(Y)})$ be the matrix consists of the top $m$ leading eigenvectors of $\hat\bSigma_{Y}$. If the whole eigen-space of $\bSigma$ is written as $\bGamma_p = (\bGamma, \bOmega)$, then $\hat\bxi_j^{(Y)} = \bGamma_p \hat\bxi_j =  \bGamma \hat\bxi_{jA} + \bOmega \hat\bxi_{jB}$. Therefore $\hat\bxi_j^{(Y)} - \bxi_j =  \bGamma (\hat\bxi_{jA} - \be_{jA}) +\bOmega \hat\bxi_{jB}$ and
\beq \label{eigenvector_max}
\begin{aligned}
\|\hat\bGamma_{SG} - \bGamma\|_{\max} & = \max_j \|\hat\bxi_j^{(Y)} - \bxi_j\|_{\max} \\
&\le \max_j \Big(\sqrt{m} \|\bGamma\|_{\max}\|\hat\bxi_{jA} - \be_{jA}\| + \|\bOmega\hat\bxi_{jB}\|_{\max} \Big) \,,
\end{aligned}
\eeq
which is $O_P(\sqrt{\log p/(np)})$ due to Theorem \ref{thm_eigenvector} and the fact $\|\bGamma\|_{\max} = O(1/\sqrt{p})$ shown in Theorem \ref{suff}. Hence, we have shown that the sample covariance based estimators $\hat\bSigma_{Y}, \hat\bLambda_{SG}$ and $\hat\bGamma_{SG}$  satisfy the sufficient conditions (\ref{suffCond}). Together with Theorem \ref{suff}, this explains why POET achieves all the desired rates (\ref{rate1}) and (\ref{rate2}).

We finally devote a remark to the assumption of zero mean of the observed data implied by Assumption \ref{assump2prime}. This condition  is only made to simplify the presentation of proofs. In practice, we first center the data by $\bar\by = n^{-1} \sum_i \by_i$. All the conclusions of this section hold for the centered data as well.

\section{Elliptical factor models} \label{sec4}
In the previous section, we assume $\bx_i$ to be a sub-Gaussian random vector, which is a strong distributional assumption for many applications. In this section, we  replace the sub-Gaussian assumption \ref{assump2prime} by elliptical distribution assumption \ref{assump2} and propose a novel robust estimator for the analysis of factor models.

We first briefly review the elliptical distribution family, which generalize the multivariate normal distribution and multivariate t-distribution. Compared to the sub-Gaussian setting, it is more challenging to design pilot estimators to simultaneously satisfy the three requirements in (\ref{suffCond}). To handle this challenge, we separately construct two estimators $\hat\bSigma_1$ and $\hat\bSigma_2$. $\hat\bSigma_1$ and its leading eigenvalues satisfies the first two requirements in (\ref{suffCond}) while the eigenvectors of $\hat\bSigma_2$ satisfies the last condition of (\ref{suffCond}).

%For general elliptical distribution, the difficulty of inference lies in the unknown random $\zeta$ defined in (\ref{Eq:ellip}). In what follows,

\subsection{Elliptical distribution}
We define the elliptical distribution as follows. Let $\bmu \in \mathbb R^p$ and $\bSigma \in \mathbb R^{p \times p}$ with $\rank(\bSigma) = q \le p$. A $p$-dimensional random vector $\by$ has an elliptical distribution, denoted by $\by \sim EC_p(\bmu, \bSigma, \zeta)$, if it has a stochastic representation
\beq \label{Eq:ellip}
\by \overset{d} = \bmu + \zeta \bA \bU\,,
\eeq
where $\bU$ is a uniform random vector on the unit sphere in $\mathbb R^q$, $\zeta \ge 0$ is a scalar random variable independent of $\bU$, $\bA \in \mathbb R^{p \times q}$ is a deterministic matrix satisfying $\bA \bA' = \bSigma$. Here $\bSigma$ is called the
scatter matrix. Note that the representation in \eqref{Eq:ellip} is not identifiable since we can rescale $\zeta$ and $\bA$.  To make the model identifiable, we require $\mathbb E\zeta^2 = q$ so that $\Cov(\bY) = \bSigma$. In addition, we assume  $\bSigma$ is non-singular,  i.e., $q=p$. If $q < p$, as long as they are of the same order, all results in the following still hold. In this paper, we only consider continuous elliptical distributions with $\mathbb P(\zeta = 0) = 0$.

An equivalent definition of an elliptical distribution is through its characteristic function, which admits the form
$\exp(i \bt' \mu) \psi(\bt' \bSigma \bt)$, where $\psi$ is a properly defined characteristic function and $i := \sqrt{-1}$. $\zeta$ and $\psi$ are mutually determined by each other. In this setting, we denote by $\by \sim EC_p(\bmu, \bSigma, \psi)$. The marginal and conditional distributions of an elliptical distribution are also elliptical. Therefore, in factor model (\ref{eq:factor}), if $\bff_t$ and $\bu_t$ are uncorrelated and jointly elliptical, i.e., $(\bff_t', \bu_t')' \sim EC_p({\bf 0}, \diag(\bI_m, \bSigma_u), \zeta)$, then we have $\by_t \sim EC_p({\bf 0}, \bSigma, \zeta)$.

Compared to the Gaussian family, the elliptical family provides more flexibility in modeling complex data. The main advantage of the elliptical family is its ability to model heavy-tail data  and the tail dependence between variables \citep{HulLin02}, which makes it useful for modeling many modern datasets, including financial data \citep{Rac03, CizHarWer05}, genomics data \citep{LiuHawGhoYou03, PosFelSyk11}, and fMRI brain-imaging data \citep{Rut98}.

The following assumption is considered in this section.

\begin{assum}[Elliptical distribution]
\label{assump2}
The data $\by_i$'s are elliptically distributed, i.e., ${\by_i} \sim EC_p(\bmu, \bSigma, \zeta)$ or $\by_i \overset{d} = \bmu + \zeta_i \bSigma^{\frac 1 2} \bU_i$ with $\bU_i$ uniformly distributed on the unit sphere $\mathcal S^{p-1}$ and the random variable $\zeta_i \ge 0$ independent from $\bU_i$. Additionally, we assume $\mathbb E[\zeta_i^2] = p$ due to identifiability and  $\max_{j \le p} \mathbb E y_{ij}^4$ is bounded.
\end{assum}

The above assumption is implied by imposing a joint elliptical model of the factors and noises, i.e., $(\bff_t', \bu_t')' \sim EC_p({\bf 0}, \diag(\bI_m, \bSigma_u), \zeta)$. Obviously, the elliptical family is more general than Gaussian assumption and contains heavy tail distributions.  One typical example is multivariate t-distribution with degrees of freedom $\nu > 4$. The moment condition is imposed only for the sake of estimating marginal variances by methods discussed in Section \ref{sec4.1}. This assumption may be relaxed if other methods are applied.

\subsection{Robust estimation of variances} \label{sec4.1}

Let $\bSigma = \bD\bR\bD$ where $\bR$ is the correlation matrix and $\bD = \diag(\sigma_1,\dots, \sigma_p)$ is the diagonal matrix consists of standard deviations for each dimension. Our construction of $\hat\bSigma_1$ is based on separately estimating  $\bD$ and $\bR$. In this subsection, we first introduce a robust estimator $\hat\bD$ to estimate $\bD$.

Since $\by_i \sim EC_p(\bmu, \bSigma, \zeta)$ exhibits heavy tails, we need a method to robustly estimate $\bmu$ in order to center the data and estimate the covariance matrix. Substantial amount of research has been conducted on this subject in both low dimensional setting \citep{Hub64, ZouYua08, WuLiu09} and high dimensional setting \citep{BelChe11, FanFanBar14}. In addition, \cite{Koe05} has considered problem from a quantile regression perspective.  In this section, we introduce two M-estimator methods proposed by \cite{FanLiWan14} and \cite{Cat12}, who borrow the original idea from \cite{Hub64}. The methods are also useful for robust estimation of variances.

Let us denote $\bmu = (\mu_1, \dots, \mu_p)'$ and $\by_i = (y_{i1},\dots ,y_{ip})'$ for $i = 1, \dots, n$. We estimate each $\mu_j$  using the data $\{y_{1j} ,\dots , y_{nj}\}$. The M-estimator $\hat\mu = (\hat\mu_1, \dots, \hat\mu_p)'$ of \cite{FanLiWan14} is obtained by solving
\beq \label{mEstimator}
\sum_{i=1}^n h[\alpha (y_{ij} - \hat\mu_j)] = 0\,
\eeq
for each $j \le p$, where $h: \mathbb R \to \mathbb R$ is the derivative function of the Huber loss satisfying $h(x) = x$ if $|x| \le 1$, $h(x) = 1$ if $x > 1$ and $h(x) = -1$ if $x < -1$.
 The above estimator can be equivalently obtained by minimizing the Huber loss
\[
\ell_\alpha(x) = \left\{
  \begin{array}{lr}
    2 \alpha^{-1} |x| - \alpha^{-2} & : |x| > \alpha^{-1};\\
    x^2 & : |x| \le \alpha^{-1}.
  \end{array}
\right.
\]
According to \cite{FanLiWan14}, choosing $\alpha = \sqrt{\log(\epsilon^{-1})/(nv^2)}$ for $\epsilon \in (0,1)$ such that $\log(\epsilon^{-1}) \le n/8$, where $v$ is an upper bound of $\max\{\sigma_1^2, \dots, \sigma_p^2\}$, we have
\beq
\mathbb P\Big(|\hat\mu_j - \mu_j| \le 4v\sqrt{\frac{\log (\epsilon^{-1})}{n}} \Big) \ge 1-2\epsilon \,.
\eeq

\cite{Cat12} proposed another  M-estimator by solving
(\ref{mEstimator}) with a different strictly increasing $h(x)$ such that $- \log(1-x +x^2/2) \le h(x) \le \log(1+x +x^2/2)$. For a value $\epsilon \in (0,1)$ such that $n > 2 \log (1/\epsilon)$, let
\[
\alpha = \sqrt{\frac{2\log(\epsilon^{-1})}{n(v + \frac{2v\log(\epsilon^{-1})}{n-2\log(\epsilon^{-1})})}}\,,
\]
where $v$ is again an upper bound of $\max\{\sigma_1^2, \dots, \sigma_p^2\}$. \cite{Cat12} showed that the solution of (\ref{mEstimator}) satisfies
\beq
\mathbb P\Big(|\hat\mu_j - \mu_j| \le \sqrt{\frac{2v\log (\epsilon^{-1})}{n-2\log(\epsilon^{-1})}} \Big) \ge 1-2\epsilon \,.
\eeq

Therefore, by taking $\epsilon = 1/(n \vee p)^2$, $|\hat\bmu -\bmu|_{\infty} \le C\sqrt{\log p/n}$ with probability at least $1 - 2(n \vee p)^{-1}$ for both methods. We implement Catoni's estimator in the simulation by taking $h(x) = \sgn(x) \log(1 + |x| + x^2/2)$. For the choice of $v$, we simply take $v = \max\{\tilde\sigma_1^2, \dots, \tilde\sigma_p^2\}$, where $\tilde\sigma_j^2$ are the sample covariance of the $j$th dimension.

To estimate $\sigma_j^2$, we apply the above M-estimation methods on the squared data.  Note that $\sigma_j^2 = \mathbb E(Y_{ij}^2) - \mu_j^2$. We have estimated $\mu_j$ above. To estimate $\mathbb E(Y_{ij}^2)$, we employ the M-estimator (\ref{mEstimator}) on the squared data $\{y_{1j}^2, \dots, y_{nj}^2\}$, denoted by $\hat\eta_j$. This works as the fourth moment of $y_{ij}$ is assumed finite. The robust variance estimator is then defined as %in \cite{FanKeLiuXia15},
\beq
\hat\sigma_j^2 = \max\{ \hat\eta_j - \hat\mu_j^2, \delta_0\}\,,
\eeq
where $\delta_0 > 0$ is a small constant ($\delta_0 < \min\{\sigma_1^2, \dots, \sigma_p^2\}$). If $n \ge C \log d$, let $\hat\bD = \diag(\hat\sigma_1, \dots, \hat\sigma_p)$, we have
\beq \label{eq:D}
\|\hat \bD - \bD \| = O_P(\sqrt{\log p/n}).
\eeq
Additionaly, due to the structure of $\bSigma = \bB'\bB + \bSigma_u$, where $\|\bSigma_u\| \le C$ and $\|\bB\|_{\max} \le C$, it is easy to see $\|\bD\| = O(1)$ and $\|\hat\bD\| = O_P(1)$.

\subsection{Marginal Kendall's tau estimator} \label{sec4.2}

We now provide a pilot estimator to robustly estimate the correlation matrix $\bR = (r_{jk})$ when data follow elliptical distributions. The idea of Kendall's tau statistic was introduced by \cite{Ken48}  for estimating pairwise comovement correlation. Kendall's tau correlation coefficient is defined as
\beq
\hat\tau_{jk} := \frac{2}{n(n-1)} \sum_{i < i'} \sgn((Y_{ij} - Y_{i'j})(Y_{ik} - Y_{i'k})) \,,
\eeq
whose population counterpart is
\beq
\tau_{jk} := \mathbb P((Y_{1j} - Y_{2j})(Y_{1k} - Y_{2k}) > 0) -  \mathbb P((Y_{1j} - Y_{2j})(Y_{1k} - Y_{2k}) < 0)\,.
\eeq
Note that the estimator does not depend on the location $\bmu$. So without loss of generality, we assume $\bmu = \bf 0$. Then $\by \sim EC_p({\bf 0}, \bSigma, \zeta)$ with independent and identically distributed samples $\by_1, \dots, \by_n$.

Denote by $\bT = (\tau_{jk})$ and $\hat\bT = (\hat\tau_{jk})$. For elliptical family, it is known that the nonlinear relationship $r_{jk} = \sin(\frac{\pi}{2} \tau_{jk})$ holds for the Pearson correlation and Kendall's correlation \citep{FanKotNg90, HanLiu14}. Therefore, a natural estimator for $\bR$ is $\hat\bR = (\hat r_{jk})$ where
\beq
\hat r_{jk} = \sin\Bigl(\frac{\pi}{2} \hat\tau_{jk}\Bigr)\,.
\eeq

By Theorem 3.2 of \cite{HanLiu13a}, with probability larger than $1-2\epsilon - \epsilon^2$ for any $\epsilon \in (0,1)$,
\begin{align*}
\lefteqn{\|\hat\bR -\bR\|_2}  \\
 && \le  \pi^2 \|\bR\|_2 \Big( 2\sqrt{\frac{(\tr{\bR}/\|\bR\|_2+1)\log(p/\epsilon)}{3n}} + \frac{(\tr{\bR}/\|\bR\|_2+1)\log(p/\epsilon)}{n} \Big) \,.
\end{align*}
Using the fact $\|\bD\|^{-2} \|\bSigma\| \le \|\bR\| \le \|\bD^{-1}\|^2\|\bSigma\| $, we know $\|\bR\| \asymp \|\bSigma\| \asymp p$ since all the eigenvalues of $\bD$ are bounded away from infinity and zero. This is true because $\lambda_{\min}(\bD^2) \ge \lambda_{\min}(\bSigma) \ge c_0$ and $\|\bD\| = O(1)$ as derived in Section \ref{sec4.1}. This implies
\beq \label{eq:R}
\|\hat\bR -\bR\|_2 = O_P\Big(\sqrt{\frac{p^2 \log p}{n}}\Big)\,.
\eeq
Combining the rates in (\ref{eq:D}) and (\ref{eq:R}), we conclude
\begin{eqnarray*}
\lefteqn{\Big|\lambda_j(\hat\bD\hat\bR\hat\bD) - \lambda_j(\bD\bR\bD) \Big| \le \|\hat\bD\hat\bR\hat\bD - \bD\bR\bD \|} \\
 &= &  O_P(\|(\hat\bD - \bD) \bR\bD\| + \|\hat\bD (\hat\bR - \bR) \hat\bD\|) \\
 &=  & O_P\Big(\sqrt{\frac{p^2\log p}{n}}\Big) \,.
\end{eqnarray*}
Define $\hat\bSigma_1 = \hat\bD\hat\bR\hat\bD$. The estimator $\hat\bLambda_{ED} = \diag(\lambda_1(\hat\bSigma_1), \dots, \lambda_m(\hat\bSigma_1))$, which consists the first $m$ eigenvalues of $\hat\bSigma_1$, satisfies
\beq
\|(\hat\bLambda_{ED} - \bLambda)\bLambda^{-1}\| = O_P\Big(\sqrt{\frac{\log p}{n}}\Big)\,.
\eeq
Here $ED$ is short for elliptical distribution. This makes the second sufficient condition in (\ref{suffCond}) hold. Furthermore, we can easily check that the first sufficient condition holds for $\hat\bSigma_1$ using the concentration of U-statistics, i.e.
\beq
\|\hat\bSigma_1 - \bSigma \|_{\max} = O_P(\sqrt{\log p/n})\,.
\eeq

Although the marginal Kendall's tau based estimator $\hat\bSigma_1 $ has good properties for eigenvalues, it is hard to prove the third sufficient condition for eigenvectors in \eqref{suffCond} due to the complicated nonlinear $\sin(\cdot)$ transformation. Luckily, we do not require $\hat\bGamma$ and $\hat\bLambda$ in \eqref{suffCond}  to come from the same covariance estimator. In the next section, we propose another covariance estimator $\hat\bSigma_2$ whose eigenvectors satisfy the  third sufficient condition  in \eqref{suffCond}.

\subsection{Multivariate Kendall's tau estimator} \label{sec4.3}

To find an estimator $\hat\bGamma_{ED}$ that satisfies the third condition in (\ref{suffCond}),
we resort to the multivariate Kendall's tau estimator. We focus our analysis again on the transformed data $\bx_i = \bGamma_p' \by_i$.
The population multivariate Kendall's tau matrix is defined as
\beq \label{eq4.13}
\bK := \mathbb E\Big( \frac{(\bx_1 - \bx_2)(\bx_1 - \bx_2)'}{\|\bx_1 - \bx_2\|_2^2} \Big) \,.
\eeq
The sample version of the multivariate Kendall's tau estimator is a second-order U-statsitc:
\beq\label{eq::mkendall}
\hat\bK :=  \frac{2}{n(n-1)} \sum_{i < i'} k(\bx_i, \bx_{i'})\,,
\eeq
where
\beq
k(\bx_i, \bx_{i'}) = \frac{(\bx_i - \bx_{i'})(\bx_i - \bx_{i'})'}{\|\bx_i - \bx_{i'}\|_2^2}\,. \nonumber
\eeq

Several important properties of the above estimator is worth mentioning. First this estimator is location invariant, which allows us to assume $\bmu = \bf 0$ without generality, i.e., $\bx_i \overset{d} = \zeta_i \bLambda_p^{\frac12} \bU_i$. Secondly, the eigenvectors of the estimator $\hat{\bK}$ is equivariant to orthogonal transformation. % as the sample covariance matrix. 
So if we define the multivariate Kendall's tau estimator based on the observed data $\by_i$ as
$$
    \hat\bSigma_2 = \frac{2}{n(n-1)} \sum_{i < i'} k(\by_i, \by_{i'}) = \bGamma_p \hat\bK \bGamma_p',
$$
we have $\hat\bxi_j^{(Y)} = \bGamma_p \hat\bxi_j$, where $\hat\bxi_j$ and $\hat\bxi_j^{(Y)}$ are the $j^{th}$ empirical eigenvector of $\hat\bK$ and $\hat\bSigma_2$, respectively.

The most important feature of the U-statistic estimator in \eqref{eq::mkendall} is that
 its kernel $k(\bx_i, \bx_{i'})$  is distribution-free.  To see this, we have
$$
    \bX - \widetilde \bX \overset{d}= \zeta \bLambda_p^{\frac12} \bU -\tilde \zeta \bLambda_p^{\frac12} \widetilde \bU \overset{d} = \bar \zeta \bLambda_p^{\frac12} \bU,
$$
where $\widetilde \bX$ is an independent copy of $\bX$ and the characteristic function of $\bar \zeta$ is determined by that of $\zeta$. See \cite{HulLin02} for the detailed expression of the characteristic function. Thus,
\[
k(\bX, \widetilde\bX) = \frac{(\bX - \widetilde\bX)(\bX - \widetilde\bX)'}{\|\bX - \widetilde\bX\|_2^2} \overset{d} = \frac{\bLambda_p^{\frac12} \bU \bU' \bLambda_p^{\frac12}}{\bU' \bLambda_p \bU} \overset{d} = \frac{\bLambda_p^{\frac12} \bg \bg' \bLambda_p^{\frac12}}{\bg' \bLambda_p \bg} \,,
\]
which  depends only on the multivariate standard normal vector $\bg$. The last equality is due to $\bU \overset{d} = \bg/\|\bg\|$. Thus $\bK$ defined by \eqref{eq4.13} is a diagonal matrix by the symmetry of $\bg$.

Write $\bK = \diag(\theta_1, \dots, \theta_p)$, where $\theta_j$ is defined as
\[
\theta_j = \mathbb E\biggl( \frac{\lambda_j g_{ij}^2}{\sum_{k=1}^p \lambda_k g_{ik}^2} \biggr)\, ,
\]
which is a multiple of $\lambda_j$.
Obviously, $\bK$ shares the same eigenvalue ordering as that of $\Cov(\bx_i) = \bLambda_p$, and thus the same eigenspaces as those of $\Cov(\bx_i)$. So estimating the leading eigenvectors of $\Cov(\bx_i)$ is equivalent to estimating those of $\bK$. In sum, $\hat\bSigma_2$ particularly fits the goal of  estimating the eigenvectors of $\bSigma$.

The above multivariate Kendall's tau statistic is first introduced in \cite{ChoMar98} and has been used for low dimensional covariance estimation \citep{VisKoiOja00} and principal component estimation \citep{Mar99, CroOllOja02}. Many testing literature based on rank statistics is also related to the estimator, for example \cite{Tyl82, HalPai06}. The literature listed here is only illustrative rather than complete.

We now consider the theoretical properties of the eigenvectors $\hat\xi_j$ of $\hat\bK$. As before, $\hat\bxi_j$ is divided into the spiked part $\hat\bxi_{jA}$ and noise part $\hat\bxi_{jB}$.

\begin{thm}
\label{thm_eigenvectorHT}
Under Assumptions \ref{assump1} and \ref{assump2}, for $j \le m$ we have \\
(i) $\| \hat\bxi_{jA} - \be_{jA}  \| = O_P(n^{-1/2})$, where $\be_{jA}$ is a unit vector of length $m$;  \\
(ii) $\|\bOmega \hat \bxi_{jB}\|_{\max} = O_P(\sqrt{\log p/(np)})$ for any $\bOmega_{p\times(p-m)}$ s.t. $\bOmega' \bOmega = \bI_{p-m}$.
\end{thm}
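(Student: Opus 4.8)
\emph{Step 1 (reduction; the population $\bK$ is spiked).} The plan is to run, for $\hat\bK$, the same block‑perturbation argument that underlies Theorem~\ref{thm_eigenvector}, with the population covariance $\bLambda_p$ replaced by the multivariate Kendall's tau matrix $\bK$, and sub‑Gaussian concentration of $\hat\bSigma$ replaced by concentration of the second‑order $U$‑statistic $\hat\bK$. The latter will come cheaply because its kernel $k(\bx,\bx')$ is a rank‑one orthogonal projection -- so $\|k\|\le1$, $k^2=k$, $\mathbb{E}\,k=\bK$ -- and, crucially, is distribution‑free, $k(\bX,\widetilde\bX)\overset{d}=\bLambda_p^{1/2}\bg\bg'\bLambda_p^{1/2}/(\bg'\bLambda_p\bg)$ with $\bg$ standard normal, so all moments of the kernel reduce to Gaussian (or uniform‑on‑sphere) quadratic‑form calculations. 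The first task is to check that $\bK=\diag(\theta_1,\dots,\theta_p)$ is itself spiked: using $\lambda_j=p\mu_j(1+o(1))$ for $j\le m$ and $p^{-1}\sum_{k>m}\lambda_kg_k^2\to\bar c$ a.s., $\theta_j\to\mathbb{E}[\mu_jg_j^2/(\sum_{k\le m}\mu_kg_k^2+\bar c)]$, a constant in $(0,1)$ bounded away from zero, and a short pairing argument in the $g_j$'s shows $\mu_1>\dots>\mu_m$ forces these limits strictly decreasing; for $j>m$, bounding the denominator below by $c_0\sum_{k>m}g_k^2$ gives $\theta_j\asymp\lambda_j/p\asymp1/p$. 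Hence $\bK$ has eigengap $\theta_m-\theta_{m+1}\asymp1$, mutually $\Theta(1)$‑separated top $m$ eigenvalues, and $j$‑th eigenvector $\be_j=(\be_{jA}',\bzero')'$ for $j\le m$.

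\emph{Step 2 (concentration of the blocks of $\hat\bK$).} Matrix Bernstein applied to the H\'ajek projection of $\hat\bK-\bK$ (the degenerate remainder being of smaller order), with variance proxy $\|\mathbb{E}\,k^2\|=\|\bK\|\asymp1$, yields $\|\hat\bK-\bK\|=O_P(\sqrt{\log p/n})=o_P(1)$; by Weyl's inequality the top $m$ empirical eigenvalues $\hat\theta_j$ stay $\Theta(1)$ and separated, while $\hat\theta_{m+1}=o_P(1)$. Lower‑bounding the denominator of the distribution‑free kernel by $c_0\sum_{k>m}g_k^2$ and exploiting coordinate symmetry, one finds that the off‑diagonal block has entrywise second moment $O(1/(np))$, so $\|\hat\bK_{BA}\|\le\|\hat\bK_{BA}\|_F=O_P(n^{-1/2})$; the same calculation gives variance proxy $\|\mathbb{E}[(\bP_B k\bP_B)^2]\|\le\theta_{m+1}\asymp1/p$ for the noise block, whence (after truncating its envelope) $\|\hat\bK_{BB}\|=O_P(\sqrt{\log p/(np)})$; and $\hat\bK_{AA}$ is $O_P(n^{-1/2})$‑close to $\diag(\theta_1,\dots,\theta_m)$, being a fixed‑size block of $U$‑statistics.

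\emph{Step 3 (block perturbation; parts (i) and (ii)).} Write $\hat\bxi_j=(\hat\bxi_{jA}',\hat\bxi_{jB}')'$. The eigen‑equation gives $\hat\bxi_{jB}=(\hat\theta_j\bI-\hat\bK_{BB})^{-1}\hat\bK_{BA}\hat\bxi_{jA}$ with $\|(\hat\theta_j\bI-\hat\bK_{BB})^{-1}\|=O_P(1)$ since $\hat\theta_j\asymp1\gg\|\hat\bK_{BB}\|$, so $\|\hat\bxi_{jB}\|=O_P(n^{-1/2})$. Substituting back, $(\hat\bK_{AA}-\hat\theta_j\bI)\hat\bxi_{jA}=-\hat\bK_{AB}\hat\bxi_{jB}$ has right‑hand side $O_P(n^{-1})$, and since $\hat\bK_{AA}$ is $O_P(n^{-1/2})$‑close to a diagonal matrix with distinct entries, the $m$‑dimensional eigenvector perturbation bound (with $\Theta(1)$ gap) gives part (i), $\|\hat\bxi_{jA}-\be_{jA}\|=O_P(n^{-1/2})$, after the usual sign choice and using $\|\hat\bxi_{jA}\|=1-O_P(n^{-1})$. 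For part (ii), expand $(\hat\theta_j\bI-\hat\bK_{BB})^{-1}=\hat\theta_j^{-1}\bI+\hat\theta_j^{-1}(\hat\theta_j\bI-\hat\bK_{BB})^{-1}\hat\bK_{BB}$, so
\[
\bOmega\hat\bxi_{jB}=\hat\theta_j^{-1}\bOmega\hat\bK_{BA}\hat\bxi_{jA}+\hat\theta_j^{-1}\bOmega(\hat\theta_j\bI-\hat\bK_{BB})^{-1}\hat\bK_{BB}\hat\bK_{BA}\hat\bxi_{jA}.
\]
The second term has $\|\cdot\|_{\max}\le\|\cdot\|_2=O_P(1)\,O_P(\sqrt{\log p/(np)})\,O_P(n^{-1/2})=o_P(\sqrt{\log p/(np)})$ since $\bOmega$ has orthonormal columns, and by part (i) the $l$‑th coordinate of the first term equals $\hat\theta_j^{-1}(\bOmega\hat\bK_{BA})_{lj}$ up to $O_P(n^{-1/2})\max_{l,k}|(\bOmega\hat\bK_{BA})_{lk}|$. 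So it remains to prove $\max_{l\le p,\,k\le m}|(\bOmega\hat\bK_{BA})_{lk}|=O_P(\sqrt{\log p/(np)})$. Each such term is a mean‑zero second‑order $U$‑statistic with kernel $H_{lk}(\bx,\bx')=v_k(\bOmega v_B)_l/\|v\|^2$, $v:=\bx-\bx'$, bounded by $1$ with second moment $O(1/p)$; and the distribution‑free representation $v\overset{d}=\bar\zeta\bLambda_p^{1/2}\bU$ shows $|(\bOmega v_B)_l|/\|v\|\lesssim\sqrt{\log(np)/p}$, hence $|H_{lk}|\lesssim\sqrt{\log(np)/p}$, with probability $\ge1-(np)^{-10}$, by concentration of one direction of a uniform‑on‑sphere vector together with the lower bound $\|v\|^2/\bar\zeta^2\ge c_0\sum_{k>m}U_k^2$. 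Truncating the kernel at this level and applying a Bernstein inequality for $U$‑statistics (effective sample size $n/2$, variance $O(1/(np))$, envelope $O(\sqrt{\log(np)/p})$) gives, at the scale $\sqrt{\log p/(np)}$, a tail $\exp(-c\log(np))$, and a union bound over the $\le np$ indices completes the argument.

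\emph{Main obstacle.} The part (i) perturbation (first half of Step~3) is routine once Step~2 is in hand; the real work is the $\sqrt{\log p/(np)}$ rate in part (ii). A naive $U$‑statistic Bernstein bound on $(\bOmega\hat\bK_{BA})_{lk}$ is off by a factor $\sqrt p$ when $p\gg n$ because the kernel $H_{lk}$ is only worst‑case bounded by $1$; recovering the right rate forces the truncation step -- showing $H_{lk}$ is of size $O(\sqrt{\log(np)/p})$ with overwhelming probability, an anti‑concentration/concentration statement for coordinates of a uniform‑on‑sphere vector -- together with a Bernstein inequality sharp enough to trade that truncation level against the $O(1/(np))$ variance. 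Packaging these moment and truncation estimates into clean lemmas (the elliptical analogue of Lemma~\ref{QuadForm}), and likewise establishing the refined operator‑norm bound $\|\hat\bK_{BB}\|=O_P(\sqrt{\log p/(np)})$ that makes the remainder term in Step~3 negligible, is where the bulk of the technical effort will concentrate.
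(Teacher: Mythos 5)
The key gap is in Step~2's claim that $\|\hat\bK_{BB}\| = O_P(\sqrt{\log p/(np)})$, which Step~3 then uses to dismiss the remainder term. This bound cannot hold: by the distribution-free representation, each summand contributing to $\hat\bK_{BB}$ is the rank-one matrix $\bLambda_B^{1/2}\bg_B\bg_B'\bLambda_B^{1/2}/(\bg'\bLambda_p\bg)$, whose operator norm equals its trace $(\sum_{k>m}\lambda_k g_k^2)/(\sum_k\lambda_k g_k^2) = \Theta(1)$, since both sums are $\Theta(p)$. So the kernel has envelope $\Theta(1)$, and because this trace is bounded away from zero with constant probability, truncation cannot reduce it. Matrix Bernstein with variance proxy $O(1/p)$ and envelope $\Theta(1)$ gives $O_P(\sqrt{\log p/(np)} + \log p/n)$, and the envelope term dominates whenever $p \ge n$; in fact, using the permutation representation $\hat\bK_\sigma = \bar n^{-1}\sum_r\bw_r^\sigma(\bw_r^\sigma)'$, one has deterministically $\|(\hat\bK_\sigma)_{BB}\| \ge \bar n^{-1}\max_r\|\bw_{rB}^\sigma\|^2 = \Theta_P(1/n)$. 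With the correct $\|\hat\bK_{BB}\| = O_P(1/n)$, your remainder term in Step~3, bounded via $\|\cdot\|_{\max}\le\|\cdot\|_2$, is only $O_P(n^{-3/2})$, which exceeds the target $\sqrt{\log p/(np)}$ once $p \gg n^2\log p$ — a regime the theorem must cover.

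The paper sidesteps this difficulty entirely with a rotation-invariance argument: it rescales the data by $\diag(\bI_m,\bOmega_0)$ with $\bOmega_0 = \diag(\sqrt{\bar c/\lambda_{m+1}},\dots,\sqrt{\bar c/\lambda_p})$ so the noise block of the scatter becomes isotropic $\bar c\bI_{p-m}$; the multivariate Kendall's tau $\hat\bK^R$ of rescaled data is then equivariant under orthogonal rotations of the $B$-coordinates, so (as in Paul's argument) $\hat\bxi_{jB}^R/\|\hat\bxi_{jB}^R\|$ is exactly uniform on the unit sphere of dimension $p-m$, giving the $\sqrt{\log p/(np)}$ max-norm bound from Gaussian concentration rather than from a sharp bound on $\|\hat\bK_{BB}\|$. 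The remaining step is the perturbation estimate $\|\bD_j\hat\bxi_{jB}-\hat\bxi_{jB}^R\|=O_P(\sqrt{\log n/(np)})$ controlling the effect of un-rescaling. Repairing your block-perturbation route would require bounding $\max_l|\bgamma_l'(\hat\theta_j\bI-\hat\bK_{BB})^{-1}\hat\bK_{BB}\hat\bK_{BA}\hat\bxi_{jA}|$ directly in max norm (not via $\ell_2$), which reintroduces exactly the coordinate-wise concentration problem the paper's rotation trick is designed to bypass. The rest of your outline (spikedness of $\bK$, part~(i), the truncated Bernstein bound on $\bOmega\hat\bK_{BA}$) is sound, and the permutation decomposition $\hat\bK\overset{d}=\card(\mathcal S_n)^{-1}\sum_\sigma\hat\bK_\sigma$ used by the paper is a cleaner substitute for the H\'ajek projection you invoke.
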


The proof for Theorem \ref{thm_eigenvectorHT} is relegated to Appendix \ref{secC}. Define $\hat\bSigma_2$ as the multivariate Kendall's tau estimator of the observed data $\by_i$'s  and $\hat\bGamma_{ED} = (\hat\bxi_1^{(Y)}, \dots, \hat\bxi_m^{(Y)})$ as the leading eigenvectors of $\hat\bSigma_2$. Theorem \ref{thm_eigenvectorHT} implies $$\|\hat\bGamma_{ED} - \bGamma\|_{\max} = O_P(\sqrt{\log p/(np)})$$ following the same argument (\ref{eigenvector_max}) in Section \ref{sec3}. So the third sufficient condition in (\ref{suffCond}) holds for $\hat\bGamma_{ED}$. Together with the estimators $\hat\bSigma_1$ and $\hat\bLambda_{ED}$ defined in Section \ref{sec4.2}, we are ready to apply the general POET procedure for the heavy tail factor model and achieve all the desired estimation convergence rates for both covariance and precision matrices.

\section{Simulations} \label{sec5}

Simulations are carried out in this section to demonstrate the effectiveness of the proposed method for elliptical factor models. The robust estimators $\hat\bSigma_1$, $\hat\bLambda_{ED}$, $\hat\bGamma_{ED}$ proposed in Section \ref{sec4} will be compared with the original POET estimator based on the sample covariance, or $\hat\bSigma_Y$, $\hat\bLambda_{SG}$, $\hat\bGamma_{SG}$ discussed in Section \ref{sec3}. We put the two sets of estimators into the general POET framework described in Section \ref{sec2} for estimating both conditional sparsity covariance and conditional graphical models.

\subsection{Conditional sparse covariance estimation}\label{sec::sim1}

In this section, we consider the factor model (\ref{eq:factor}) with $(\bff_t, \bu_t)$ jointly follow a multivariate t-distribution with degrees of freedom $\nu$. Larger $\nu$ corresponds to lighter tail and $\nu = \infty$ corresponds to a multivariate normal distribution. We simulated $n$ independent samples of $(\bff_t, \bu_t)$ from multivariate t-distribution with covariance matrix $\diag(\bI_m, \bI_p)$ and each row of $\bB$ from $\mathcal N({\bf 0}, \bI_m)$. The observed data is formed as $\by_t = \bB \bff_t + \bu_t$ and the true covariance is $\bSigma = \bB \bB' + \bI_p$. We vary $p$  from $100$ to $1000$ with sample size $n = p/2$, and fixed number of factors $m = 3$ in this simulation.

For each triple $(p, n, m)$, both the original POET estimator ($\hat\bSigma_Y$, $\hat\bLambda_{SG}$, $\hat\bGamma_{SG}$) and the proposed robust POET estimator ($\hat\bSigma_1$, $\hat\bLambda_{ED}$, $\hat\bGamma_{ED}$) were employed to estimate $\bSigma_u$ and $\bSigma$. $100$ simulations were conducted for each case. The log-ratio (base 2) of the average estimation errors using the two methods were reported in Figure \ref{fig:POET}, measured under different norms ($\|\hat\bSigma_u^{\top} - \bSigma_u\|_{2}$ and $\|(\hat\bSigma_u^{\top})^{-1} - \bSigma_u^{-1}\|_2$ for $\bSigma_u$; $\|\hat\bSigma^{\top} - \bSigma\|_{\max}, \|\hat\bSigma^{\top} - \bSigma\|_{\bSigma}$ and $\|(\hat\bSigma^{\top})^{-1} - \bSigma^{-1}\|_2$ for $\bSigma$;  $\|\hat\bSigma - \bSigma\|_{\max}, \|(\hat\bLambda - \bLambda)\bLambda^{-1}\|_2$ and $\|\hat\bGamma - \bGamma\|_{\max}$ for initial pilot estimators). In addition, three different degrees of freedom $\nu = 4.2, \nu = 7, \nu = \infty$ were chosen, representing respectively  heavy tail, moderate heavy tail, and normal situations.

From Figure \ref{fig:POET}, when factors and noises are heavy-tailed from $t_{4.2}$ (black), the original POET estimators are poorly behaved while the robust method works well as we expected. $t_7$ (blue) typically fits financial or biological data better than normal in practice. In this case, we also observe a significant advantage of the robust POET estimators. The error is roughly reduced by a magnitude of two if the rank based estimation is applied. However, when the distribution is indeed normal or $t_{\infty}$ (orange), the original POET estimators based on sub-Gaussian data performs better, though the robust POET also achieves comparable performance.

\begin{figure}
	\centering
      	\includegraphics[width=0.8\textwidth]{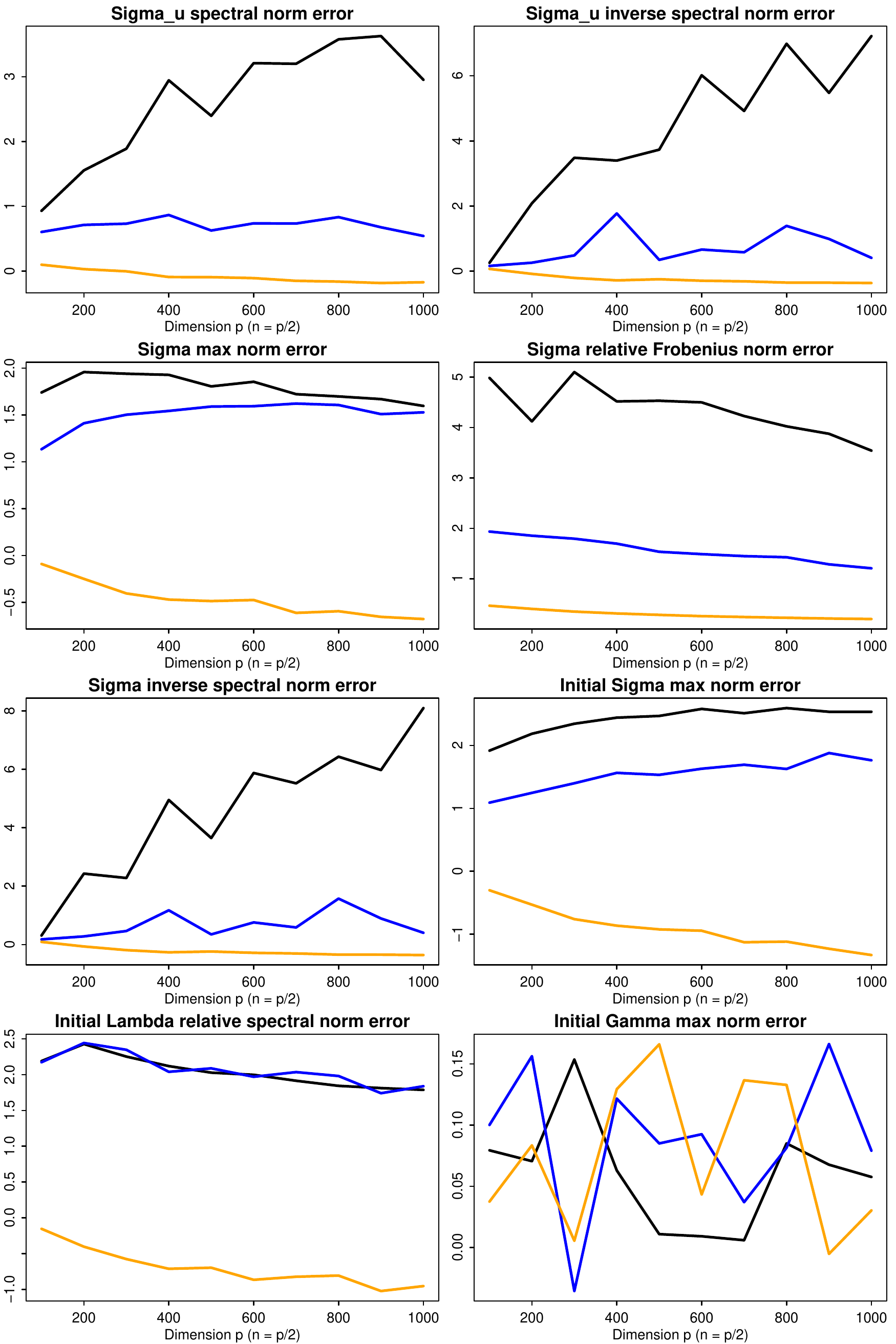}
        \caption{Conditional sparse covariance matrix estimation. The $8$ plots corresponds to logarithms (base 2) of the ratios  of average errors of the original and the robust POET estimators, measured in different norms. Data were generated from multivariate t-distribution with degree of freedom $\nu = 4.2$ (black), $\nu = 7$ (blue), $\nu = \infty$ (orange) with $p$ from $100$ to $1000$, $n = p/2$ and $m = 3$. $100$ simulations were conducted for each $p$. }
        \label{fig:POET}
\end{figure}

\subsection{Conditional graphical model estimation}

In this section, we consider the conditional graphical model described in Section \ref{sec2.3}.  In particular, we compare the accuracy of different methods for estimating the precision matrices $\bOmega_u$ and $\bOmega$. Here we assume a block diagonal precision error matrix  $\bOmega_u = \diag(\bM, \dots, \bM)$ where $\bM$ is $2$ by $2$ correlation matrix with off-diagonal element equals $0.5$. Then we simulate $(\bff_t, \bu_t)$ again from multivariate t-distribution with covariance $\diag(\bI_m, \bOmega_u^{-1})$. We set the dimension $p$ to range from $50$ to $500$, sample size $n = 0.6p$ and a  fixed number of factors $m = 3$.

For each configuration of $(p, n, m)$, after applying POET with the original and robust  pilot estimators,
we estimate $\hat\bOmega_u$ and $\hat\bOmega$ as proposed in Section \ref{sec2.3} using the CLIME procedure. To efficiently solve large-scale CLIME optimization (\ref{CLIME}), we used the R package ``fastclime'' developed by \cite{PanLiuVan14}. $100$ simulations were conducted for each case. The log-ratio (base 2) of the average errors of the two methods were reported in Figure \ref{fig:Graph}, measured under spectral norms $\|\hat\bOmega_u - \bOmega_u\|_{2}$ and $\|\hat\bOmega - \bOmega\|_{2}$. Three different degrees of freedom $\nu = 4.2$ (black), $\nu = 7$ (blue), $\nu = \infty$ (orange) were used as in Section \ref{sec::sim1}. Clearly, the robust estimators outperform non-robust ones for $t_{4.2}$ and $t_7$, and maintains competitive for the normal case.
%Thus if no clear evidence favors sub-Gaussian tail, we recommend the proposed method and procedure in practice.

\begin{figure}
	\centering
      	\includegraphics[width=0.8\textwidth]{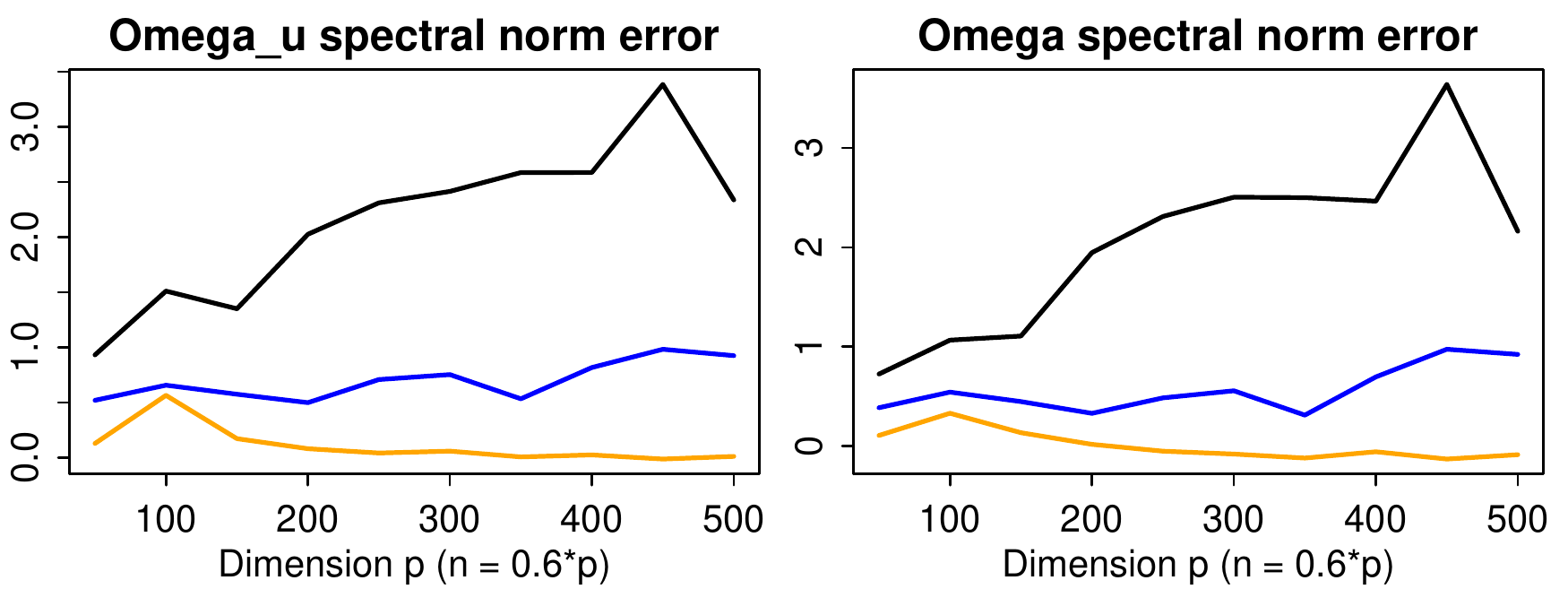}
        \caption{Conditional graphical model estimation. The plots corresponds to log ratio (base 2) of average errors of the original and the robust POET estimators for $\hat\bOmega_u$ and $\hat\bOmega$, measured in spectral norms. Data were generated from multivariate t-distribution with degree of freedom $\nu = 4.2$ (black), $\nu = 7$ (blue), $\nu = \infty$ (orange) with $p$ from $50$ to $500$, $n = 0.6p$ and $m = 3$. $100$ simulations were conducted for each $p$. }
        \label{fig:Graph}
\end{figure}

\section{Discussions} \label{sec6}

We provide a fundamental understanding of high dimensional factor models under the pervasive condition. In particular, we extend the POET estimator in \cite{FanLiaMin13}  to  be a generic procedure which could take any pilot covariance matrix estimators as initial inputs,  as long as they satisfy a set of sufficient conditions specified in (\ref{suffCond}). Transparent theoretical results are then developed. The main challenge is to check the high level conditions hold for certain estimators. When the observed data $\by_i$ is sub-Gaussian random vector, we are able to simply use sample covariance matrix to construct initial estimators. However, if we encounter heavy-tailed elliptical distributions, robust estimators for eigen-structure should be considered. The paper provides an example of separately estimating leading eigenvalues and eigenvectors under elliptical factor models. But the results could possibly be generated to other richer family of distributions.

Based on recent work of \cite{FanWan15}, it is possible to relax the spiked eigenvalue condition from order $p$ to weaker signal level. But bounded eigenvalues are obviously not enough for consistent estimation as has been pointed out by \cite{JohLu09}, and as a result more structural assumptions are needed. \cite{AgaNegWai12} considered a similar type of low-rank plus sparse decomposition, but their work is based on optimization technique and does not leverage pervasiveness. In addition, they
only analyze the obtained estimator using Frobenius norm errors. Consequently, their lower bound results are not applicable to our setting. The lower bound for the rates in (\ref{rate1}) and (\ref{rate2}) will be pursued in a separate work. By all means, the similarity and difference between optimization thinking and pervasiveness thinking should be studied in further details.

\appendix
\section{Proofs in Section 2} \label{secA}
\begin{proof}[\bf Proof of Theorem \ref{suff}]
We establish (\ref{rate1}) here and defer the details of the proof of (\ref{rate2}) to Appendix \ref{secA}..
To obtain the rates of convergence in (\ref{rate1}), it suffices to prove $\|\hat\bSigma_u - \bSigma_u\|_{\max} = O_P(w_n)$. Once the max error of sparse matrix $\bSigma_u$ is controlled, it is not hard to show the adaptive procedure discussed in (\ref{thresholding}) gives $\hat\bSigma_u^{\top}$ such that the spectral error $\|\hat\bSigma_u^{\top} - \bSigma_u\|_{2}  = O_P( m_p w_n^{1-q} )$ \citep{FanLiaMin11, CaiLiu11, RotLevZhu09}. Furthermore, $\|(\hat\bSigma_u^{\top})^{-1} - {\bSigma_u}^{-1}\|_{2} \le \|(\hat\bSigma_u^{\top})^{-1} \|_2 \|\hat\bSigma_u^{\top} - \bSigma_u\|_2 \|\bSigma_u^{-1}\|_2$. So $\|(\hat\bSigma_u^{\top})^{-1} - {\bSigma_u}^{-1}\|_{2}$ is also $O_P( m_p w_n^{1-q} )$ due to the lower boundedness of $\|\bSigma_u\|_2$.

According to first condition in (\ref{suffCond}), $\|\hat\bSigma - \bSigma\|_{\max} = O_P(\sqrt{\log p/n})$. Therefore to show $\|\hat\bSigma_u - \bSigma_u\|_{\max} = O_P(w_n)$, we only need to prove the low rank part of $\bSigma$  concentrates at a desired rate under max norm. So our goal is to prove
\beq \label{goal}
\|\hat\bGamma\hat\bLambda\hat\bGamma - \bB\bB'\|_{\max} = O_P(\sqrt{\log p/n} + 1/\sqrt{p})\,.
\eeq

Let $\bB\bB' = \tilde \bGamma \tilde \bLambda \tilde \bGamma'$ where $\tilde\bLambda = \diag(\|\bb_1\|^2, \dots, \|\bb_m\|^2)$ and the $j$th column of $\tilde \bGamma$ is $\bb_j/\|\bb_j\|$. To obtain (\ref{goal}), we bound $\Delta_1 := \|\hat\bGamma\hat\bLambda\hat\bGamma' - \bGamma\bLambda\bGamma'\|_{\max}$ and $\Delta_2 := \|\tilde\bGamma\tilde\bLambda\tilde\bGamma' - \bGamma\bLambda\bGamma'\|_{\max}$ separately. Four useful rates of convergence are listed in the following:
\begin{align*}
& \|\tilde\bLambda - \bLambda\|_{\max} \le \|\bSigma_u\| = O(1)\,, \\
& \|\tilde\bGamma - \bGamma\|_{\max} \le C\|\bSigma_u\|/p = O(1/p)\,, \\
& \|(\hat\bLambda - \bLambda)\bLambda^{-1}\|_{\max} = O_P(\sqrt{\log p/n})\,, \\
& \|\hat\bGamma - \bGamma\|_{\max} = O_P(\sqrt{\log p/(np)})\,.
\end{align*}
The first one is due to Weyl's inequality since $\|\tilde\bLambda - \bLambda\|_{\max} = \|\tilde\bLambda - \bLambda\|_{2}$ while the second follows from trivial bound $\|\tilde\bGamma - \bGamma\|_{\max} \le \|\tilde\bGamma - \bGamma\|_{F}$, which is further bounded by $C\|\bSigma_u\|/p$ according to the $\sin \theta$ theorem of \cite{DavKah70}. The third and fourth rates are by assumption. Next we show $\|\bGamma\|_{\max} = O(1/\sqrt{p})$ and derive the rates for $\Delta_1$ and $\Delta_2$.

Note that
\begin{align*}
\|\bGamma\bLambda^{\frac12} - \tilde\bGamma\tilde\bLambda^{\frac12}\|_{\max} & \le \|\bB\tilde\bLambda^{-\frac12}(\bLambda^{\frac12} - \tilde\bLambda^{\frac12})\|_{\max} + \|(\bGamma - \tilde\bGamma)\bLambda^{\frac12}\|_{\max} \\
& \le C\frac{ \|\bB\|_{\max} + \|\bSigma_u\|}{\sqrt{p}} = o(1) \,.
\end{align*}
Since $\|\bB\|_{\max} = \|\tilde\bGamma\tilde\bLambda^{\frac12}\|_{\max} = O(1)$, we have $\|\bGamma\bLambda^{1/2}\|_{\max} = O(1)$ and $\|\bGamma\|_{\max} = O(1/\sqrt{p})$. Using this fact, the following argument implies $\Delta_1 = O_P(\sqrt{\log p/n})$ and $\Delta_2 = O(\sqrt{1/p})$. More specifically,
\begin{align*}
\Delta_1 & \le \|\hat\bGamma(\hat\bLambda-\bLambda)\hat\bGamma'\|_{\max} + \|(\hat\bGamma - \bGamma) \bLambda (\hat\bGamma - \bGamma)'\|_{\max} + 2\|\bGamma\bLambda(\hat\bGamma - \bGamma)'\|_{\max} \\
& = O_p(p^{-1} \|\hat\bLambda-\bLambda\|_{\max} + \sqrt{p} \|\hat\bGamma - \bGamma\|_{\max}) = O_P(\sqrt{\log p/n})\,, \\
\Delta_2 & \le \|\tilde\bGamma(\tilde\bLambda-\bLambda)\tilde\bGamma'\|_{\max} + \|(\tilde\bGamma - \bGamma) \bLambda (\tilde\bGamma - \bGamma)'\|_{\max} + 2\|\bGamma\bLambda(\tilde\bGamma - \bGamma)'\|_{\max} \\
& = O(p^{-1} \|\tilde\bLambda-\bLambda\|_{\max} + \sqrt{p} \|\tilde\bGamma - \bGamma\|_{\max}) = O(\sqrt{1/p})\,.
\end{align*}
Combining the rates of $\Delta_1$ and $\Delta_2$, we prove  (\ref{goal}). Thus (\ref{rate1}) follows.

Now let us prove (\ref{rate2}). The first  result follows from $\|\hat\bSigma_u^{\top} - \bSigma_u\|_{\max} \le \|\hat\bSigma_u^{\top} - \hat\bSigma_u\|_{\max} + \|\hat\bSigma_u - \bSigma_u\|_{\max} = O_P(\tau + w_n) = O_P(w_n)$ when $\tau$ is chosen as the same order $w_n$ and
\[
\|\hat\bSigma^{\top} - \bSigma\|_{\max} \le \|\hat\bGamma\hat\bLambda\hat\bGamma - \bB\bB'\|_{\max} + \|\hat\bSigma_u^{\top} - \bSigma_u\|_{\max} = O_P(w_n)\,.
\]

%Here we present the remaining proofs in (\ref{rate2}):
%\begin{equation*}
%\begin{aligned}
%&\|\hat\bSigma^{\top} - \bSigma\|_{\bSigma} =  O_P\Big( \frac{\sqrt{p} \log p}{n} + m_p w_n^{1-q} \Big)\,, \\
%&\|(\hat\bSigma^{\top})^{-1} - \bSigma^{-1}\|_2 = O_P\Big( m_p w_n^{1-q} \Big)\,.
%\end{aligned}
%\end{equation*}
We now prove the remaining two results. Suppose the SVD decomposition of $\bSigma = \bGamma_p \bLambda_p \bGamma_p'$ where $\bGamma_p = (\bGamma, \bOmega)$ and $\bLambda_p = \diag(\bLambda, \bTheta)$.
Then obviously
\beq \label{Norm_BreakDown1}
\begin{aligned}
\|\hat\bSigma^{\top}  - \bSigma\|_{\Sigma} \le & \left. p^{-1/2} \Big( \left\| \bSigma^{-\frac12} (\hat\bGamma \hat\bLambda \hat\bGamma' - \bB \bB') \bSigma^{-\frac12} \right\|_F \right.  \\
& \left. + \|\bSigma^{-\frac12} (\hat\bSigma_u^{\top} - \bSigma_u) \bSigma^{-\frac12}\|_F \Big)  =:  \Delta_L + \Delta_S \right. ,
\end{aligned}
\eeq
and
\beq \label{Norm_BreakDown3}
\Delta_S \le p^{-1/2} \|\bSigma^{-1}\| \|\hat\bSigma_u^{\top} - \bSigma_u\|_F \le C \|\hat\bSigma_u^{\top} - \bSigma_u\|  = O_P(m_p w_n^{1-q})\,.
\eeq
It is easy to show
\beq \label{Norm_BreakDown2}
\begin{aligned}
\Delta_L &=  p^{-1/2} \left\| \left( \begin{array}{c} \bLambda^{-\frac12} \bGamma' \\ \bTheta^{-\frac12} \bOmega' \end{array} \right) (\hat\bGamma \hat\bLambda \hat\bGamma' - \bB \bB') \left( \begin{array}{cc}
\bGamma \bLambda^{-\frac12} & \bOmega \bTheta^{-\frac12} \end{array} \right) \right\|_F  \\
& \le \Delta_{L1} + \Delta_{L2} + 2\Delta_{L3} \,,
\end{aligned}
\eeq
where $\Delta_{L1} = \|\bLambda^{-\frac12} \bGamma' (\hat\bGamma\hat\bLambda \hat\bGamma' - \bB \bB') \bGamma \bLambda^{-\frac12}\|_F/\sqrt{p} $, $\Delta_{L2} = \|\bTheta^{-\frac12} \bOmega' (\hat\bGamma \hat\bLambda \hat\bGamma' - \bB \bB') \bOmega \bTheta^{-\frac12} \|_F /\sqrt{p}$ and $\Delta_{L3} =  \|\bLambda^{-\frac12} \bGamma' (\hat\bGamma\hat\bLambda \hat\bGamma' - \bB \bB') \bOmega \bTheta^{-\frac12}\|_F/\sqrt{p} $.
In order to characterize the rate of  convergence under relative Frobenius norm, we analyze the terms $\Delta_{L1}, \Delta_{L2}$ and $\Delta_{L3}$ separately.

According to Theorem 4.1 of  \cite{FanWan15}, $\Delta_{L1} \le \|\bLambda^{-\frac12} \bGamma' (\hat\bGamma\hat\bLambda \hat\bGamma' - \bB \bB') \bGamma \bLambda^{-\frac12}\|_2 = O_P(n^{-1/2})$. $\Delta_{L2}$ is bounded by
\[
p^{-1/2} \Big( \|\bTheta^{-\frac12} \bOmega' \hat\bGamma \hat\bLambda \hat\bGamma' \bOmega \bTheta^{-\frac12} \|_F + \|\bTheta^{-\frac12} \bOmega' \tilde\bGamma \tilde\bLambda \tilde\bGamma' \bOmega \bTheta^{-\frac12} \|_F \Big) =: \Delta_{L2}^{(1)} + \Delta_{L2}^{(2)}\,,
\]
where
\[
\Delta_{L2}^{(1)} \le p^{-1/2}\|\bTheta^{-1}\| \|\bOmega'(\hat\bGamma-\bGamma)\|_F^2 \|\hat\bLambda\| = O_P(\sqrt{p}\log p/n)\,,
\]
because $\|\hat\bGamma-\bGamma\|_{\max} = O_P(\sqrt{\log p/(np)})$ by assumption and $\|\hat\bLambda\| = O_P(p)$. Similarly, $\Delta_{L2}^{(2)} = O_P(1/p^{3/2})$
as $\|\bOmega'\tilde\bGamma\|_F \le \sqrt{m} \|\bOmega'\tilde\bGamma\|_2 = \sqrt{m} \|\bGamma\bGamma' - \tilde\bGamma\tilde\bGamma'\|_2 = O(\|\bSigma_u\|/p) = O_P(1/p)$ by the $\sin \theta$ Theorem \citep{DavKah70}. Finally, $\Delta_{L2} = O_P(\sqrt{p}\log p/n + 1/p^{3/2})$. Following similar arguments, $\Delta_{L3}$ is dominated by $\Delta_{L1}$ and $\Delta_{L2}$. Combining the terms $\Delta_{L1}$, $\Delta_{L2}$, $\Delta_{L3}$ and $\Delta_S$ together, we complete the proof for the relative Frobenius norm.

We now turn to analyze the spectral norm error of the inverse covariance matrix. By the Sherman-Morrison-Woodbury formula, we have
\beq
\|(\hat\bSigma^{\top})^{-1} - \bSigma^{-1}\|_2 \le \|(\hat\bSigma_u^{\top})^{-1} - \bSigma_u^{-1}\| + \Delta_{inv}\,,
\eeq
where
\beq \label{inv}
\Delta_{inv} = \|(\hat\bSigma_u^{\top})^{-1}\hat\bGamma \hat\bLambda^{\frac12} (\bI_m + \hat\bJ)^{-1} \hat\bLambda^{\frac12}\hat\bGamma' (\hat\bSigma_u^{\top})^{-1}  - \bSigma_u^{-1} \tilde\bGamma \tilde\bLambda^{\frac12} (\bI_m + \tilde\bJ)^{-1} \tilde\bLambda^{\frac12}\tilde\bGamma' \bSigma_u^{-1}\|\,,
\eeq
where $\hat\bJ = \hat\bLambda^{\frac12} \hat\bGamma' (\hat\bSigma_u^{\top})^{-1} \hat\bGamma \hat\bLambda^{\frac12}$ and $\bJ = \tilde\bLambda^{\frac12} \tilde\bGamma' \bSigma_u^{-1} \tilde\bGamma \tilde\bLambda^{\frac12} $.
The right hand side can be bounded by the terms representing the differences of the ``hat'' part $(\hat\bSigma_u^{\top})^{-1}$, $\hat\bGamma \hat\bLambda^{\frac12}$, $(\bI_m + \hat\bJ)^{-1}$ and the ``tilde'' part $(\bSigma_u^{\top})^{-1}$, $\tilde\bGamma \tilde\bLambda^{\frac12}$, $(\bI_m + \tilde\bJ)^{-1}$:
\begin{align*}
&\|((\hat\bSigma_u^{\top})^{-1}  - \bSigma_u^{-1}) \tilde\bGamma \tilde\bLambda^{\frac12} (\bI_m + \tilde\bJ)^{-1} \tilde\bLambda^{\frac12}\tilde\bGamma' \bSigma_u^{-1}\| \,, \\
&\|\bSigma_u^{-1} (\hat\bGamma \hat\bLambda^{\frac12} - \tilde\bGamma \tilde\bLambda^{\frac12}) (\bI_m + \tilde\bJ)^{-1} \tilde\bLambda^{\frac12}\tilde\bGamma' \bSigma_u^{-1}\| \,, \\
&\|\bSigma_u^{-1} \tilde\bGamma \tilde\bLambda^{\frac12} ((\bI_m + \hat\bJ)^{-1} - (\bI_m + \tilde\bJ)^{-1}) \tilde\bLambda^{\frac12}\tilde\bGamma' \bSigma_u^{-1}\| \,.
\end{align*}
The first term is $O_P(m_p w_n^{1-q})$; the second term is $O_P(p^{-\frac12} \|\hat\bGamma \hat\bLambda^{\frac12} - \tilde\bGamma \tilde\bLambda^{\frac12}\|) = O_P(w_n)$; and the third term is $O_P(p \|(\bI_m + \hat\bJ)^{-1} - (\bI_m + \tilde\bJ)^{-1}\| ) = O_P(p^{-1} \|\hat\bJ - \tilde\bJ\|) = O_P(m_p w_n^{1-q})$. Thus, $\Delta_{inv} = O_P(m_p w_n^{1-q})$, which implies $\|(\hat\bSigma^{\top})^{-1} - \bSigma^{-1}\|_2 = O_P(m_p w_n^{1-q})$. Therefore, we finish the proof of the remaining parts of (\ref{rate2}) in Theorem \ref{suff}.
\end{proof}

\begin{proof}[\bf Proof of Theorem \ref{Graph}]
From (\ref{goal}), we have $\|\hat\bSigma_u - \bSigma_u\|_{\max} = O_P(w_n)$. Next we prove that the CLIME estimator will give $\hat\bOmega_u$ such that $\|\hat\bOmega_u - \bOmega_u\|_{\max} = O_P(w_n)$. Choose $\tau \ge \|\bOmega_u\|_{\infty} \|\hat\bSigma_u - \bSigma_u\|_{\max} \ge \|\bOmega_u\hat\bSigma_u - \bI\|_{\max}$ so that the true $\bOmega$ is within the region of the constraint of (\ref{CLIME}). So
\begin{equation*}
\begin{aligned}
\|\hat\bOmega_u^1 - \bOmega_u\|_{\max} & \le \|\bOmega_u (\bI - \hat\bSigma_u \hat\bOmega_u^1)\|_{\max} + \|(\bI - \hat\bSigma_u \bOmega_u)' \hat\bOmega_u^1\|_{\max}\\
& \le \|\bOmega_u\|_{\infty} \|\bI - \hat\bSigma_u \hat\bOmega_u^1\|_{\max} + \|\hat\bOmega_u^1\|_{\infty} \|\bI - \hat\bSigma_u \bOmega_u\|_{\max}\,,
\end{aligned}
\end{equation*}
where the first term is bounded by $\tau \|\bOmega_u\|_{\infty}$ since $\hat\bOmega_u^1$ is a feasible solution of (\ref{CLIME}), and the second term is   bounded by $\tau \|\bOmega_u\|_{\infty}$ due to the optimality of $\hat\bOmega_u^1$ over $\bOmega_u$. Therefore with $\tau \asymp w_n$, we have $\|\hat\bOmega_u^1 - \bOmega_u\|_{\max} = O_P(w_n)$. It is easy to see the symmetrization step does not change the rate of $\|\hat\bOmega_u - \bOmega_u\|_{\max}$. By similar arguments as in \cite{CaiLiuLuo11}, we obtain $\|\hat\bOmega_u - \bOmega_u\|_2 = O_P(M_p w_n^{1-q})$.

From the proof of the spectral norm error of the inverse covariance matrix in Theorem \ref{suff}, we  have $\|\hat\bOmega - \bOmega\|_2 = O_P(M_p w_n^{1-q})$. It remains to prove $\|\hat\bOmega - \bOmega\|_{\max}$, which is by definition bounded by
\[
O_P(\|\hat\bOmega_u - \bOmega_u\|_{\max} + p^{-1} \|\hat\bOmega_u \hat\bGamma \hat\bLambda^{\frac12} - \bOmega_u \bGamma \bLambda^{\frac12} \|_{\max} + p^{-2} \|\hat\bJ -\bJ\|_{\max})\,,
\]
where $\hat\bJ$ and $\bJ$ are defined in (\ref{inv}). After some calculations, its dominating term is $O_P(\|\hat\bOmega_u - \bOmega_u\|_{\max} ) = O_P(w_n)$.
\end{proof}

\section{Proofs in Section 3} \label{secB}

\begin{proof} [\bf Proof of Theorem \ref{thm_eigenvalue}]
Define $\bX = (\widetilde\bx_1, \dots, \widetilde \bx_p) = (\bZ_A \bLambda_A^{1/2}, \bZ_B \bLambda_B^{1/2})$ where $\bZ_A = (\widetilde\bz_1, \dots ,\widetilde\bz_m)$, $\bZ_B = (\widetilde\bz_{m+1}, \dots, \widetilde\bz_p)$ with $\widetilde\bz_j = \widetilde \bx_j /\sqrt{\lambda_j}$ and $\bLambda_A = \diag(\lambda_1, \dots, \lambda_m)$, $\bLambda_B = \diag(\lambda_{m+1}, \dots, \lambda_p)$. In order to prove the theorem, we first define two auxillary quantities as follows and analyze them separately. Let
\[
\bA = n^{-1} \sum_{j=1}^m \lambda_j \widetilde\bz_j \widetilde\bz_j', \quad \mbox{and} \quad
\bB = n^{-1} \sum_{j=m+1}^p \lambda_j \widetilde\bz_j \widetilde\bz_j'.
\]
In addition we define
\[
   \widetilde \bSigma_{n \times n}  = \frac{1}{n} \bX \bX' =  \frac{1}{n} \sum_{j=1}^p \lambda_j \widetilde\bz_j \widetilde\bz_j' = \bA + \bB,
\]
which share the same nonzero eigenvalues with the sample covariance matrix $\hat\bSigma = n^{-1} \bX' \bX$.

We first prove $\bA$ satisfy $| \lambda_j(\bA)/\lambda_j - 1| = O_P(n^{-1/2})$.
Note that $\bA = n^{-1} \bZ_A \bLambda_A\bZ_A' $ has the same eigenvalues as matrix $\widetilde \bA = n^{-1} \bar \bZ' \bar \bZ$, where $\bar \bZ = \bZ_A \bLambda_A^{1/2}$ is an $n \times m$ matrix with independent and identically distributed rows.  Each row is a sub-Gaussian random vector  with mean ${\bf 0}$ and variance $\bLambda_A$. Therefore, we are in the low dimensional situation with fixed dimension $m$, although eigenvalues diverge. By central limit theorem, the $j^{th}$ diagonal element of $\widetilde \bA$ is of order $\lambda_j(1+O_P(n^{-1/2}))$ and the $(j,k)^{th}$ off-diagonal elements are of order $\sqrt{\lambda_j \lambda_k} O_P(n^{-1/2})$. Therefore,
$\lambda_j(\bA) = \lambda_j(\widetilde \bA) = \lambda_j (\bLambda_A^{1/2} (\bI_m + O_P(n^{-1/2})) \bLambda_A^{1/2}  ) = \lambda_j (\bLambda_A (\bI_m + O_P(n^{-1/2})))$ and furthermore
\[
\lambda_j \lambda_{\min}(\bI_m + O_P(n^{-1/2})) \le \lambda_j(\bA)  \le \lambda_j\lambda_{\max}(\bI_m + O_P(n^{-1/2})) \,.
\]
Since  dimension $m$ is fixed, we have $| \lambda_j(\bA)/\lambda_j - 1| = O_P(n^{-1/2})$.

Secondly, we have $\lambda_k(\bB)/\lambda_j = o_P(n^{-1/2})$.
By the definition of $\bB$, $\bB = n^{-1} \bZ_B \bLambda_B \bZ_B'$ where $\bZ_B$ is a $n \times (p-m)$ random matrix with independent rows of zero mean and identity covariance and $\bLambda_B = \diag(\lambda_{m+1}, \cdots, \lambda_p)$. Since each row of $\bZ_B$ is independent sub-Gaussian isotropic vector of dimension $p-m$, by Lemma \ref{Key}, choose $t = \sqrt{n}$, for any $k \le n$,
\[
\frac{1}{p-m} \lambda_k(\bZ_B \bLambda_B \bZ_B') = \frac{1}{p-m}\sum_{j=m+1}^p \lambda_j + O_P\Big(\sqrt{\frac{n}{p}}\Big) = \bar c + O_P\Big(\sqrt{\frac{n}{p}}\Big) + o(1)\,.
\]
Therefore, for $k = 1, \dots, n$,
\begin{equation*}
\frac{\lambda_k(\bB)}{\lambda_j} = \frac{n \lambda_k(\bB)}{p-m} \frac{p-m}{n \lambda_j}= O_P\Big(\frac{p-m}{n \lambda_j} \Big) = o_P(n^{-\frac 12})\,.
\end{equation*}

By Wely's Theorem, $\lambda_j(\bA) + \lambda_n(\bB) \le \hat\lambda_j \le \lambda_j(\bA) + \lambda_1(\bB).$
Therefore, combining results for $\bA$ and $\bB$, we conclude $|\hat \lambda_j/\lambda_j - 1| = O_P(n^{-1/2})$.
\end{proof}

\begin{proof} [\bf Proof of Theorem \ref{thm_eigenvector}]
(i) We start by proving the rate for $\hat\bxi_{jA}$ in the simple case of $m = 1$. In this case, by Lemma \ref{lemB.1}, we indeed have $|\hat\xi_{1A} - 1| = O_P(n^{-1/2})$ since $p > n$. In the following, we consider $m > 1$. Denote ${\bX} = (\bZ_A \bLambda_A^{1/2}, \bZ_B \bLambda_B^{1/2})$ as before.
Recall $\hat\bxi_1,\hat \bxi_2,\dots, \hat\bxi_p$ are eigenvectors of $\hat\bSigma = n^{-1} {\bX}' {\bX}$. Let $\bu_1,\bu_2,\dots, \bu_n$ be the eigenvectors of $\widetilde \bSigma  = n^{-1} {\bX} {\bX}' $. It is well known that for $i=1,2,\dots,n$,
\beq
\label{VecRel}
\hat\bxi_i  = (n \hat \lambda_i)^{-1/2} \bX' \bu_i \;\; \text{and} \;\; \bu_i =  (n \hat \lambda_i)^{-1/2} \bX \hat\bxi_i \,,
\eeq

Using (\ref{VecRel}), we have
\beq
\label{eqB.2}
\hat\bxi_{jA} = \frac{\bLambda_A^{1/2} \bZ_A' \bu_j}{\sqrt{n \hat\lambda_j}} \;\; \text{and} \;\; \bu_j = \frac{{\bX} \hat\bxi_j}{\sqrt{n \hat\lambda_j}} = \frac{\bZ_A \bLambda_A^{1/2} \hat\bxi_{jA}}{\sqrt{n\hat\lambda_j}} + \frac{\bZ_B \bLambda_B^{1/2} \hat\bxi_{jB}}{\sqrt{n \hat \lambda_j}}\,.
\eeq
Since $\bu_j$ is the eigenvector of $\widetilde\bSigma$, that is, $n^{-1} {\bX}{\bX}' \bu_j = \hat \lambda_j \bu_j$. Plugging in $\bX = (\bZ_A \bLambda_A^{1/2}, \bZ_B \bLambda_B^{1/2})$, we obtain
\begin{equation*}
\Big( \bI_n - \frac{1}{n} \bZ_A \frac{\bLambda_A}{\lambda_j} \bZ_A' \Big) \bu_j = \bD \bu_j - \Delta \bu_j\,,
\end{equation*}
where we denote $\bD = (n\lambda_j)^{-1} \bZ_B\bLambda_B \bZ_B'$, $\Delta = \hat \lambda_j/\lambda_j - 1$.
We then left-multiply the above equation by $\bLambda_A^{1/2} \bZ_A'/\sqrt{n \hat\lambda_j}$ and employ the relationship (\ref{eqB.2}) to replace $\bu_j$ by $\hat \bxi_{jA}$ and $\hat\bxi_{jB}$ as follows:
\begin{equation}
\label{eqB.3}
\begin{aligned}
\Big(\bI_m - \frac{\bLambda_A}{\lambda_j} \Big) \hat\bxi_{jA} = & \frac{\bLambda_A^{1/2} (\frac 1 n \bZ_A' \bZ_A - \bI_m ) \bLambda_A^{1/2} }{\lambda_j} \hat\bxi_{jA} + \frac{\bLambda_A^{1/2} \bZ_A' \bD \bZ_A \bLambda_A^{1/2}}{n \hat\lambda_j} \hat\bxi_{jA} \\ & + \frac{\bLambda_A^{1/2} \bZ_A' \bD \bZ_B \bLambda_B^{1/2} }{n \hat\lambda_j} \hat\bxi_{jB} - \Delta \hat\bxi_{jA}\,.
\end{aligned}
\end{equation}
Further, we  define
\[
\bR = \sum_{k \in [m] \setminus j} \frac{\lambda_j}{\lambda_j - \lambda_k} \be_{kA} \be_{kA}'\,,
\]
where $\bR$ is well defined because $m > 1$.
Then we have $\bR (\bI_m - \bLambda_A/\lambda_j ) = \bI_m - \be_{jA} \be_{jA}'$. Left multiplying $\bR$ to (\ref{eqB.3}), we have
\begin{equation*}
\begin{aligned}
\hat\bxi_{jA}- \langle \hat\bxi_{jA}, \be_{jA} \rangle \be_{jA} = & \bR \Big(\frac{\bLambda_A}{\lambda_j}\Big)^{\frac12} \bK \Big(\frac{\bLambda_A}{\lambda_j}\Big)^{\frac12} \hat\bxi_{jA} \\
& + \bR \frac{\bLambda_A^{1/2} \bZ_A' \bD \bZ_B \bLambda_B^{1/2} }{n \hat\lambda_j} \hat\bxi_{jB} - \Delta \bR \hat\bxi_{jA}\,,
\end{aligned}
\end{equation*}
where $\bK = n^{-1} \bZ_A' \bZ_A - \bI_m + \lambda_j (n\hat \lambda_j)^{-1} \bZ_A' \bD \bZ_A $. Dividing both sides by $\|\hat\bxi_{jA}\|$, we get
\begin{equation}
\label{mainExpression_u}
\frac{\hat\bxi_{jA}}{\|\hat\bxi_{jA}\|} - \be_{jA} = \bR \Big(\frac{\bLambda_A}{\lambda_j}\Big)^{\frac12} \bK \Big(\frac{\bLambda_A}{\lambda_j}\Big)^{\frac12} \be_{jA} + \br_n\,,
\end{equation}
where
\beq
 \label{mainExpression_r}
  \begin{aligned}
\br_n =  & \Big(\langle \frac{\hat\bxi_{jA}}{\|\hat\bxi_{jA}\|}, \be_{jA} \rangle - 1 \Big) \be_{jA} + \bR \Big(\frac{\bLambda_A}{\lambda_j}\Big)^{\frac12} \bK \Big(\frac{\bLambda_A}{\lambda_j}\Big)^{\frac12} \Big( \frac{\hat\bxi_{jA}}{\|\hat\bxi_{jA}\|} - \be_{jA} \Big)  \\
& + \bR \frac{\bLambda_A^{1/2} \bZ_A' \bD \bZ_B \bLambda_B^{1/2}}{n \hat\lambda_j} \frac{\hat\bxi_{jB}}{\|\hat\bxi_{jA}\|} - \Delta \bR \Big( \frac{\hat\bxi_{jA}}{\|\hat\bxi_{jA}\|} - \be_{jA} \Big)\,.
  \end{aligned}
\eeq

Following \cite{FanWan15}, together with Lemma \ref{Key}, we can show $\|\br_n \| = o_P(n^{-\frac12})$. Further note that $(\bLambda_A/\lambda_j)^{\frac12} \be_{jA} = \be_{jA}$, we obtain
\beq \label{eqB.6}
\sqrt{n} \Big(\frac{\hat\bxi_{jA}}{\|\hat\bxi_{jA}\|} - \be_{jA}\Big)  = \sqrt{n} \bR \Big(\frac{\Lambda_A}{\lambda_j}\Big)^{\frac12} \bK \be_{jA} + o_P(1)\,.
\eeq
According to the definition of $\bR$, as $p \to \infty$,
\begin{equation*}
\bR \Big(\frac{\bLambda_A}{\lambda_j}\Big)^{\frac12} = \sum \limits_{k \in [m] \setminus j} \frac{\sqrt{\lambda_j \lambda_k}}{\lambda_j - \lambda_k} \be_{kA} \be_{kA}'  \to \sum \limits_{k \in [m] \setminus j} a_{jk} \be_{kA} \be_{kA}'  \,,
\end{equation*}
where $a_{jk} = \lim_{\lambda_j, \lambda_k \to\infty} \sqrt{\lambda_j \lambda_k}/(\lambda_j - \lambda_k)$ exists. So $\|\bR (\bLambda_A/\lambda_j)^{\frac12}\| = O(1)$. We claim $\|\bK\| = O_P(n^{-1/2})$, so the right hand side of (\ref{eqB.6}) is $O_P(1)$. To prove the rate of $\|\bK\|$, note first by Lemma \ref{Key}, $\| (p-m)^{-1} \bZ_B \bLambda_B \bZ_B' - \bar c \bI \| = O_P(\sqrt{n/p})$, so we have
\begin{align*}
\| \bD \| & = \Big\| \frac{1}{n} \frac{\bZ_B \bLambda_B \bZ_B'}{\lambda_j} \Big\| = O_P\Big(\frac{p-m}{n \lambda_j}\Big) = o_P({n^{-\frac12}})\,, \\
\|\bK\| &= \left\| \frac 1 n \bZ_A' \bZ_A - \bI_m + \frac{\lambda_j}{\hat \lambda_j} \frac 1 n \bZ_A' \bD \bZ_A \right\|  \\
& \le \left\| \frac 1 n \bZ_A' \bZ_A - \bI_m \right\| + \Big|\frac{\lambda_j}{\hat \lambda_j}\Big| \| \bD \| \left\| \frac 1 n \bZ_A' \bZ_A \right\| = O_P(n^{-\frac12})\,.
\end{align*}
Therefore, $\|\hat\bxi_{jA}/\|\hat\bxi_{jA}\| - \be_{jA}\| = O_P(n^{-1/2})$. From Lemma \ref{lemB.1}, $\|\hat\bxi_{jA}\| = 1 + O_P(1/n + 1/p)$, so (i) holds.

(ii) Now we prove the second conclusion of the theorem on the non-spiked part $\hat\bxi_{jB}$. Again we consider the case $m=1$ and $m > 1$ separately. When $m = 1$, by definition of eigenvector, we can easily see that
\[
\Big(\frac{\hat\lambda_1}{\lambda_1} - \frac{1}{n} \|\bz_1\|^2\Big) \hat\xi_{1A} = \frac{1}{n\sqrt{\lambda_1}} \bz_1'\bX_B \hat\bxi_{1B} \,;
\]
\[
\frac{\sqrt{\lambda_1}}{n} \bX_B'\bz_1 \hat\xi_{1A} + \frac{1}{n} \bX_B'\bX_B \hat\bxi_{1B} = \hat\lambda_1 \hat\bxi_{1B}\,.
\]
Plug the former equation into the latter one, we have
\[
\hat\lambda_1 \hat\bxi_{1B} = \bar\Delta^{-1} \frac{1}{n^2} \bX_B'\bz_1 \bz_1'\bX_B \hat\bxi_{1B} + \frac{1}{n} \bX_B'\bX_B \hat\bxi_{1B}\,,
\]
where $\bar\Delta = \hat\lambda_1/\lambda_1 - \|\bz_1\|^2/n$.
Therefore, let $\bOmega' = (\bgamma_1, \dots, \bgamma_{p})$, $\|\bOmega \hat\bxi_{1B}\|_{\max} = \max_{k \le p} |\bgamma_k' \hat\bxi_{1B}|$ is bounded by 
\[
\max_k \frac{|\bar\Delta^{-1}|}{\hat\lambda_1} \Big\|\frac1n \bgamma_k' \bX_B' \bz_1\Big\| \Big\| \frac1n \bz_1' \bX_B\Big\| \|\hat\bxi_{1B}\| + \Big\|\frac1n \bgamma_k'\bX_B' \bX_B \Big\| \|\hat\bxi_{1B}\| \,,
\]
which is $O_P(\sqrt{1/(n^2 p)}) = O_P(\sqrt{\log p/(np)})$ due to the facts that $\bar\Delta = O_P(n^{-1/2})$,  $\hat\lambda_1 = O_P(p)$, $\|\hat\bxi_{1B}\| = O_P(n^{-1/2})$ according to Lemma \ref{lemB.1} and three claims yet to be shown: 
\begin{equation} \label{EqB.7}
\begin{aligned}
\|\bgamma_k' \bX_B' \bZ_A / n\| = O_P(\sqrt{1/n}),\;\; & \| \bgamma_k'\bX_B' \bX_B /n\| = O_P(\sqrt{p/n}), \\
 \| \bZ_A' \bX_B / n\| =& O_P(\sqrt{p/n}) \,.
\end{aligned}
\end{equation}

Now we turn to the case $m > 1$. Similar as derivations above, by definition, we have
\[
\Big(\hat\lambda_j \bI_m - \frac1n \bLambda_A^{1/2} \bZ_A' \bZ_A \bLambda_A^{1/2}\Big) \hat\bxi_{jA} = \frac1n \bLambda_A^{1/2} \bZ_A' \bX_B \hat\bxi_{jB}\,;
\]
\[
\frac{1}{n} \bX_B'\bZ_A \bLambda_A^{1/2} \hat\bxi_{jA} + \frac{1}{n} \bX_B'\bX_B \hat\bxi_{jB} = \hat\lambda_j \hat\bxi_{jB}\,.
\]
The former equation implies
\[
\Big(\bI_m - \frac{\bLambda_A}{\lambda_j }\Big)  \hat\bxi_{jA} = \frac{1}{n\lambda_j } \bLambda_A^{1/2} \bZ_A' \bX_B \hat\bxi_{jB} - \bar\Delta \hat\bxi_{jA} \,,
\]
where $\bar\Delta = (\hat\lambda_j /\lambda_j - 1)\bI_m - (\bLambda_A/\lambda_j)^{1/2}(n^{-1}\bZ_A' \bZ_A - \bI_m) (\bLambda_A/\lambda_j)^{1/2}$. Note that this definition of $\bar\Delta$ degenerates to $\hat\lambda_1/\lambda_1 - \|\bz_1\|^2/n$ when $m = 1$ and $\|\bar\Delta\| = O_P(n^{-1/2})$. Left multiply this equation by $\bR$ defined in (i) and plug it into the previous equation, we obtain
\begin{align*}
\hat\lambda_j \bgamma_k' \hat\bxi_{jB} = & \frac{1}{n^2 \lambda_j} \bgamma_k' \bX_B'\bZ_A \bLambda_A^{1/2} \bR \bLambda_A^{1/2} \bZ_A'\bX_B \hat\bxi_{jB} + \frac{1}{n} \bgamma_k' \bX_B'\bX_B \hat\bxi_{jB} \\
& + \frac{1}{n} \langle \hat\bxi_{jA}, \be_{jA}\rangle  \bgamma_k'  \bX_B'\bZ_A \bLambda_A^{1/2} \be_{jA} - \frac{1}{n} \bgamma_k' \bX_B'\bZ_A \bLambda_A^{1/2} \bR \bar\Delta \hat\bxi_{jA} \,.
\end{align*}
Carefully bounding each term of the right hand side by (\ref{EqB.7}) and Lemma \ref{lemB.1}, we find the dominating term is the third term, which has rate $O_p(\sqrt{p/n})$. Thus $\|\bOmega \hat\bxi_{1B}\|_{\max} = \max_{k \le p} |\bgamma_k' \hat\bxi_{1B}| = O_P(\sqrt{1/(np)}) = O_P(\sqrt{\log p/(np)})$. 

It remains to prove (\ref{EqB.7}). Firstly,  $\| \bZ_A' \bX_B / n\|^2 \le \|\bZ_A ' \bZ_A/n\| \|\bZ_B \bLambda_B\bZ_B'/n \| = O_P(p/n)$. Thus the third result holds. To show the other two rates, denote $\bv = \bX_B \bgamma_k$. Then each element of $\bv$ is iid sub-Gaussian with bounded variance proxy. Hence $\bv'\bv / n = O_P(1)$ and $\bv'\bz_j /n = O_P(n^{-1/2})$ for $j \le m$. So we have
\[
\|\bgamma_k' \bX_B' \bZ_A / n\| \le \sqrt{m} \max_{j\le m} |\bv'\bz_j /n| = O_P(n^{-1/2}) \,,
\]
and 
\[
\| \bgamma_k'\bX_B' \bX_B /n\|^2 \le \|\bZ_B \bLambda_B\bZ_B'/n\| |\bv'\bv/n| = O_P(p/n) \,.
\]
Now the proof is complete.

\end{proof}

\begin{lem}
\label{lemB.1}
For $j \le m$, $\|\hat\bxi_{jA}\| = 1 + O_P(n^{-1} + p^{-1})$ and $\|\hat\bxi_{jB}\| = O_P(n^{-1/2} + p^{-1/2})$.
\end{lem}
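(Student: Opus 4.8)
The plan is to deduce both statements from a single estimate on the noise coordinates. Since $\hat\bxi_j$ is a unit eigenvector, $\|\hat\bxi_{jA}\|^2+\|\hat\bxi_{jB}\|^2=1$; hence once I show $\|\hat\bxi_{jB}\|=O_P(n^{-1/2})$, the expansion $\sqrt{1-t}=1-\tfrac t2+O(t^2)$ gives $\|\hat\bxi_{jA}\|=1+O_P(n^{-1})$, and both conclusions then hold a fortiori with the stated (weaker) rates $n^{-1/2}+p^{-1/2}$ and $n^{-1}+p^{-1}$. So the whole lemma reduces to proving $\|\hat\bxi_{jB}\|=O_P(n^{-1/2})$.

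To control $\hat\bxi_{jB}$, I would read off the lower block of the eigen-equation $n^{-1}\bX'\bX\,\hat\bxi_j=\hat\lambda_j\hat\bxi_j$. Splitting $\bX=(\bX_A,\bX_B)$ into its first $m$ columns $\bX_A=\bZ_A\bLambda_A^{1/2}$ and its remaining $p-m$ columns $\bX_B=\bZ_B\bLambda_B^{1/2}$, the $B$-block is
\[
\Bigl(\hat\lambda_j\bI_{p-m}-\tfrac1n\bX_B'\bX_B\Bigr)\hat\bxi_{jB}=\tfrac1n\bX_B'\bX_A\,\hat\bxi_{jA}.
\]
I would show the matrix on the left is positive definite with probability tending to one, invert it, and use $\|\hat\bxi_{jA}\|\le\|\hat\bxi_j\|=1$ to obtain
\[
\|\hat\bxi_{jB}\|\ \le\ \frac{\|n^{-1}\bX_B'\bX_A\|}{\hat\lambda_j-\lambda_{\max}(n^{-1}\bX_B'\bX_B)}.
\]

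The three inputs are all already available. First, for $j\le m$, Theorem \ref{thm_eigenvalue} together with Assumption \ref{assump1} gives $\hat\lambda_j=\lambda_j(1+o_P(1))$ with $\lambda_j$ of exact order $p$. Second, the nonzero eigenvalues of $n^{-1}\bX_B'\bX_B$ coincide with those of $n^{-1}\bX_B\bX_B'=n^{-1}\bZ_B\bLambda_B\bZ_B'$, which by Lemma \ref{Key} are $O_P(p/n)=o_P(p)$; hence the denominator above is at least $c\,p$ for some $c>0$ with probability tending to one, which also secures the positive definiteness required for the inversion. Third, $\|n^{-1}\bX_B'\bX_A\|\le\|n^{-1}\bZ_A'\bX_B\|\,\|\bLambda_A^{1/2}\|=O_P(\sqrt{p/n})\cdot O(\sqrt p)=O_P(p/\sqrt n)$, where the cross-term rate $\|n^{-1}\bZ_A'\bX_B\|=O_P(\sqrt{p/n})$ is precisely (\ref{EqB.7}) (equivalently, $\|n^{-1}\bZ_A'\bX_B\|_F^2=\sum_{k\le m}n^{-2}\widetilde\bz_k'\bZ_B\bLambda_B\bZ_B'\widetilde\bz_k\le n^{-2}\lambda_{\max}(\bZ_B\bLambda_B\bZ_B')\sum_{k\le m}\|\widetilde\bz_k\|^2=O_P(p/n)$ by Lemma \ref{Key} and $\|\widetilde\bz_k\|^2=O_P(n)$, with $m$ fixed). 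Dividing, $\|\hat\bxi_{jB}\|=O_P\bigl((p/\sqrt n)/p\bigr)=O_P(n^{-1/2})$, uniformly over the finitely many $j\le m$.

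The conceptual heart — and the only place where pervasiveness is genuinely used — is the second input: the spike $\hat\lambda_j$, of order $p$, must strictly dominate the full noise bulk $\lambda_{\max}(n^{-1}\bX_B'\bX_B)=O_P(p/n)$, so that $\hat\lambda_j\bI_{p-m}-n^{-1}\bX_B'\bX_B$ is invertible with inverse of spectral norm $O_P(1/p)$. The other quantitative subtlety is that the spiked block $\bZ_A$ has only the fixed number $m$ of columns, which is what makes $\|n^{-1}\bZ_A'\bX_B\|=O_P(\sqrt{p/n})$ rather than the larger $O_P(p/n)$ that a generic operator-norm estimate would force; this is exactly the computation already carried out for (\ref{EqB.7}). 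Beyond Theorem \ref{thm_eigenvalue}, Lemma \ref{Key}, and (\ref{EqB.7}), no new machinery is needed.
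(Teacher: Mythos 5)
Your proof is correct, but it takes a genuinely different route from the paper's. The paper proves the bound on $\|\hat\bxi_{jB}\|$ by a diagonal-element sandwich: it forms the auxiliary matrix $n^{-1}\bZ\bar\bLambda_p\bZ'$ with $\bar\bLambda_p=\diag(1,\dots,1,\lambda_{m+1},\dots,\lambda_p)$, notes that its $j$-th diagonal entry equals $\hat\lambda_j\sum_k\hat\xi_{jk}^2\,\bar\lambda_k/\lambda_k$ and must lie between its extreme eigenvalues, and then uses Lemma~\ref{Key} to show both extremes concentrate at $\bar c\,p/(n\lambda_j)$. Since $\bar\lambda_k/\lambda_k=O(1/p)$ for $k\le m$ and $=1$ for $k>m$, subtracting the spiked part leaves $\sum_{k>m}\hat\xi_{jk}^2=O_P(1/n+1/p)$. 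You instead read off the $B$-block of the eigen-equation and invert the resolvent $\hat\lambda_j\bI-n^{-1}\bX_B'\bX_B$, using the spectral gap between the spike ($\asymp p$, from Theorem~\ref{thm_eigenvalue}) and the bulk ($\lambda_{\max}(n^{-1}\bX_B'\bX_B)=O_P(p/n)$, from Lemma~\ref{Key}). Your method is the more standard Schur-complement/resolvent argument, and it yields the formally sharper $\|\hat\bxi_{jB}\|=O_P(n^{-1/2})$ without the extra $p^{-1/2}$ — the $p^{-1/2}$ in the paper is an artifact of subtracting the $O(1/p)$ spiked contribution from the diagonal element, not an intrinsic lower bound, and in the paper's regime $p\ge n$ the two rates coincide. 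Both approaches use pervasiveness in the same essential way (the spike must dominate the noise bulk). One remark on logical ordering: you cite (\ref{EqB.7}), which is displayed inside the proof of Theorem~\ref{thm_eigenvector}(ii), and that proof invokes Lemma~\ref{lemB.1}; there is no actual circularity because the derivation of (\ref{EqB.7}) at the end of that proof is self-contained (it rests only on sub-Gaussian bounds and Lemma~\ref{Key}), but it would be cleaner to state the cross-term estimate $\|n^{-1}\bZ_A'\bX_B\|=O_P(\sqrt{p/n})$ as its own lemma rather than pointing to a formula embedded in a later proof.
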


\begin{proof}
Recall that $\bX = (\bZ_A \bLambda_A^{1/2}, \bZ_B \bLambda_B^{1/2})$. Let $\bZ = (\bZ_A, \bZ_B)$, then
\[
\bZ = \bX \bLambda_p^{-\frac 1 2} = \sqrt{n} \hat\bLambda_n^{\frac 12} (\hat\bxi_1, \dots, \hat\bxi_n)' \bLambda_p^{-\frac 1 2}\,,
\]
where $\bLambda_p = \diag(\bLambda_A, \bLambda_B)$ and $\hat\bLambda_n = \diag(\hat\lambda_1, \dots, \hat\lambda_n)$.  Further define $\bar \bLambda_p = \diag(1,\dots, 1, \lambda_{m+1},\dots,\lambda_{p})$ and consider the eigenvalue of the matrix $n^{-1} \bZ \bar\bLambda_p \bZ'$. The $j^{th}$ diagonal element of the matrix must lie in between its minimum and maximum eigenvalues. That is
\[
\lambda_n(\frac 1 n \bZ \bar\bLambda_p \bZ') \le \Big(\frac 1n \bZ \bar\bLambda_p \bZ'\Big)_{jj} = \hat\lambda_j \sum_{k=1}^p \hat\xi_{jk}^2 \frac{\bar\lambda_k}{\lambda_k} \le \lambda_1(\frac 1 n \bZ \bar\bLambda_p \bZ')\,,
\]
where $\hat\xi_{jk}$ is the $k$-th element of the $j^{th}$ empirical eigenvector for $j \le m$. Note that $\hat\lambda_j/\lambda_j$ converges to $1$, and decided by $\lambda_j$ both the left and right hand side converge in probability to $(m+\sum_{j=m+1}^p \lambda_j)/(n\lambda_j)$ by Lemma \ref{Key}, thus to $\bar c p /(n\lambda_j)$. So $\sum_{k=1}^p \hat\xi_{jk}^2 \bar\lambda_k/\lambda_k \overset{P} = O_P(1/n)$. Also, by definition, $\bar\lambda_k/\lambda_k = O(1/p)$ for $k \le m$ while the ratio is 1 for $k > m$. Hence, $\sum_{k=m+1}^p \hat\xi_{jk}^2 = O_P(1/n+ 1/p)$, which implies that $\|\hat\bxi_{jA} \| = \sqrt{1 - \sum_{k=m+1}^p \hat\xi_{jk}^2}  = 1+ O_P(n^{-1} + p^{-1})$ and $\|\hat\bxi_{jB}\| = O_P(n^{-1/2} + p^{-1/2})$.
\end{proof}

\section{Proofs in Section 4} \label{secC}

Let us introduce some additional notations and two lemmas in order to prove Theorem \ref{thm_eigenvectorHT}.
Assume $n$ is even, otherwise we can drop one sample without affecting the asymptotics. Let $\bar n = n/2$. For any permutation $\sigma$ of $\{1,2,\dots, n\}$, let $(i_1, i_2, \dots, i_n) := \sigma(1,2,\dots, n)$. For $r = 1, \dots, \bar n$, we define $\bw_r^{\sigma}$ and $\hat\bK_\sigma$:
\beq
\bw_r^{\sigma} = \bLambda_p^{\frac12} \bg_r / ({\bg_r}' \bLambda_p \bg_r)^{\frac12} \;\; \text{so that} \;\; k(\bX_{i_{2r-1}}, \bX_{i_{2r}}) \overset{d} = \bw_r^{\sigma} {\bw_r^{\sigma}}'\,,
\eeq
\beq
\hat\bK_\sigma = \frac {1}{\bar n} \sum_{r = 1}^{\bar n} \bw_r^{\sigma} {\bw_r^{\sigma}}'  \;\; \text{so that} \;\; \hat\bK \overset{d} = \frac{1}{\card(\mathcal S_n)} \sum_{\sigma \in \mathcal S_n} \hat\bK_{\sigma} \,,
\eeq
where $\mathcal S_n$ is the permutation group of $\{1,2,\dots, n\}$. For each fixed permutation $\sigma$, we have the following two conclusions on empirical eigenvalues and eigenvectors of $\hat\bK_\sigma$ similar to the sample covariance for sub-Gaussian factor models.

\begin{lem}
\label{lem_eigenvalueHT}
Under Assumptions \ref{assump1} and \ref{assump2}, for $j \le m$, we have
\begin{equation}
    | \hat \lambda_j^{\sigma} - \theta_j| = O_P(n^{-1/2})\,,
\end{equation}
and for $j > m$, $\hat\lambda_j^{\sigma} = O_P(n^{-1})$ where $\{\hat\lambda_j^{\sigma}\}$'s are eigenvalues of $\hat\bK_{\sigma}$.
\end{lem}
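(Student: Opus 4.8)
The plan is to exploit that, for a fixed permutation $\sigma$, $\hat\bK_\sigma = \bar n^{-1}\sum_{r=1}^{\bar n}\bw_r^\sigma(\bw_r^\sigma)'$ is an average of $\bar n$ i.i.d.\ rank-one matrices built from the \emph{unit} vectors $\bw_r^\sigma=\bLambda_p^{1/2}\bg_r/(\bg_r'\bLambda_p\bg_r)^{1/2}$ with $\bg_r$ i.i.d.\ $N(\bzero,\bI_p)$, and that $\mathbb E\hat\bK_\sigma=\bK=\diag(\theta_1,\dots,\theta_p)$ (as established in Section~\ref{sec4.3}). This is the exact analogue of the sample covariance as an average of i.i.d.\ rank-one matrices in the proof of Theorem~\ref{thm_eigenvalue}, with $\bK$ playing the role of a spiked matrix whose top $m$ eigenvalues $\theta_1,\dots,\theta_m$ are of order one and whose remaining ones are of order $1/p$. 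Setting $S_r:=\bg_r'\bLambda_p\bg_r$ and splitting each $\bg_r$ into its first-$m$ block $\bg_r^A$ and remaining block $\bg_r^B$, I would partition $\hat\bK_\sigma$ into blocks $\hat\bK_{AA}$ ($m\times m$), $\hat\bK_{AB}$, $\hat\bK_{BB}$, and $\bK$ into $\diag(\bK_{AA},\bK_{BB})$ with $\bK_{AA}=\diag(\theta_1,\dots,\theta_m)$; note $\hat\bK_{AA}=\bar n^{-1}\sum_r S_r^{-1}\bLambda_A^{1/2}\bg_r^A(\bg_r^A)'\bLambda_A^{1/2}$ and $\hat\bK_{BB}=\bar n^{-1}\sum_r S_r^{-1}\bLambda_B^{1/2}\bg_r^B(\bg_r^B)'\bLambda_B^{1/2}$. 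The three estimates I aim for are: (a) $\|\hat\bK_{AA}-\bK_{AA}\|=O_P(n^{-1/2})$; (b) $\|\hat\bK_{BB}\|=O_P(n^{-1})$; and (c) $\|\hat\bK_{AB}\|=O_P(n^{-1/2})$, which will follow from (a), (b) and $\hat\bK_\sigma\succeq\bzero$ via the PSD-block bound $\|\hat\bK_{AB}\|^2\le\|\hat\bK_{AA}\|\,\|\hat\bK_{BB}\|$.

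For (a): since $\|\bw_r^\sigma\|=1$, each summand $S_r^{-1}\bLambda_A^{1/2}\bg_r^A(\bg_r^A)'\bLambda_A^{1/2}$ is an i.i.d.\ $m\times m$ matrix with entries in $[-1,1]$ and mean $\bK_{AA}$, and $m$ is fixed, so Chebyshev applied entrywise gives $\|\hat\bK_{AA}-\bK_{AA}\|_F=O_P(n^{-1/2})$, whence $|\lambda_j(\hat\bK_{AA})-\theta_j|=O_P(n^{-1/2})$ for $j\le m$ by Weyl's inequality and $\|\hat\bK_{AA}\|=O_P(1)$. For (b): because $\lambda_k\ge c_0$ for $k>m$ one has $S_r\ge c_0\|\bg_r^B\|^2$, so a $\chi^2_{p-m}$ lower-tail bound plus a union bound over $r\le\bar n\le p$ yields $\max_{r\le\bar n}S_r^{-1}=O_P(1/p)$. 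The matrix $\hat\bK_{BB}$ shares its nonzero eigenvalues with the $\bar n\times\bar n$ matrix $\bar n^{-1}\bS\,\bG_B\bLambda_B\bG_B'\,\bS$, where $\bG_B$ is the $\bar n\times(p-m)$ matrix with rows $(\bg_r^B)'$ and $\bS=\diag(S_1^{-1/2},\dots,S_{\bar n}^{-1/2})$, so $\|\hat\bK_{BB}\|\le\bar n^{-1}(\max_r S_r^{-1})\,\|\bG_B\bLambda_B\bG_B'\|$. Applying Lemma~\ref{Key} to $\bG_B$ exactly as to $\bZ_B$ in the proof of Theorem~\ref{thm_eigenvalue} (now with ``sample size'' $\bar n\le p$) gives $(p-m)^{-1}\lambda_{\max}(\bG_B\bLambda_B\bG_B')=\bar c+O_P(\sqrt{\bar n/p})+o(1)$, i.e.\ $\|\bG_B\bLambda_B\bG_B'\|=O_P(p)$, so the two factors combine to $\|\hat\bK_{BB}\|=O_P(p^{-1})\cdot\bar n^{-1}\cdot O_P(p)=O_P(n^{-1})$.

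To assemble I would write $\hat\bK_\sigma=\diag(\hat\bK_{AA},\bzero)+\bE$ with $\|\bE\|\le\|\hat\bK_{AB}\|+\|\hat\bK_{BB}\|=O_P(n^{-1/2})$. Since $\hat\bK_{AA}\succeq\bzero$, the top $m$ eigenvalues of $\diag(\hat\bK_{AA},\bzero)$ are exactly $\lambda_1(\hat\bK_{AA})\ge\cdots\ge\lambda_m(\hat\bK_{AA})$, so Weyl's inequality gives $\hat\lambda_j^\sigma=\lambda_j(\hat\bK_{AA})+O_P(n^{-1/2})=\theta_j+O_P(n^{-1/2})$ for $j\le m$. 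For $j>m$, the Courant--Fischer min-max formula with the coordinate test subspace $\mathrm{span}\{\be_{m+1},\dots,\be_p\}$ of dimension $p-m$ gives $0\le\hat\lambda_j^\sigma\le\hat\lambda_{m+1}^\sigma\le\lambda_{\max}(\hat\bK_{BB})=\|\hat\bK_{BB}\|=O_P(n^{-1})$, which is the second claim.

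I expect (b) to be the main obstacle, and the only step where the elliptical structure genuinely enters: one must turn the normalization $\|\bw_r^\sigma\|=1$ into the $S_r^{-1}\asymp p^{-1}$ factor that precisely cancels the $\asymp p$ operator-norm growth of $\bar n^{-1}\bG_B\bLambda_B\bG_B'$ coming from Lemma~\ref{Key}. The $AA$-block estimate and the assembly are otherwise just the fixed-dimension CLT/Chebyshev plus Weyl, mirroring the sub-Gaussian sample-covariance analysis of Theorem~\ref{thm_eigenvalue}; the routine points left for the written proof are the explicit $\chi^2$ tail constant, the uniform boundedness of $\theta_1,\dots,\theta_m$ away from $0$ and $\infty$ under Assumption~\ref{assump1}, and the elementary submultiplicativity and PSD-block inequalities.
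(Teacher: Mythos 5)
Your proof is correct, and the two core estimates are the same as in the paper: the $m\times m$ spiked block of the kernel matrix concentrates at $\diag(\theta_1,\dots,\theta_m)$ at rate $n^{-1/2}$ because its summands have entries bounded by $1$, and the bulk part is controlled by peeling off the normalization $S_r^{-1}=(\bg_r'\bLambda_p\bg_r)^{-1}\asymp p^{-1}$ and applying Lemma~\ref{Key} to the remaining Gaussian Wishart $\bG_B\bLambda_B\bG_B'$, giving $O_P(n^{-1})$. Where you differ is in the assembly step. The paper passes to the $\bar n\times\bar n$ Gram dual $\widetilde\bK_\sigma=\bar n^{-1}\bW\bW'$ and writes it as a sum $\bA+\bB$ of two PSD matrices, with $\bA$ of rank at most $m$ and $\|\bB\|=O_P(n^{-1})$; Weyl then delivers both claims at once, since for $j>m$ one has $\lambda_j(\bA)=0$, and no cross-term ever appears. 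You instead stay in the $p\times p$ frame of $\hat\bK_\sigma$, decompose into $2\times2$ blocks, and must additionally bound the off-diagonal block via the PSD Cauchy--Schwarz inequality $\|\hat\bK_{AB}\|^2\le\|\hat\bK_{AA}\|\,\|\hat\bK_{BB}\|$ before invoking Weyl, plus a separate Courant--Fischer argument for $j>m$. Both organizations are valid and involve the same two nontrivial estimates; the paper's sum decomposition is slightly more economical (the off-diagonal bound and the separate min--max step disappear), while your block decomposition is arguably more transparent in that it never leaves the coordinates of $\hat\bK_\sigma$. The routine points you flag as deferred -- boundedness of $\theta_1,\dots,\theta_m$ away from $0$ and $\infty$, the $\chi^2$ tail and union bound for $\max_r S_r^{-1}=O_P(1/p)$, and the PSD block inequality -- are indeed all straightforward and consistent with the paper (the paper establishes $\lambda_{\max}(\bL^2)=O_P(1)$ with $\bL^2=p\,\bS^2$, which is your statement).
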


Now consider the leading empirical eigenvectors $\hat \bxi_j^{\sigma}$ of $\hat\bK_{\sigma}$, $j \le m$. Each $\hat\bxi_j^{\sigma}$ is divided into two parts $ \hat\bxi_j^{\sigma} = ( ({\hat\bxi_{jA}^{\sigma}})',  ({\hat\bxi_{jB}^{\sigma }})')'$ where $ \hat\bxi_{jA}^{\sigma}$ is of length $m$.

\begin{lem}
\label{lem_eigenvectorHT}
Under Assumptions \ref{assump1} and \ref{assump2}, for $j \le m$, we have \\
\begin{equation}
    \|\hat\bxi_{jA}^{\sigma} - \be_{jA} \|  = O_P(n^{-1/2})\,.
\end{equation}
In addition, $\|\hat \bxi_{jB}^{\sigma}\| = O_p(n^{-1/2} + p^{-1/2})$.
\end{lem}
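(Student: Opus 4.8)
The plan is to treat $\hat\bK_\sigma = \bar n^{-1}\bW_\sigma\bW_\sigma'$, where $\bW_\sigma := (\bw_1^\sigma,\dots,\bw_{\bar n}^\sigma)$, as the ``sample covariance matrix'' of the $\bar n$ i.i.d.\ unit vectors $\bw_r^\sigma = \bLambda_p^{1/2}\bg_r/(\bg_r'\bLambda_p\bg_r)^{1/2}$, whose population second moment is the diagonal matrix $\bK = \diag(\theta_1,\dots,\theta_p)$, and to reproduce the proofs of Theorem~\ref{thm_eigenvector} and Lemma~\ref{lemB.1} from Appendix~\ref{secB} almost verbatim under the substitutions $\bX\leftrightarrow\bW_\sigma'$, $\bZ_A\bLambda_A^{1/2}\leftrightarrow\bW_{\sigma A}'$, $\bZ_B\bLambda_B^{1/2}\leftrightarrow\bW_{\sigma B}'$, with the diverging eigenvalue $\lambda_j$ replaced by the $O(1)$ quantity $\theta_j$ and $\bLambda_A$ replaced by $\bK_A := \diag(\theta_1,\dots,\theta_m)$. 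Before that I would record the needed structural facts: by Lemma~\ref{lem_eigenvalueHT} together with the discussion in Section~\ref{sec4.3}, $\bK$ has a spiked profile, $\theta_1>\dots>\theta_m$ being fixed constants bounded away from $0$ and $\infty$ (distinctness inherited from $\lambda_1>\dots>\lambda_m$) while $\theta_j\asymp p^{-1}$ for $j>m$, and $\hat\lambda_j^\sigma = \theta_j + O_P(n^{-1/2})$ for $j\le m$, $\hat\lambda_j^\sigma = O_P(n^{-1})$ for $j>m$; moreover, writing $\bg_r'\bLambda_p\bg_r = \sum_{j\le m}\lambda_j g_{rj}^2 + \sum_{j>m}\lambda_j g_{rj}^2$, the second sum concentrates around $\sum_{j>m}\lambda_j = \bar c\,p(1+o(1))$ with fluctuation of order $\sqrt{p\log n}$ uniformly over $r\le\bar n$, so the length-$m$ blocks $\bw_{rA}^\sigma$ are i.i.d.\ vectors of order one, with covariance $\bK_A$ and finite sub-Gaussian norm, while $\|\bw_{rB}^\sigma\| = O_P(p^{-1/2})$.

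For the first bound, I would use the eigenvector duality $\hat\bxi_j^\sigma = (\bar n\hat\lambda_j^\sigma)^{-1/2}\bW_\sigma\bu_j$, $\bu_j = (\bar n\hat\lambda_j^\sigma)^{-1/2}\bW_\sigma'\hat\bxi_j^\sigma$ between the eigenvectors of $\hat\bK_\sigma$ and those of $\bar n^{-1}\bW_\sigma'\bW_\sigma$, substitute into the eigen-equation for $\bu_j$, left-multiply by $\bW_{\sigma A}(\bar n\hat\lambda_j^\sigma)^{-1/2}$ to obtain an identity for $\hat\bxi_{jA}^\sigma$, and multiply by the $m\times m$ resolvent $\bR = \sum_{k\in[m]\setminus\{j\}}\frac{\theta_j}{\theta_j-\theta_k}\be_{kA}\be_{kA}'$, which is well defined with $\|\bR(\bK_A/\theta_j)^{1/2}\| = O(1)$ precisely because the $\theta_j$ are fixed and distinct. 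This produces
\[
\frac{\hat\bxi_{jA}^\sigma}{\|\hat\bxi_{jA}^\sigma\|} - \be_{jA} \;=\; \bR\Big(\frac{\bK_A}{\theta_j}\Big)^{1/2}\bL\,\Big(\frac{\bK_A}{\theta_j}\Big)^{1/2}\be_{jA} \;+\; \br_n,
\]
where $\bL$ is the noise matrix, dominated by $\bK_A^{-1/2}\big(\bar n^{-1}\bW_{\sigma A}\bW_{\sigma A}'\big)\bK_A^{-1/2} - \bI_m = O_P(n^{-1/2})$ (the fixed-dimensional central limit theorem for the i.i.d.\ bounded-moment vectors $\bw_{rA}^\sigma$) plus a bulk correction of order $n^{-1}$ controlled through Lemma~\ref{lem_eigenvalueHT}, and where $\|\br_n\| = o_P(n^{-1/2})$ by the same bootstrapping as in the proof of Theorem~\ref{thm_eigenvector}; hence the right-hand side is $O_P(n^{-1/2})$.

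It remains to establish the analogue of Lemma~\ref{lemB.1}, namely $\|\hat\bxi_{jA}^\sigma\| = 1 + O_P(n^{-1}+p^{-1})$ and $\|\hat\bxi_{jB}^\sigma\| = O_P(n^{-1/2}+p^{-1/2})$; combined with the display above this gives both bounds of Lemma~\ref{lem_eigenvectorHT}. As in Appendix~\ref{secB}, one sandwiches a suitable diagonal entry of a rescaled Gram matrix (built from the $\bg_r$ with the top $m$ coordinates flattened to order one) between its extreme eigenvalues; up to the $(1+o_P(1))$ normalization this diagonal entry is $\hat\lambda_j^\sigma\sum_k(\hat\xi_{jk}^\sigma)^2\,\bar\lambda_k/\lambda_k$, both extreme eigenvalues share the same limit by the facts of the first paragraph together with Lemma~\ref{lem_eigenvalueHT}, and since $\bar\lambda_k/\lambda_k = O(p^{-1})$ for $k\le m$ and $=1$ for $k>m$, this forces $\|\hat\bxi_{jB}^\sigma\|^2 = \sum_{k>m}(\hat\xi_{jk}^\sigma)^2 = O_P(n^{-1}+p^{-1})$ and $\|\hat\bxi_{jA}^\sigma\| = \sqrt{1 - \|\hat\bxi_{jB}^\sigma\|^2} = 1 + O_P(n^{-1}+p^{-1})$.

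The main obstacle is the nonlinear normalization $(\bg_r'\bLambda_p\bg_r)^{-1/2}$, which couples the spiked and bulk coordinates of each $\bw_r^\sigma$ and destroys the clean isotropy the sub-Gaussian argument exploited (where $\bz_i=\bLambda_p^{-1/2}\bx_i$ is genuinely isotropic). One must show that this scalar concentrates sharply enough that (a) the standardized spiked vectors $\bK_A^{-1/2}\bw_{rA}^\sigma$ retain a finite sub-Gaussian norm and a nondegenerate, distinct-eigenvalue covariance, and (b) replacing $\bg_r'\bLambda_p\bg_r$ by its deterministic bulk proxy $\bar c\,p$ perturbs only the remainder $\br_n$, at order $o_P(n^{-1/2})$, rather than the leading term; verifying that this coupling does not leak into the $O_P(n^{-1/2})$ rate is the delicate bookkeeping in the proof.
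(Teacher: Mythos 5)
Your proposal follows essentially the same route as the paper: treat $\hat\bK_\sigma$ as a sample second-moment matrix of the i.i.d.\ unit vectors $\bw_r^\sigma$, exploit the eigenvector duality between $\bar n^{-1}\bW_\sigma\bW_\sigma'$ and $\bar n^{-1}\bW_\sigma'\bW_\sigma$, write down the eigen-equation, left-multiply by the $m\times m$ resolvent built from the reciprocals of the $\theta_j$-gaps, and bound the noise matrix and the remainder; then establish the analogue of Lemma~\ref{lemB.1} by sandwiching a diagonal entry of a rescaled Gram matrix between its extreme eigenvalues. This is in fact exactly what the paper does, down to the same resolvent identity and the same appeal to Lemma~\ref{lem_eigenvalueHT} for the bulk term.

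One slip worth correcting: $\|\bw_{rB}^\sigma\|$ is $\Theta_P(1)$, not $O_P(p^{-1/2})$. Since $\tr(\bK)=1$ and $\sum_{j\le m}\theta_j<1$ is a fixed constant, $\mathbb E\|\bw_{rB}^\sigma\|^2=\sum_{j>m}\theta_j$ is bounded away from $0$; it is each individual coordinate $\bw_{rB,k}^\sigma$ that is $O_P(p^{-1/2})$, with $p-m$ of them summing to an order-one norm. The smallness of the bulk contribution to $\hat\bK_\sigma$ (the $O_P(n^{-1})$ eigenvalue bound you invoke from Lemma~\ref{lem_eigenvalueHT}) comes from averaging over $\bar n$ samples, not from smallness of $\|\bw_{rB}^\sigma\|$. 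Fortunately this is an isolated misstatement in your ``structural facts'' preamble and your actual argument only uses the Lemma~\ref{lem_eigenvalueHT} bound, so the strategy is unaffected. You also correctly flag the normalization $(\bg_r'\bLambda_p\bg_r)^{-1/2}$ as the delicate coupling; the paper absorbs it into the diagonal matrix $\bL$ appearing in its definitions of $\bZ_A,\bZ_B$ and controls $\|\bL\|$ via concentration of the bulk quadratic form, which is what makes the ``verbatim'' substitution you propose rigorous.
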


With the above two lemmas, we prove Theorem \ref{thm_eigenvectorHT}.

\begin{proof} [\bf Proof of Theorem \ref{thm_eigenvectorHT}]
(i) First, we have the simple fact that
\[
\|\hat\bK - \bK\| = \Big\| \frac{1}{\card(\mathcal S_n)} \sum_{\sigma \in \mathcal S_n} \hat\bK_{\sigma} - \bK \Big\| \le \frac{1}{\card(\mathcal S_n)} \sum_{\sigma \in \mathcal S_n}  \|\hat\bK_{\sigma} -\bK\|\,.
\]
Now let us derive the rate of $\|\hat\bK_{\sigma} -\bK\|$. Write
\[
\hat\bK_{\sigma} - \bK = (\hat\bGamma_1, \hat\bGamma_2) \diag(\hat\bTheta_A, \hat\bTheta_B) (\hat\bGamma_1, \hat\bGamma_2)'  - \diag(\bTheta_A, \bTheta_B)\,,
\]
where $\bTheta_A =\diag(\theta_1, \dots, \theta_m)$ and $\bTheta_B = \diag(\theta_{m+1}, \dots,\theta_p)$. From Lemma \ref{lem_eigenvalueHT}, we have $\|\hat\bTheta_A - \bTheta_A \| = O_P(n^{-1/2})$ and $\|\hat\bTheta_B\| = O_P(n^{-1})$. From Lemma \ref{lem_eigenvectorHT}, we have $\|\hat\bGamma_1 - (\bI_m, {\bf 0})' \| = O_P(n^{-1/2})$. Therefore, the following two bounds hold:
\begin{align*}
\|\hat\bGamma_1\hat\bTheta_A \hat\bGamma_1' - \diag(\bTheta_A, {\bf 0}) \| \le & \|\hat\bGamma_1 (\hat\bTheta_A - \bTheta_A) \hat\bGamma_1' \| \\
& + \|(\hat\bGamma_1 - (\bI_m, {\bf 0})' ) \bTheta_A (\hat\bGamma_1 - (\bI_m, {\bf 0})' )' \| \\
& + \|(\bI_m, {\bf 0})' \bTheta_A (\hat\bGamma_1 - (\bI_m, {\bf 0})' )'\| \\
& + \|(\hat\bGamma_1 - (\bI_m, {\bf 0})' ) \bTheta_A  (\bI_m, {\bf 0})\| \,,
\end{align*}
which is $O_P(n^{-1/2})$; in addition,
\[
\|\hat\bGamma_2\hat\bTheta_B \hat\bGamma_2' - \diag({\bf 0}, \bTheta_B)\| \le \|\hat\bGamma_2\|^2 \|\hat\bTheta_B \| +  \|\bTheta_B\| = O_P(1/n + 1/p)\,,
\]
where $\|\bTheta_B\| = O(1/p)$. Therefore, $\|\hat\bK_{\sigma} - \bK\| = O_P(n^{-1/2})$ for any fixed permutation $\sigma$, which implies that $\|\hat\bK - \bK\|= O_P(n^{-1/2})$. This further implies conclusion (i) by Weyl's inequality and $\|\hat\bxi_{jB}\| = O_P(n^{-1/2})$.

(ii) We first prove the following conclusion: there exists diagonal scaling random matrix $\bD_j$ and random vector $\bh$ such that $\|\bD_j \hat \bxi_{jB} - {\bh}\| = O_P(\sqrt{\log n/(n p)})$ where $\bh$ is uniformly distributed over the centered sphere of dimension $p-m$ and radius $r = O_P(n^{-1/2})$.

To this end, we need to employ rescaled data $\bx_i^{R} =  \diag(\bI_m, \bOmega_0) \bx_i$ where $\bOmega_0 = \diag(\sqrt{\bar c/ \lambda_{m+1}}, \dots, \sqrt{\bar c/\lambda_p})$. Here superscript $R$ denotes rescaled data by $\diag(\bI_m, \bOmega_0)$. Recall that $\bx_i$ follows $EC({\bf 0}, \bLambda_p, \zeta)$.  After rescaling, $\bx_i^{R}$ follows $EC({\bf 0},  \diag(\bLambda_A, \bar c\bI_{p-m}), \zeta)$. Let $N = n(n-1)/2$ and define a $N \times p$ matrix $\bcalX$ such that $\bcalX_{((ii'), \cdot)} = (\bx_i - \bx_{i'})' / \|\bx_i - \bx_{i'}\|$, which corresponds to each pair of samples. Clearly $\hat\bK = N^{-1} \bcalX' \bcalX$. Let $\bcalX = (\bcalX_A, \bcalX_B)$, and correspondingly $\bcalX^{R} = \bcalL \bcalX \diag(\bI_m, \bOmega_0) = (\bcalL\bcalX_A, \bcalL\bcalX_B\bOmega_0)$ so that $\bcalX_A^{R} = \bcalL\bcalX_A$ and $\bcalX_B^{R} = \bcalL\bcalX_B\bOmega_0$, where
\[
\bcalL = \diag\bigl(\|\bx_i -\bx_{i'}\| / \|\bx_i^{R} -\bx_{i'}^{R}\|\bigr)\,.
\]
Other quantities are also defined for the rescaled data. For example, $\hat \bxi_j^{R}$ and $\bu_j^{R}$ are eigenvectors of $\hat\bK^{R}$ and $\widetilde\bK^{R} = N^{-1} \bcalX^{R} {\bcalX^{R}}'$. Let $\hat\bxi_j^{R} = ({\hat \bxi_{jA}^{R}}, {\hat\bxi_{jB}^{R}})'$. 

Since the estimator $\hat\bK^R$ is invariant to orthogonal transformation of the data, similar to \cite{Pau07}, we can show $\hat\bxi_{jB}^{R}/\|\hat\bxi_{jB}^{R}\|$ is distributed uniformly over the unit sphere. Define $\bh = \hat\bxi_{jB}^{R}$. From the proof of (i), we know $\|\bh\| = O_P(n^{-1/2})$. So $\bh$ is uniformly distributed over a centered ball of radius $O_P(n^{-1/2})$. Hence it only remains to bound the difference of $\hat\bxi_{jB}$ and $\bh$ to validate the claim.

Note that
\[
\|\widetilde\bK - \widetilde \bK^{R}\| \le \|N^{-1} (\bcalX_A \bcalX_A' - \bcalL \bcalX_A \bcalX_A' \bcalL) \| + \|N^{-1} (\bcalX_B \bcalX_B  - \bcalL \bcalX_B \bOmega_0^2 \bcalX_B' \bcalL)\|\,.
\]
It is not hard to show $\|\bcalL - \bI_{N}\| = O_p(\sqrt{\log n/p})$, which is in the same order as the first term. The second term is dominated by $O_P(\|\bcalL - \bI_{N}\|)$ plus
\begin{align*}
\|N^{-1} \bcalX_B (\bI_{p-m} - \bOmega_0^2) \bcalX_B' \| & = \|N^{-1} (\bI - \bOmega_0^2)^{\frac12} \bcalX_B'  \bcalX_B (\bI - \bOmega_0^2)^{\frac12} \| \\
\le \frac{1}{\card(\calS_n) } & \sum_{\sigma \in \calS_n}  \Big\|\frac{1}{\bar n} \sum_{r = 1}^{\bar n} (\bI - \bOmega_0^2)^{\frac12} \bw_{rB}^{\sigma} {\bw_{rB}^{\sigma}}' (\bI - \bOmega_0^2)^{\frac12} \Big\|\,,
\end{align*}
where $\bw_{rB}^{\sigma} = \bLambda_B^{1/2} \bg_{rB} / ({\bg_r}' \bLambda_p \bg_r)^{\frac12} $. Using the notations defined in the proof of Lemma \ref{lem_eigenvalueHT}, we have
\[
\Big\|\frac{1}{\bar n} \sum_{r = 1}^{\bar n} (\bI - \bOmega_0^2)^{\frac12} \bw_{rB}^{\sigma} {\bw_{rB}^{\sigma}}' (\bI - \bOmega_0^2)^{\frac12} \Big\| = \Big\|\frac{1}{\bar n} \bL \bR_B \bLambda_B^{1/2}(\bI - \bOmega_0^2) \bLambda_B^{1/2}\bR_B' \bL \Big\|\,,
\]
where $(\bR_B)_{i\cdot} = \bg_{iB} /\sqrt{p}$ and $$\bL = \diag((p^{-1}{\bg_1}' \bLambda_p \bg_1)^{-\frac12}, \dots, (p^{-1}{\bg_{\bar n}}' \bLambda_p \bg_{\bar n})^{-\frac12} ).$$
The right hand side is of order $O_P(\bar n^{-1} \sqrt{n/p}) = O_P(\sqrt{1/(np)})$ by Lemma \ref{Key}. This implies $\|\widetilde\bK - \widetilde \bK^{R}\| = O_P(\sqrt{\log n/p})$.
Thus by the $\sin \theta$ theorem of \cite{DavKah70}, we get $\| \bu_j - \bu_j^{R}\| = O_P( \sqrt{\log n/p})$. With (\ref{VecRel}), we have
\[
\|\sqrt{\hat\lambda_j/\hat\lambda_j^R}\bOmega_0 \hat\bxi_{jB} - \bh\| = \Bigg\|\frac{\bOmega_0 \bcalX_B' \bu_j}{\sqrt{N\hat\lambda_j^R}} - \frac{{{\bcalX_B^{R}}'} \bu_j^{R}}{\sqrt{N\hat\lambda_j^{R}}} \Bigg\| \le \|\bOmega_0\| \Bigg\|\frac{ \bcalX_B' }{\sqrt{N \hat\lambda_j^R}}\Bigg\| \| \bu_j- \bcalL \bu_j^{R} \|  \,.
\]
The right hand side is $O_P( \sqrt{\log n/(np)})$ since from above $\| \bu_j - \bu_j^{R}\| = O_P( \sqrt{\log n/p})$, $\|\bcalL - \bI\|=O_P(\sqrt{\log n/p})$, $\hat\lambda_j^{R}$ converges to $\theta_j$ and $\|\bcalX_B'/\sqrt{N}\| = O_P(n^{-1/2})$, which is true because
\[
\Big\|N^{-1} \bcalX_B' \bcalX_B \Big\| \le \frac{1}{\card(\calS_n) } \sum_{\sigma \in \calS_n} \Big\|\frac{1}{\bar n} \sum_{r = 1}^{\bar n} \bw_{rB}^{\sigma} {\bw_{rB}^{\sigma}}' \Big\| = O_P(1/n)\,,
\]
where the last equality can be seen from the proof of Lemma \ref{lem_eigenvalueHT}.
Hence we conclude, if $\bD_j = \sqrt{\hat\lambda_j/\hat\lambda_j^R}\bOmega_0$,
\[
  \|\bD_j \hat\bxi_{jB}  - \bh\| =  O_P( \sqrt{\log n/(np)})\,.
\]
We are done proving the claim. 

Now let us come back to our goal of bounding $\|\bOmega \hat\bxi_{jB}\|_{\max}$ for any $p$ by $p-m$ matrix $\bOmega$ such that $\bOmega'\bOmega = \bI_{p-m}$. Obviously,
\[
\| \bOmega \hat\bxi_{jB} \|_{\max} \le \| \bOmega \bD_j^{-1}(\bD_j\hat\bxi_{jB} - \bh) \|_{\max} + \|\bOmega\bD_j^{-1}\bh\|_{\max} = I + II \,.
\]
Let $\bOmega' = (\bgamma_1, \dots, \bgamma_p)$. Then,
\[
I \le \max_{k\le p} \|\bD_j^{-1} \bgamma_k\| \|\bD_j\hat\bxi_{jB} - \bh\| \le O_P(1) \|\bD_j\hat\bxi_{jB} - \bh\| = O_P(\sqrt{\log n/(np)})\,,
\]
where the second inequality is due to $\lambda_{k} \le C_0$ for $k > m$, $\|\bgamma_k\| \le 1$ and $\hat\lambda_j /\hat\lambda_j^R = 1 + o_P(1)$.
Thus to bound the elementwise sup-norm $\| \bOmega \hat\bxi_{jB} \|_{\max}$, it suffices to show $II = O_P(\sqrt{\log p/(np)})$.

Let $\bG = (G_1, \dots, G_{p-m})$ be standard normal distributed. Obviously, $\bh \overset{d} = \|\bh\| \cdot \bG/\|\bG\|$ where $\bG/\|\bG\|$ is uniform over unit sphere of dimension $p-m$. Provided $\|\bh\| = O_P(n^{-1/2})$, we only need to show $\|\bOmega\bD_j^{-1} \bG\|_{\max} /\|\bG\| = O_P(\sqrt{\log p/p})$. It follows from $\hat\lambda_j /\hat\lambda_j^R = 1 + o_P(1)$ and
\[
\max_{k \le p} |\bgamma_k' \bOmega_0^{-1} \bG| /\|\bG\| = O_P( \sqrt{\log p/p})\,,
\]
since $\bgamma_j' \bOmega_0^{-1} \bG$ is normally distributed with bounded variance. This completes the proof for (ii).

\end{proof}

\begin{proof}[\bf Proof of Lemma \ref{lem_eigenvalueHT}]
Recall that
\[
\bw_r^{\sigma} = \bLambda_p^{\frac12} \bg_r / ({\bg_r}' \bLambda_p \bg_r)^{\frac12} \;\;\text{and}\;\;
\hat\bK_\sigma = \frac {1}{\bar n} \sum_{r = 1}^{\bar n} \bw_r^{\sigma} {\bw_r^{\sigma}}' \,.
\]
For ease of notation, let us assume $\sigma$ is the identity permutation and ignore the index $\sigma$ in the following.
Define $\bW = (\bw_1^{\sigma}, \dots, \bw_{\bar n}^{\sigma})' = (\widetilde\bw_1, \dots, \widetilde \bw_p) = (\bZ_A \bLambda_A^{1/2}, \bZ_B \bLambda_B^{1/2})$ where $\bZ_A = (\widetilde\bz_1,\dots, \widetilde\bz_m)$, $\bZ_B = (\widetilde\bz_{m+1},\dots,\widetilde\bz_p)$ with
\[
\widetilde\bz_j = \bL (g_{1j},\dots,g_{rj})'/\sqrt{p}
\]
and
\[
 \bL = \diag((p^{-1}{\bg_1}' \bLambda_p \bg_1)^{-\frac12}, \dots, (p^{-1}{\bg_{\bar n}}' \bLambda_p \bg_{\bar n})^{-\frac12} ) \,.
\]
Then, $\hat \bK_{\sigma} ={\bar n}^{-1} \bW' \bW$. Exchanging $\bW'$ and $\bW$, we further define $\widetilde \bK_{\sigma}  = {\bar n}^{-1} \bW \bW'$, which share the same nonzero eigenvalues as $\hat \bK_{\sigma}$. Now in order to prove the lemma, let us decompose $\widetilde \bK_{\sigma} = \bA + \bB$ where
\[
\bA = \bar n^{-1} \sum_{j=1}^m \lambda_j \widetilde\bz_j \widetilde\bz_j', \quad \mbox{and} \quad
\bB = \bar n^{-1} \sum_{j=m+1}^p  \lambda_j \widetilde\bz_j \widetilde\bz_j' \,.
\]

We deal with $\bA$ first. Note that $\bA$ has the same eigenvalues as matrix $\widetilde \bA = \bar n^{-1} \bLambda_A^{1/2} \bZ_A' \bZ_A \bLambda_A^{1/2}$, where $\bZ_A$ is an $\bar n \times m$ matrix with iid rows. Therefore, we are in the low dimensional situation with fixed dimension $m$. It is easy to see that
\[
(\widetilde\bA)_{i,j} = \frac{1}{\bar n} \sum_{r = 1}^{\bar n} \frac{g_{ri} g_{rj} \sqrt{\lambda_i\lambda_j}}{ \sum_{k=1}^p \lambda_k g_{rk}^2} \,,
\]
and thus by the central limit theorem, the $j^{th}$ diagonal element of $\widetilde \bA$ is $\theta_j+O_P(n^{-1/2})$ and the $(i,j)^{th}$ off-diagonal elements are of order $O_P(n^{-1/2})$. Therefore, write $\bK = \diag(\bTheta_A, \bTheta_B)$, we have $\tilde\bA = \bTheta_A + \bH$ with $\|\bH\| = O_P(n^{-1/2})$. By Weyl's inequality,
\[
|\lambda_j(\bA) - \theta_j| = |\lambda_j(\tilde\bA) - \lambda_j(\bTheta_A)| \le \|\bH\| = O_P(n^{-1/2})\,.
\]

By the definition of $\bB$, $\bB = \bar n^{-1} \bZ_B \bLambda_B \bZ_B' $, where the $i^{th}$ row of $\bZ_B$ is $(\bZ_B)_{i\cdot} = \bg_{iB}/\|\bLambda_p^{1/2}\bg_i\|$. Let $\bZ_B = \bL\bR_B $ where $(\bR_B)_{i\cdot} = \bg_{iB} /\sqrt{p}$. Therefore
\[
\bar n \lambda_{k}(\bB) = \lambda_k(\bZ_B \bLambda_B \bZ_B') \le\lambda_k( \bR_B \bLambda_B \bR_B') \lambda_{\max} (\bL^2)\,.
\]
Since each row of $\bR_B$ is Gaussian, by Lemma \ref{Key} with $t = \sqrt{\bar n}$, for any $k \le \bar n$,
\[
\lambda_k(\bR_B \bLambda_B \bR_B') = \frac{1}{p-m}\sum_{j=m+1}^p \lambda_j + O_P\Big(\sqrt{\frac{n}{p}}\Big) = \bar c + O_P\Big(\sqrt{\frac{n}{p}}\Big) + o(1)\,.
\]
In addition, we have $\lambda_{\max}(\bL^2) = O_P(1)$. This is because
\[
\lambda_{\max}(\bL^2) = \Big( \min_{i \le n} \frac1p \sum_{j=1}^p \lambda_j g_{ij}^2 \Big)^{-1}\!\!\!\! \le \Big( \min_{i \le n} \frac1p \sum_{j=m+1}^p \lambda_j g_{ij}^2 \Big)^{-1}\!\!\!\!
= \frac{1}{\bar c} + O_P(\sqrt{\log n/p})\,.
\]
Therefore, $\lambda_k(\bB) = O_P(n^{-1})$. By Wely's Theorem, $\lambda_j(\bA) + \lambda_n(\bB) \le \hat\lambda_j^{\sigma} \le \lambda_j(\bA) + \lambda_1(\bB).$
Therefore, we conclude that $|\hat \lambda_j^{\sigma} - \theta_j|= O_P(n^{-1/2})$ for $j \le m$ and $\hat\lambda_j^{\sigma} = O_P(n^{-1})$ for $j > m$.
\end{proof}

\begin{proof} [\bf Proof of Lemma \ref{lem_eigenvectorHT}]

Similar to Lemma \ref{lemB.1}, we can prove $\|\hat\bxi_{jA}^{\sigma}\| = 1 + O_P(1/n + 1/p)$ as well as $\|\hat\bxi_{jB}^{\sigma}\| = O_P(n^{-1/2} + p^{-1/2})$. If $m = 1$, obviously the conclusion holds. So in the following we assume $m > 1$.

Write $\bW = (\bZ_A \bLambda_A^{\frac{1}{2}}, \bZ_B \bLambda_B^{\frac{1}{2}})$. Define $\bu_j$ as the eigenvector of $\widetilde \bK_{\sigma} = {\bar n}^{-1} \bW\bW'$. We obtain
\begin{equation*}
\Big( \theta_j \bI_{\bar n} - \frac{1}{\bar n} \bZ_A \bLambda_A \bZ_A' \Big) \bu_j = \bD \bu_j - \Delta \bu_j\,,
\end{equation*}
where we denote $\bD = {\bar n}^{-1} \bZ_B\bLambda_B \bZ_B'$, $\Delta = \hat \lambda_j -\theta_j$.
We then left multiply the above equation by $\bLambda_A^{1/2} \bZ_A'/\sqrt{\bar n \hat\lambda_j}$ and employ the relationship (\ref{eqB.2}) to replace $\bu_j$ by $\hat \bxi_{jA}^{\sigma}$ and $\hat\bxi_{jB}^{\sigma}$ to obtain
\begin{equation}
\label{eqC.1}
\begin{aligned}
\Big(\theta_j \bI_m - \bTheta_A \Big) \hat\bxi_{jA}^{\sigma} = & \Big(\frac{1}{\bar n} \bLambda_A^{1/2} \bZ_A' \bZ_A \bLambda_A^{1/2} - \bTheta_A \Big) \hat\bxi_{jA}^{\sigma} + \frac{\bLambda_A^{1/2} \bZ_A' \bD \bZ_A \bLambda_A^{1/2}}{\bar n \hat\lambda_j} \hat\bxi_{jA}^{\sigma} \\ & + \frac{\bLambda_A^{1/2} \bZ_A' \bD \bZ_B \bLambda_B^{1/2} }{\bar n \hat\lambda_j} \hat\bxi_{jB}^{\sigma} - \Delta \hat\bxi_{jA}^{\sigma}\,.
\end{aligned}
\end{equation}
Further, we define
\[
\bR = \sum_{k \in [m] \setminus j} \frac{1}{\theta_j - \theta_k} \be_{kA} \be_{kA}'\,.
\]
Then we have $\bR (\bI_m - \bLambda_A/\lambda_j ) = \bI_m - \be_{jA} \be_{jA}'$. Left multiplying $\bR$ to (\ref{eqC.1}),
\begin{equation*}
\hat\bxi_{jA}^{\sigma}- \langle \hat\bxi_{jA}, \be_{jA} \rangle \be_{jA} = \bR  \bH  \hat\bxi_{jA}^{\sigma}  + \bR \frac{\bLambda_A^{1/2} \bZ_A' \bD \bZ_B \bLambda_B^{1/2} }{\bar n \hat\lambda_j} \hat\bxi_{jB}^{\sigma} - \Delta \bR \hat\bxi_{jA}^{\sigma}\,,
\end{equation*}
where $\bH = {\bar n}^{-1} \bLambda_A^{1/2} \bZ_A' \bZ_A \bLambda_A^{1/2} - \bTheta_A  + ({\bar n}\hat \lambda_j)^{-1} \bLambda_A^{1/2} \bZ_A' \bD \bZ_A \bLambda_A^{1/2}$. Dividing both sides by $\|\hat\bxi_{jA}^{\sigma}\|$, we get
\begin{equation}
\label{mainExpression_uHT}
\hat\bxi_{jA}^{\sigma}/\|\hat\bxi_{jA}^{\sigma}\| - \be_{jA} = \bR \bH \be_{jA} + \br_n\,,
\end{equation}
where
\beq
 \label{mainExpression_rHT}
  \begin{aligned}
\br_n =  & \Big(\bigl\langle \frac{\hat\bxi_{jA}^{\sigma}}{\|\hat\bxi_{jA}^{\sigma}\|}, \be_{jA} \bigr\rangle - 1 \Big) \be_{jA} + \bR \bH \Big( \frac{\hat\bxi_{jA}^{\sigma}}{\|\hat\bxi_{jA}^{\sigma}\|} - \be_{jA} \Big)  \\
& + \bR \frac{\bLambda_A^{1/2} \bZ_A' \bD \bZ_B \bLambda_B^{1/2}}{\bar n \hat\lambda_j} \frac{\hat\bxi_{jB}^{\sigma}}{\|\hat\bxi_{jA}^{\sigma}\|} - \Delta \bR \Big( \frac{\hat\bxi_{jA}^{\sigma}}{\|\hat\bxi_{jA}^{\sigma}\|} - \be_{jA} \Big)\,.
  \end{aligned}
\eeq

Following \cite{FanWan15}, we are able to show $\|\br_n \| = o_P(n^{-\frac12})$. In addition, from the proof of Lemma \ref{lem_eigenvalueHT}, we have $\| \bD \| = \|\bar n^{-1} \bZ_B \bLambda_B \bZ_B' \| = O_P(n^{-1}) = o_P({n^{-\frac12}})$. So
\begin{align*}
\|\bH\| &\le  \Bigl\| n^{-1} \bLambda_A^{1/2} \bZ_A' \bZ_A \bLambda_A^{1/2} - \bTheta_A \Bigr\|  + \frac{\| \bD \|}{\hat \lambda_j} \Bigl\| \frac 1 n\bLambda_A^{1/2} \bZ_A' \bZ_A \bLambda_A^{1/2}\Bigr\| = O_P(n^{-\frac12})\,.
\end{align*}
This gives $\|\hat\bxi_{jA}^{\sigma}/\|\hat\bxi_{jA}^{\sigma}\| - \be_{jA}\|  = O_P(n^{-1/2})$ since clearly $\|\bR\| = O(1)$. Together with $\|\hat\bxi_{jA}^{\sigma}\| = 1 + O_P(1/n + 1/p)$, it follows $\|\hat\bxi_{jA}^{\sigma} - \be_{jA} \|  = O_P(n^{-1/2})$. The proof is complete.
\end{proof}

\section{A Technical Lemma} \label{secD}

Recall that $\bz_{i} = \bLambda_p^{-1/2} \bx_i$ is the standardized version of the transformed data $\bx_i$. We have the following theorem for $\bz_i$, which will be useful for the proofs in Section \ref{sec3} and \ref{sec4}.

\begin{lem} \label{Key}
Let $\bZ$ be the $n\times p$ matrix ($p \ge n$) with rows $\bz_i'$. Assume $\bz_i$ to be iid sub-Gaussian random vector with $\|\bz_i\|_{\phi_2} = \sup_{\bu \in \mathcal S^{p-1}} |\langle \bz_i, \bu \rangle|_{\phi_2} \le M$ for some constant $M>0$. Condition (\ref{indApproxCond}) holds for $\bz_i$.
The columns of $\bZ$ are denoted by $\widetilde\bz_j$ of length $n$. Then for $\forall t \ge 0$, let $\delta = C_0\sqrt{\frac{n}{p}} + \frac{t}{\sqrt{p}}$, we have
\beq\label{eqD.1}
\max_{i \le n} \Big| \lambda_{i} \Big(\frac 1p \sum_{i=1}^p w_j \widetilde\bz_j \widetilde\bz_j' \Big) -\bar w \Big| \le \max\{ \delta^2, \delta\} \,,
\eeq
with probability at least $1 - 2 exp(-c_0 t^2)$, where $C_0, c_0 > 0$ depend on $M, M_1, M_2$. Here, $|w_j|$ is bounded from above for all $j$ and $\bar w = p^{-1}\sum_{j=1}^p w_j$.
\end{lem}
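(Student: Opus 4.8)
Write $\bW=\diag(w_1,\dots,w_p)$, so that the object to be controlled is the $n\times n$ Gram-type matrix $\bG:=p^{-1}\sum_{j=1}^p w_j\widetilde\bz_j\widetilde\bz_j'=p^{-1}\bZ\bW\bZ'$; since the rows of $\bZ$ are centered and isotropic, $\mathbb E\bG=(p^{-1}\sum_j w_j)\,\bI_n=\bar w\,\bI_n$, so (\ref{eqD.1}) is exactly the operator-norm estimate $\|\bG-\bar w\,\bI_n\|\le\max\{\delta^2,\delta\}$. The plan is the standard covering-number reduction: fix a $\tfrac14$-net $\mathcal N$ of the sphere $\mathcal S^{n-1}$ with $|\mathcal N|\le 9^n$, so that $\|\bG-\bar w\,\bI_n\|\le 2\max_{\bu\in\mathcal N}|\bu'(\bG-\bar w\,\bI_n)\bu|$, and reduce everything to the one-point tail bound
\[
\mathbb P\big(|\bu'\bG\bu-\bar w|>s\big)\ \le\ 2\exp\big(-c\,p\,\min\{s^2,s\}\big)\qquad(\bu\in\mathcal S^{n-1}\ \text{fixed}),
\]
with $c>0$ depending only on $M,M_1,M_2$ and $\sup_j|w_j|$. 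Taking $s=\tfrac12\max\{\delta^2,\delta\}$ and using the elementary identity $\min\{\mu^2,\mu\}=\delta^2$ for $\mu=\max\{\delta^2,\delta\}$, one gets $p\min\{s^2,s\}\asymp p\delta^2\gtrsim C_0^2 n+t^2$, which dominates $\log|\mathcal N|+c_0t^2$ once $C_0$ is chosen large in terms of $c$; a union bound over $\mathcal N$ then yields (\ref{eqD.1}) with probability at least $1-2\exp(-c_0t^2)$. (The regime $p\ge n$ is the one in which the error level $\max\{\delta^2,\delta\}$ is informative.)

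For the one-point bound, observe that $\bu'\bG\bu=p^{-1}\vecc(\bZ)'(\bu\bu'\otimes\bW)\vecc(\bZ)$ is a quadratic form in $\vecc(\bZ)$ with mean $\bar w$, whose defining matrix $\bu\bu'\otimes\bW$ has Frobenius norm $\le(\sup_j|w_j|)\sqrt p$ and operator norm $\le\sup_j|w_j|$. If the entries of $\bZ$ were independent this would be exactly the Hanson--Wright inequality, giving the stated Bernstein-type tail; they are independent across samples $i$ but not across coordinates $j$, so one instead isolates the ``diagonal-in-$j$'' piece $D=p^{-1}\sum_i u_i^2\sum_j w_j(z_{ij}^2-1)$ and the genuinely off-diagonal piece $O=p^{-1}\sum_{i\ne i'}u_iu_{i'}\,\bz_i'\bW\bz_{i'}$. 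For $D$ the independence of the $\bz_i$ lets the moment generating function of $-\theta D$ factorize over $i$, and each factor is controlled by Condition (\ref{indApproxCond}) applied to $\bz_i$ (the bounded weights $w_j$ absorbed into the constants, the complementary upper tail of $\sum_j w_jz_{ij}^2$ furnished by sub-exponentiality of $z_{ij}^2$); since $\sum_i u_i^4\le\sum_i u_i^2=1$, the product MGF is no worse than that of a single such quadratic form, which gives the Bernstein tail for $D$ --- this is precisely the role of Lemma \ref{QuadForm}. For $O$, a standard decoupling reduces it to a bilinear form in two independent sub-Gaussian blocks; conditioning on one block makes it a sub-Gaussian linear form in the other, and on the high-probability event $\{\max_i\|\bz_i\|^2\lesssim p\}$ (itself controlled by (\ref{indApproxCond})) this yields $|O|\lesssim p^{-1/2}$ at the Gaussian scale, i.e.\ the same order as $D$. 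Combining $D$ and $O$ gives the displayed one-point bound.

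The main obstacle is the diagonal term $D$: without independence of the coordinates of $\bz_i$ the quadratic form $\sum_j w_jz_{ij}^2$ falls outside the scope of the classical Hanson--Wright inequality, and Condition (\ref{indApproxCond}) is imposed precisely to substitute for that missing independence; all of the real technical content is packaged into the companion quadratic-form lemma of this appendix (Lemma \ref{QuadForm}). Granting that, the remaining steps --- the $\tfrac14$-net reduction, the off-diagonal decoupling via $\|\bz_i\|^2\lesssim p$, and the bookkeeping that turns $\min\{s^2,s\}$ with $s\asymp\max\{\delta^2,\delta\}$ into the stated probability $1-2\exp(-c_0t^2)$ --- are routine, and I would carry them out in that order.
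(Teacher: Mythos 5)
Your proposal follows essentially the same outline as the paper's proof: a $\tfrac14$-net reduction of the operator norm to a one-point tail bound, followed by the decomposition of $\bu'\bG\bu-\bar w$ into a same-index piece $D=p^{-1}\sum_i u_i^2(\bz_i'\bW\bz_i-\tr\bW)$ handled via condition (\ref{indApproxCond}) and Lemma~\ref{QuadForm}, and a cross-index piece $O=p^{-1}\sum_{i\ne i'}u_iu_{i'}\bz_i'\bW\bz_{i'}$ handled by decoupling and conditioning. The only cosmetic difference in the $D$ part is that you propose an MGF factorization over $i$, whereas the paper uses the coarser bound $|D|\le p^{-1}\max_{i\le n}|\bz_i'\bW\bz_i-\tr\bW|$ (valid since $\sum_i u_i^2=1$) plus a union bound over $i$; both lead to the same Bernstein tail. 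You also fold the WLOG step for signed $w_j$ and the degenerate regime $t\gtrsim\sqrt p$ into ``bookkeeping,'' which the paper spells out explicitly but is indeed routine.

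The one genuine gap is your choice of conditioning event for $O$. After decoupling, the cross term is $\sum_{i\in\mathcal T}u_iH_i$ with $H_i:=\bz_i'\bW\bZ_{\mathcal T^c}\bu_{\mathcal T^c}$, and conditionally on $\bZ_{\mathcal T^c}$ it is sub-Gaussian with parameter proportional to $\|\bW^{1/2}\bZ_{\mathcal T^c}\bu_{\mathcal T^c}\|\le\|\bW^{1/2}\bZ_{\mathcal T^c}\|_{\mathrm{op}}$. Your event $\{\max_i\|\bz_i\|^2\lesssim p\}$ controls row norms only, not the operator norm of $\bW^{1/2}\bZ_{\mathcal T^c}$: with only rowwise control one can lose a factor up to $\sqrt{|\mathcal T^c|}\sim\sqrt n$ (take $\bu_{\mathcal T^c}$ spread out and the rows $\bz_k$ nearly aligned), degrading the conditional tail for $O$ from $\exp(-cp\delta^2)$ to $\exp(-cp\delta^2/n)$, which no longer beats the $9^n$ cardinality of the net. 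The paper instead conditions on $\mathcal E=\{\lambda_{\max}(p^{-1}\bZ\bW\bZ')\le C_2^2\}$, i.e.\ directly on the operator norm being $O(\sqrt p)$, which holds with probability $\ge 1-2\exp(-c_1t^2)$ by the standard sub-Gaussian covariance concentration bound (Theorem 5.39 of Vershynin), and handles the complementary regime $t\gtrsim\sqrt{p}$ by a crude eigenvalue bound. Replacing your pointwise event by this operator-norm event closes the gap; everything else in your outline is consistent with the paper's argument.
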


\begin{proof}
Without loss of generality, let us assume all the $w_j$'s are non-negative and bounded away from zero. Otherwise, we subtract the minimal one from all the $w_j$'s. Let the new non-negative weights to be $\widetilde w_j = w_j - w_{\min} + 1$. Since all the $n$ eigenvalues concentrate to the same number, it is easy to separately consider the concentration for $\lambda_{i} (p^{-1} \sum_{i=1}^p \widetilde w_j \widetilde\bz_j \widetilde\bz_j' )$ and $\lambda_{i} (p^{-1} \sum_{i=1}^p \widetilde\bz_j \widetilde\bz_j' )$, which both have nonnegative lower bounded weights.

Let $\bD = \diag(w_1, \dots, w_p)$, so $p^{-1} \sum_{i=1}^p w_j \widetilde\bz_j \widetilde\bz_j' = p^{-1} \bZ\bD\bZ'$. Assume without loss of generality that $w_j$ is decreasing and $w_p = 1$. First we have
\[
\lambda_{\max} \Big(\frac 1 p \bZ\bD\bZ'\Big) = \lambda_{\max} \Big(\frac 1 p \bD^{1/2}\bZ'\bZ \bD^{1/2}\Big) \le  \lambda_{\max}(\bD) \lambda_{\max}\Big(\frac 1 p \bZ'\bZ\Big) \,.
\]
Since $\bz_i$ is sub-Gaussian, by Theorem 5.39 of \cite{Ver10}, we have with probability at least $1-2\exp(-c_1 t^2)$,
\[
\lambda_{\max}\Big(\frac 1 n \bZ'\bZ\Big) \le \Big(1 + C_1\sqrt{\frac{p}{n}} + \frac{t}{\sqrt{n}}\Big)^2.
\]

If $t \ge C_1\sqrt{w_1 p}$,  without loss of generality we assume $C_1 \ge 1$. Then since $|\bar w| \le w_1 \le \delta^2$, the minimum eigenvalue of $p^{-1} \bZ\bD\bZ'$ satisfies the conclusion. It remains to validate the conclusion for the maximal eigenvalue. According to the above, $\lambda_{\max}(\bD) \lambda_{\max}(\bZ'\bZ/p)$ is bounded by
\begin{align*}
w_1 \frac{n}{p} \Big(1 + C_1\sqrt{\frac{p}{n}} & + \frac{t}{\sqrt{n}}\Big)^2 \le \frac{n}{p} \Big(\sqrt{w_1} + \sqrt{\bar w}\sqrt{\frac{p}{n}} + \frac{(1+\sqrt{w_1})t}{\sqrt{n}}\Big)^2 \\
& = \Big(\sqrt{\bar w} + \sqrt{w_1} \sqrt{\frac{n}{p}} + \frac{(1+\sqrt{w_1})t}{\sqrt{p}}\Big)^2 \le \bar w + \max\{\delta^2, \delta\}\,.
\end{align*}
Thus the theorem holds for $t \ge C_1\sqrt{w_1 p}$.

We only need to consider the case $t < C_1\sqrt{w_1 p}$. This corresponds to conditioning on the event $\mathcal E = \{ \lambda_{\max}(\bZ\bD\bZ'/p) \le C_2^2  \}$ where $C_2 \ge \sqrt{w_1}(1 + C_1 + C_1 \sqrt{w_1})$. Obviously, with probability at least $1-2\exp(-c_1 t^2)$, the event $\mathcal E$ holds. To prove (\ref{eqD.1}), it suffices to show that with high probability
\[
\Big\|\frac 1p \bZ\bD\bZ' - \bar w \bI \Big\| \le 2 \max_{\bx \in \mathcal N} \Big|\frac{1}{p} \|\bD^{\frac12} \bZ' \bx\|^2 - \bar w \Big| \le \delta\,,
\]
where $\mathcal N$ is the $\frac14$-net covering the unit sphere $\mathcal S^{n-1}$ and $|\mathcal N| \le 9^n$ \citep{Ver10}. Due to the following decomposition,
\[
\|\bD^{\frac12} \bZ' \bx\|^2 = \Big\|\sum_{i}^n x_i \bD^{\frac12} \bz_i \Big\|^2 = \sum_{j=1}^p w_j + \sum_{i=1}^n x_i^2 (\bz_i'\bD\bz_i - \tr(\bD)) + \sum_{j \ne k} x_j x_k \bz_j'\bD \bz_k\,,
\]
we have
\[
\Big|\frac{1}{p} \|\bD^{\frac12} \bZ' \bx\|^2 - \bar w \Big| \le \Big|\frac1p\sum_{i=1}^n x_i^2 (\bz_i'\bD\bz_i - \tr(\bD))\Big| + \Big|\frac1p\sum_{j \ne k} x_j x_k \bz_j'\bD\bz_k\Big| =: | \Delta_1| + |\Delta_2| \,.
\]
Therefore,
\begin{equation} \label{eqD.2}
\begin{aligned}
\mathbb P\Big(\Big\|\frac 1p \bZ\bD\bZ' - \bar w \bI \Big\|  > & \delta\Big) \le \mathbb P\Big(\max_{\bx \in \mathcal N} \Big|\frac{1}{p} \|\bD^{\frac12} \bZ' \bx\|^2 - \bar w \Big| > \delta/2 \Big) \\
& \le \mathbb P\Big( \max_{\bx \in \mathcal N} |\Delta_1| > \delta/4\Big) +  \mathbb P\Big( \max_{\bx \in \mathcal N} |\Delta_2| > \delta/4\Big)\,.
\end{aligned}
\end{equation}
We need to separately bound the two terms on the right hand side.

Since $|\Delta_1| \le \max_{i \le n} |\bz_i'\bD\bz_i - \tr(\bD)|/p$,
\beq \label{eqD.3}
\mathbb P\Big( \max_{\bx \in \mathcal N} |\Delta_1| > \delta/4\Big) \le |\mathcal N| \cdot n \cdot \max_{\bx \in \mathcal N, i \le n} \mathbb P\Big( |\bz_i'\bD\bz_i - \tr(\bD)| > \frac{\delta p}{4}  \Big)\,.
\eeq
For a fixed $\bx$ and $i$, we now bound $\mathbb P( |\bz_i'\bD\bz_i - \tr(\bD)| > \delta p/4 )$. By Lemma \ref{QuadForm}, choosing $\bA = \bD^{1/2} /\sqrt{p}$, since $\bz_i$ satisfies (\ref{indApproxCond}), we have
\[
\mathbb P\Big( |\bz_i'\bD\bz_i - \tr(\bD)|/p > C_{\psi} w_1 \Big(\sqrt{\frac u p} + \frac u p\Big) \Big) \le 3 \exp(-u)\,,
\]
where $C_{\psi}$ is defined in Lemma \ref{QuadForm} and is bounded since $\psi = w_1$ is bounded.
Choose $u = C_3\delta^2 p$ so that $u/p < 1$ and $C_3 < 1/(64C_{\psi}w_1)$,  which implies $\delta/4 > 8C_{\psi}w_1 \sqrt{u/p} > C_{\psi} w_1 (\sqrt{u/p} + u/p)$. So $\mathbb P( |\bz_i'\bD\bz_i - \tr(\bD)| > \delta p/4 )$ is bounded by $3 \exp(-C_3 \delta^2 p)$.
Therefore, from (\ref{eqD.3}), we have
\[
\mathbb P\Big( \max_{\bx \in \mathcal N} |\Delta_1| > \delta/4\Big) \le 9^n \cdot n \cdot 3\exp(-C_3\delta^2p)  \le \exp(-c_0 t^2)
\]
by choosing $C_0$ in the definition of $\delta$ large enough. This proves the first term in (\ref{eqD.2}).

For the second term  in (\ref{eqD.2}), we apply the decoupling technique. By Lemma 5.60 of \cite{Ver10},
\[
|\Delta_2| \le \frac{4}{p} \max_{\mathcal T \subseteq [n]} \Big| \sum_{j \in \mathcal T, k \in \mathcal T^c} x_j x_k \bz_j'\bD \bz_k\Big| \,.
\]
So we have
\beq \label{eqD.4}
\mathbb P\Big( \max_{\bx \in \mathcal N} |\Delta_2| > \delta/4\Big) \le |\mathcal N| |\mathcal T|\cdot \max_{\bx, \mathcal T} \mathbb P\Big(\Big| \sum_{j \in \mathcal T, k \in \mathcal T^c} x_j x_k \bz_j'\bD \bz_k\Big| > \frac{\delta p}{16}\Big)\,.
\eeq
For each fixed $\bx$ and $\mathcal T$, we first consider the above probability conditioning on $\bz_k$ for $k \in \mathcal T^c$. Let $H_j = \sum_{k \in \mathcal T^c} x_k \bz_j'\bD \bz_k = \bz_j' \bD \bZ_{\mathcal T^c} \bx_{T^c}$ where $\bZ_{\mathcal T^c}$ is constructed by columns $\bz_k$ for $k \in \mathcal T^c$ and $\bx_{\mathcal T^c}$ contains the coordinates of $\bx$ corresponding to $\mathcal T^c$ . We know $H_j$ is sub-Gaussian since
\[
\|H_j\|_{\phi_2} \le \|\bz_j\|_{\phi_2} \|\bD^{1/2}\| \|\bD^{1/2}\bZ_{\mathcal T^c}\| \|\bx_{T^c}\| \le  \sqrt{w_1 } \|\bz_j\|_{\phi_2} \|\bD^{1/2}\bZ\| \le C_2 M \sqrt{w_1 p}\,,
\]
where the last inequality is due to conditioning on the event $\mathcal E$. So there exists a constant $C_4 > 0$ independent of $\bZ_{\mathcal T^c}$ such that $\|H_j\|_{\phi_2} \le C_4 \sqrt{p}$. Furthermore, the weighted sum of $H_j$'s is also sub-Gaussian distributed with $\|\sum_{j \in \mathcal T} x_j H_j \|_{\phi_2} \le (\sum_{j \in \mathcal T} x_j^2 \|H_j\|_{\phi_2}^2)^{1/2} \le C_4 \sqrt{p}$. Hence, from (\ref{eqD.4}),
\[
\mathbb P\Big( \max_{\bx \in \mathcal N} |\Delta_2| > \delta/4\Big) \le 9^n \cdot 2^n \cdot \mathbb E\Big[ \mathcal P\Big(\Big| \sum_{j \in \mathcal T} x_j H_j \Big| > \frac{\delta p}{16} \Big| \bZ_{\mathcal T^c}\Big) \Big] \,,
\]
where the right hand side is bounded by
\[
9^n \cdot 2^n \cdot 2 e^{-C \delta^2 p^2 / (256 C_4^2 p)} \le \exp(-c_0 t^2)
\]
by choosing a large enough $C_0$ in the definition of $\delta$. So we bounded the second term.

To conclude, from (\ref{eqD.2}),
\[
\mathbb P\Big(\bigl\|\frac 1p \bZ\bD\bZ' - \bar w \bI\bigr\| > \delta\Big) \le 2\exp(-c_0 t^2) \,,
\]
which implies (\ref{eqD.1}).
\end{proof}

\begin{lem} \label{QuadForm}
Let $\bA$ be a $m$ by $n$ matrix and $\bSigma := \bA'\bA$. Suppose $\bx = (x_1, \dots, x_n)$ is an isotropic sub-Gaussian random vector, that is,
\[
\mathbb E[\exp(\balpha' \bx)] \le \exp(\|\balpha\|^2/2)\,,
\]
for all $\balpha \in \mathbb R^n$. For all $t > 0$,
\beq \label{eqD.5}
\mathbb P\Big( \|\bA \bx\|^2 > \tr(\bSigma) + 2\sqrt{\tr(\bSigma^2)t} + 2\|\bSigma\| t \Big) \le e^{-t}\,;
\eeq
if furthermore the SVD decomposition of $\bA = \bU \bD \bV'$ where $\bD$ is a $m$ by $m$ diagonal matrix and $\bU,\bV$ consist of left and right orthogonal singular vectors and (\ref{indApproxCond}) holds for $\bV' \bx$, we have,
\beq \label{eqD.6}
\mathbb P\Big( \|\bA \bx\|^2 < \tr(\bSigma) - C_{\psi} \sqrt{\tr(\bSigma^2)t} - C_{\psi} \|\bSigma\| t \Big) \le 2e^{-t}\,,
\eeq
where $C_{\psi} = \max\{2\psi+2\psi\sqrt{M_2}, 2+2M_1^{-1}\}$ and $\psi = \lambda_{1}(\bD^2)/\lambda_{m}(\bD^2)$ is the condition number of $\bSigma$.
\end{lem}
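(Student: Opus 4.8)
The plan is to handle the two tails by genuinely different mechanisms: the upper bound (\ref{eqD.5}) uses only the sub-Gaussian moment hypothesis, whereas the lower bound (\ref{eqD.6}) is the delicate one and is where condition (\ref{indApproxCond}) enters.

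For (\ref{eqD.5}), write $\|\bA\bx\|^2 = \bx'\bSigma\bx = \|\bSigma^{1/2}\bx\|^2$ and, for $\lambda \ge 0$, introduce an independent $\bg \sim N(\bzero,\bI)$ and use the Gaussian integral identity $\exp(\lambda\|\bSigma^{1/2}\bx\|^2) = \mathbb E_{\bg}\exp(\sqrt{2\lambda}\,\bg'\bSigma^{1/2}\bx)$. Taking $\mathbb E_{\bx}$, exchanging the order of integration, and applying the hypothesis $\mathbb E\exp(\balpha'\bx)\le\exp(\|\balpha\|^2/2)$ with $\balpha=\sqrt{2\lambda}\,\bSigma^{1/2}\bg$ gives $\mathbb E_{\bx}\exp(\sqrt{2\lambda}\,\bg'\bSigma^{1/2}\bx)\le\exp(\lambda\,\bg'\bSigma\bg)$, hence $\mathbb E\exp(\lambda\bx'\bSigma\bx)\le\mathbb E_{\bg}\exp(\lambda\,\bg'\bSigma\bg)$. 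This reduces the problem to the Gaussian quadratic form $\bg'\bSigma\bg \overset{d}{=}\sum_{k\le m}d_k^2 g_k^2$, where $d_k^2$ are the eigenvalues of $\bSigma$ and $g_k$ are i.i.d.\ standard normal; the classical weighted $\chi^2$ deviation inequality of Laurent and Massart, $\mathbb P\big(\sum_k d_k^2(g_k^2-1)>2\sqrt{\tr(\bSigma^2)t}+2\|\bSigma\|t\big)\le e^{-t}$, together with the Laplace-transform domination and a Chernoff argument, then yields (\ref{eqD.5}).

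For (\ref{eqD.6}) the Gaussian comparison is useless (it runs the wrong way for a negative exponent), so I would work with $\by:=\bV'\bx$, which is isotropic sub-Gaussian in $\mathbb R^m$ and satisfies (\ref{indApproxCond}). Order the eigenvalues $d_1^2\ge\cdots\ge d_m^2>0=:d_{m+1}^2$ (so $\psi=d_1^2/d_m^2$) and use Abel summation,
\[
\tr\bSigma-\bx'\bSigma\bx=\sum_{k\le m}d_k^2(1-y_k^2)=\sum_{l=1}^m(d_l^2-d_{l+1}^2)\Big(l-\textstyle\sum_{k\le l}y_k^2\Big).
\]
Since the weights $\nu_l:=(d_l^2-d_{l+1}^2)/d_1^2$ are nonnegative and sum to one, convexity of $\exp$ gives, for $\lambda>0$,
\[
\mathbb E\exp\!\big(\lambda(\tr\bSigma-\bx'\bSigma\bx)\big)\le\sum_{l=1}^m\nu_l\,\mathbb E\exp\!\Big(\lambda d_1^2\textstyle\sum_{k\le l}(1-y_k^2)\Big).
\]
Everything then rests on a uniform partial-sum estimate: for all $l\le m$ and $0\le\theta\le\theta_0$ (with $\theta_0$ depending only on $M_1$), $\mathbb E\exp\big(\theta\sum_{k\le l}(1-y_k^2)\big)\le\exp(C'\theta^2 m)$ for some $C'=C'(M_1,M_2)$. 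To prove it, split $\sum_{k\le l}(1-y_k^2)=\sum_{k\le m}(1-y_k^2)+\sum_{l<k\le m}(y_k^2-1)$: the first piece is controlled directly by (\ref{indApproxCond}); the second equals $\|\bV_{(l,m]}'\bx\|^2-(m-l)$ (with $\bV_{(l,m]}$ the last $m-l$ columns of $\bV$), an upper-tail quantity whose Laplace transform is, by the same Gaussian integral identity as above (now legitimate, the exponent being positive), dominated by that of a $\chi^2_{m-l}$; a H\"older (or Cauchy--Schwarz) split of the two contributions then produces $C'$. Plugging back with $\theta=\lambda d_1^2$ gives $\mathbb E\exp(\lambda(\tr\bSigma-\bx'\bSigma\bx))\le\exp(C'\lambda^2\|\bSigma\|^2 m)$ on $0\le\lambda\le\theta_0/\|\bSigma\|$, and a Bernstein--Chernoff optimization yields a tail with variance proxy of order $\|\bSigma\|^2 m$ and linear scale of order $\|\bSigma\|/\theta_0$; the elementary bound $\|\bSigma\|\sqrt m=d_1^2\sqrt m\le\psi\sqrt{\tr(\bSigma^2)}$ (using $\tr(\bSigma^2)=\sum_k d_k^4\ge m\,d_m^4$) then converts this into the stated form, the condition number $\psi$ entering exactly here, and careful bookkeeping through the H\"older split and the optimization fixes the constant as $C_\psi=\max\{2\psi+2\psi\sqrt{M_2},\,2+2M_1^{-1}\}$.

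I expect the uniform partial-sum estimate to be the crux: (\ref{indApproxCond}) controls only the full sum $\sum_{k\le m}(1-y_k^2)$, while the layering identity unavoidably produces the truncated sums $\sum_{k\le l}(1-y_k^2)$, so one must recover control of all of them at once --- without a spurious $\log m$ --- by trading the missing piece $\sum_{l<k\le m}(y_k^2-1)$ against its Gaussian-comparable upper bound. The remaining ingredients (the Gaussian reductions, the $\chi^2$ tail bounds, the Chernoff steps) are routine.
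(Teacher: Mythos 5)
Your argument is sound but takes a genuinely different route from the paper's, particularly for the lower tail (\ref{eqD.6}). For (\ref{eqD.5}) the paper does not reprove anything: it simply cites the Hsu--Kakade--Zhang generalization of the Laurent--Massart $\chi^2$ inequality to dependent sub-Gaussian vectors; your Gaussian-integral-identity proof of (\ref{eqD.5}) is a correct, standard way to reconstruct that result. For (\ref{eqD.6}) the paper also passes to $\by = \bV'\bx$, but then runs a two-event \emph{complementation} argument: it first uses (\ref{indApproxCond}) plus Chernoff to show $\mathbb P\bigl(\by'\by < m - 2\sqrt{mM_2 t} - 2M_1^{-1}t\bigr)\le e^{-t}$; on the complementary event $\mathcal E$, a lower-tail deviation of $\by'\bD^2\by$ is traded (by subtracting the two inequalities) for an upper-tail deviation of the \emph{nonnegative} quadratic form $\by'(d_1\bI_m-\bD^2)\by$, which is then handled by (\ref{eqD.5}) with $\bD^2$ replaced by $d_1\bI_m-\bD^2$; the condition number enters via $\sqrt{\sum_j(d_1-d_j)^2}\le d_1\sqrt m\le\psi\sqrt{\tr(\bSigma^2)}$, and the union over $\mathcal E$, $\mathcal E^c$ produces the $2e^{-t}$. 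Your route avoids the event split and instead uses Abel summation to write $\tr\bSigma-\bx'\bSigma\bx$ as $d_1^2\sum_l\nu_l S_l$ with $S_l=\sum_{k\le l}(1-y_k^2)$ and $\nu_l\ge 0$, $\sum_l\nu_l=1$, pushes the Laplace transform inside by convexity, and proves a uniform mgf bound for each $S_l$ by Cauchy--Schwarz-splitting $S_l$ into the full sum (governed by (\ref{indApproxCond})) and the tail $\sum_{k>l}(y_k^2-1)$ (governed by the Gaussian comparison and a $\chi^2_{m-l}$ mgf, uniformly because $m-l\le m$), followed by a Bernstein--Chernoff step. This is valid and more self-contained, and it actually gives a tail $e^{-t}$ rather than $2e^{-t}$. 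The one place you overclaim is the constant: after the Cauchy--Schwarz doubling and the $\chi^2$-mgf bound, the resulting constant is a different (and generally larger) function of $M_1,M_2,\psi$ than the paper's $C_\psi=\max\{2\psi+2\psi\sqrt{M_2},\,2+2M_1^{-1}\}$, though of the same shape; you should state it honestly as some $C'_\psi$ of that form rather than asserting an exact match. Since the lemma is used only to supply $O_P$ rates in Lemma \ref{Key}, this discrepancy is immaterial downstream. In short: the paper buys brevity and a clean constant from the complementation trick; your layering argument buys a fully self-contained proof and a tighter tail probability at the cost of a looser constant.
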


The above lemma is an extension of the exponential inequality for iid one dimensional sub-Gaussian variables proved by \cite{LauMas00}. It is the Hanson-Wright inequality for the quadratic functional of a sub-Gaussian random vector. \cite{RudVer13} showed this inequality for independent sub-Gaussian elements. \cite{HsuKakZha12} obtained the upper tail bound (\ref{eqD.5}) under a much weaker assumption of general sub-Gaussian vector with dependency. However, they did not provide result for the lower tail bound. Note that quadratic functionals are different from linear functionals in that changing the sign of $\bx$ does not naturally give the lower tail bound. In the following, we prove (\ref{eqD.6}) under (\ref{indApproxCond}). This bound is used for proving Lamma \ref{Key}.

\begin{proof}

Denote $\by = \bV' \bx \in \mathbb R^m$ so that $\|\bA\bx\|^2 = \by' \bD^2 \by$. Write $\bD^2 = \diag(d_1, \dots, d_m)$ with decreasing diagonal elements. Since (\ref{indApproxCond}) holds for $\by$, we have for $\theta \le M_1$,
\[
\mathbb P\Big( \by'\by < m - 2\sqrt{m M_2 t} - 2 M_1^{-1} t \Big) \le \exp\Big(-2\theta\sqrt{m M_2 t} - 2 \theta M_1^{-1} t + M_2 \theta^2 m\Big) \,.
\]
Choose $\theta = \sqrt{t/(M_2 m)}$ which is smaller than $M_1$ if $t \le m M_1^2 M_2$ while choose $\theta = M_1$ if $t > m M_1^2 M_2$. In any case, we can show that the right hand side is bounded by $\exp(-t)$. Define an event $\mathcal E = \{ \by'\by \ge m - 2\sqrt{m M_2 t} - 2 M_1^{-1} t \}$. Then $\mathbb P(\mathcal E^c) \le \exp(-t)$.  Futhermore, we define
\[
\mathcal A = \Big\{\by' \bD^2 \by <  \sum\nolimits_{j \le m} d_j - C_{\psi} \sqrt{t \sum\nolimits_{j \le m} d_j^2} - C_{\psi} d_1 t\Big\}\,.
\]
So (\ref{eqD.6}) is equivalent to $\mathbb P(\mathcal A) \le 2\exp(-t)$. Obviously $\mathbb P(\mathcal A \bigcap \mathcal E^c) \le \exp(-t)$. We bound $\mathbb P(\mathcal A \bigcap \mathcal E)$ as follows:
\begin{equation*}
\begin{aligned}
\mathbb P\Big(\mathcal A \bigcap \mathcal E\Big)  \le & \left. \mathbb P \Big( \by' (d_1 \bI_m - \bD^2) \by > \tr(d_1 \bI_m - \bD^2) \right. \\
& \left. + C_{\psi} \sqrt{t \sum\nolimits_{j \le m} d_j^2} + C_{\psi} d_1 t  - 2d_1\sqrt{m M_2 t} - 2 d_1 M_1^{-1} t\right. \Big) \\
\le & \left. \mathbb P \Big( \by' (d_1 \bI_m - \bD^2) \by > \tr(d_1 \bI_m - \bD^2) \right. \\
& \left. + 2\sqrt{t \sum\nolimits_{j \le m} (d_1 - d_j)^2} + 2(d_1 - d_m) t  \right. \Big) \le \exp(-t)\,,
\end{aligned}
\end{equation*}
where the first inequality is a summation of  the inequalities defined in events $\mathcal A$ and $\mathcal E$; the second inequality is due to the fact $\psi \sqrt{\sum\nolimits_{j \le m} d_j^2} \ge d_1\sqrt{m} \ge \sqrt{\sum\nolimits_{j \le m} (d_1 - d_j)^2}$ where $\psi = d_1/d_m $ and the last inequality is by (\ref{eqD.5}). Thus we have proved $\mathbb P(\mathcal A) \le 2\exp(-t)$.
\end{proof}

\bibliographystyle{ims}
\bibliography{Reference}

\end{document}